\newenvironment{breakablealgorithm}
{
		\begin{center}
			\refstepcounter{algorithm}
			\hrule height.8pt depth0pt \kern2pt
			\renewcommand{\caption}[2][\relax]{
				{\raggedright\textbf{\fname@algorithm~\thealgorithm} ##2\par}%
				\ifx\relax##1\relax 
				\addcontentsline{loa}{algorithm}{\protect\numberline{\thealgorithm}##2}%
				\else 
				\addcontentsline{loa}{algorithm}{\protect\numberline{\thealgorithm}##1}%
				\fi
				\kern2pt\hrule\kern2pt
			}
		}{
		\kern2pt\hrule\relax
	\end{center}
}
\definecolor{mygreen}{rgb}{0,.5,0}
\newcommand{\tk}{t+\frac{k}{M}}
\newcommand{\be}{\begin{equation}}
\newcommand{\ee}{\end{equation}}
\newcommand{\bee}{\begin{equation*}}
\newcommand{\eee}{\end{equation*}}
\newtheorem{proposition}{Proposition}[section]
\newtheorem{lemma}{Lemma}[section]
\newtheorem{remark}{Remark}[section]
\theoremstyle{plain}
  \theoremstyle{plain}
  \theoremstyle{plain}
  \newtheorem*{assumption*}{\protect\assumptionname}
  \theoremstyle{plain}
  \theoremstyle{remark}
  \newtheorem*{rem*}{\protect\remarkname}
  \theoremstyle{plain}
  \providecommand{\assumptionname}{Assumption}
  \providecommand{\lemmaname}{Lemma}
  \providecommand{\propositionname}{Proposition}
  \providecommand{\remarkname}{Remark}
\providecommand{\theoremname}{Theorem}
\providecommand{\corollaryname}{Corollary}
\begin{document}

\title{Reinforcement Learning for Financial Index Tracking}

\date{November 10, 2024}

\author{Xianhua Peng\thanks{HSBC Business School, Peking University, University Town, Nanshan District, Shenzhen, 518055, China. Email: xianhuapeng@pku.edu.cn.} \thanks{Sargent Institute of Quantitative Economics and Finance, HSBC Business School, Peking University, University Town, Nanshan District, Shenzhen, 518055, China.}  \and Chenyin Gong\thanks{Business School, Hong Kong University of Science and Technology, Clear Water Bay, Kowloon, Hong Kong, China. Email: chenyin.gong@connect.ust.hk.}
	\and 
	Xue Dong He\thanks{Department of System Engineering and Engineering Management, The Chinese University of Hong Kong, Shatian, Hong Kong, China, Email: xdhe@se.cuhk.edu.hk.}
		}

\maketitle

\begin{abstract}  

We propose the first discrete-time infinite-horizon dynamic formulation of the financial index tracking problem under both return-based tracking error and value-based tracking error. The formulation overcomes the limitations of existing models by incorporating the intertemporal dynamics of market information variables not limited to prices, allowing exact calculation of transaction costs, accounting for the tradeoff between overall tracking error and transaction costs, allowing effective use of data in a long time period, etc. The formulation also allows novel decision variables of cash injection or withdraw. We propose to solve the portfolio rebalancing equation using a Banach fixed point iteration, which allows to accurately calculate the transaction costs specified as nonlinear functions of trading volumes in practice. We propose an extension of deep reinforcement learning (RL) method to solve the dynamic formulation. Our RL method resolves the issue of data limitation resulting from the availability of a single sample path of financial data by a novel training scheme. A comprehensive empirical study based on a 17-year-long testing set demonstrates that the proposed method outperforms a benchmark method in terms of tracking accuracy and has the potential for earning extra profit through cash withdraw strategy. 

\emph{Keywords}: reinforcement learning, financial index tracking, index fund, tracking error, transaction costs

\emph{JEL classification}: C61, G11, L12 



\end{abstract}

\baselineskip 18pt

\section{Introduction}
An index fund aims to replicate the performance of a designated financial index. Index funds including index mutual funds and index ETF have gained popularity in recent years. At the end of 2021, index domestic equity funds held 16\% of US stock market capitalization,  for the first time surpassing the 14\% share held by active domestic equity funds \citep[][p. 30]{ICI-2022}. 
There are mainly two kinds of index tracking strategies used in practice: full replication and representative sampling. The former 
holds a portfolio with the same weights as those of the index portfolio; the latter invests in a portfolio with possibly smaller number of components and different weights from those of the index portfolio. 
The full replication method has the practical drawback of high transaction costs resulting from frequent trading, trading of illiquid assets, expensive data of index portfolio weights, and index revisions that necessitate substantial holding rebalance \citep[see, e.g.,][]{beasley2003evolutionary, benidis2017sparse}. 
Such drawback can be mitigated by representative sampling method. Many top US index equity funds ranked by asset under management (AUM) employ representative sampling strategies, such as the Vanguard Total Stock Market Index Fund (with AUM over \$1.2 trillion), the Fidelity 500 Index Fund (with AUM about \$400 billion), the iShares Core S\&P 500 ETF (with AUM about \$300 billion) and the iShares Russell 2000 ETF (with AUM about \$53 billion). 

The index fund manager needs to rebalance the tracking portfolio from time to time in order to minimize the index tracking error, as a fixed tracking portfolio cannot accurately track an index over a long time horizon due to change of the prices or composition of the index over time. Portfolio rebalancing incurs transaction costs that reduce the fund return, which affects the index tracking performance. Index tracking strategy therefore should take into account the tradeoff between index tracking error and transaction costs over a long time horizon.

There has been a large literature developing index tracking methods. To the best of our knowledge, all papers except \citet{yao2006tracking} formulate the index tracking problem in a static setting, which study how to determine the tracking portfolio weights at the beginning of a given short time period. The portfolio weights are obtained using the market prices/returns of stocks and index during a historical short time period right ahead of the given time period. 
The portfolio is then rebalanced to the obtained portfolio weights at the beginning and/or during the given time period. The static formulations of index tracking face six major challenges. $(i)$ The static formulations 
rely on the assumption that the joint distribution of the cross-sectional daily returns of stocks and index during the aforementioned two time periods are the same. Such an assumption may not hold if the historical time period is too long, so these formulations can only use the data during a short historical time period, wasting a large amount of data further earlier than the given time period. $(ii)$ The portfolio weights determined by the static formulations are fixed numbers that only depend on the data of the historical time period and hence are not responsive to the market change in the given time period. $(iii)$ The static formulations cannot directly incorporate transaction costs at more than one time point. In fact, most formulations completely ignore transaction costs; only very few formulations take into account the proportional and fixed transaction cost incurred at the beginning of the given time period under restrictive settings, but not those during the given time period; no formulations in the literature have incorporated non-proportional transaction costs specified in practice. $(iv)$ Static formulations cannot take into account the tradeoff between tracking error and transaction costs in a long time horizon that includes multiple time periods. $(v)$ As transaction costs cannot be incorporated directly, most static formulations try to indirectly incorporate transaction costs by restricting or penalizing the number of stocks in the tracking portfolio, based on the intuition that smaller number of stocks in the tracking portfolio may lead to lower transaction costs. However, such an intuition does not have solid empirical or theoretical support, and may not hold when multiple periods are considered. In fact, with the cardinality constraint, it is likely that one subset of stocks are selected into the tracking portfolio in one period and then another different subset of stocks are selected in the next period; this may result in high transaction costs. $(vi)$ The static formulation cannot utilize market information other than prices or returns, such as trading volume, market sentiment, etc. 
\citet{yao2006tracking} propose a continuous-time parametric stochastic control formulation of the index tracking problem that does not address some of the aforementioned challenges including ignoring transaction costs and only using price data.

In this paper, we propose a discrete-time infinite-horizon dynamic formulation of the index tracking problem to address the aforementioned challenges. The dynamic formulation minimizes the discounted cumulative tracking error over an infinite number of time periods; at the beginning of each time period, the fund manager determines the portfolio weights based on the current market state and then rebalances the portfolio to the determined portfolio weights
during the time period at a given frequency such as daily or only once at the beginning of the period. The dynamic formulation overcomes the challenges in the following aspects. 
 $(i)$ The formulation does not need the assumption that the joint distribution of the cross-sectional daily returns of stocks and index during the historical time period is the same as that in the given time period, so it allows the utilization of market data in a much longer historical time period. $(ii)$ The portfolio weights in our formulation are dynamic in the sense that the weights or the parameters of their conditional distribution at any rebalancing time are a function of the market state variables at the rebalancing time. The function is learned from data in the historical time period but the weights depend on the market information at the rebalancing time, so the weights are more responsive to the current market information. $(iii)$ Our formulation directly incorporates and accurately calculates the transaction costs under various specifications, particularly those specified as nonlinear functions of trading volume. $(iv)$ Our formulation considers the transaction costs at all rebalancing time during an infinite time horizon, so it exactly takes into account the tradeoff between tracking error and transaction costs over a long (infinite) time period. $(v)$ Our formulation allows to directly incorporate the exact transaction costs into the minimization of tracking error, so we do not need to impose cardinality constraints or penalty terms in our formulation. $(vi)$ The formulation allows to utilize any market information other than prices or returns by including market information into state variables. 

In our formulation, we consider two specifications of tracking error: the return tracking error based on the differences between daily returns of the tracking portfolio and those of the index, and the value tracking error based on the differences between daily values of the tracking portfolio and those of the index. The value tracking error essentially measures the differences between cumulative returns over all investment horizons that may be concerned by fund investors. 

In addition, we propose a dynamic formulation that allows index fund manager to inject or withdraw cash from the tracking portfolio for better index tracking performance under the value tracking error. In the formulation, the amount of cash injection or withdraw at each rebalancing time is a decision variable that is dependent on the current market information. This introduces novel and flexible index fund management strategies. 

The key to incorporate transaction costs in our dynamic formulations is to solve the rebalancing equation at each rebalancing time, which 
presents that the portfolio value right after rebalancing should equal the portfolio value before rebalancing subtracting the transaction cost that is specified as a nonlinear equation of trading volume in terms of both dollar value and number of shares. We propose a Banach fixed point iteration algorithm to solve the rebalancing equation accurately and efficiently, which leads to accurate calculation of the exact transaction cost and the number of shares to be traded at the rebalancing time. The algorithm can be applied in general portfolio management problems that need to rebalance portfolios. 



The dynamic formulations of the index tracking problem are difficult to solve due to high dimension of state variables and decision variables. To track a broad market index such as the S\&P 500 index, the tracking portfolio typically contains hundreds of stocks, which leads to state variables with dimension in the order of tens of thousands or higher. 
We propose an extension of deep reinforcement learning (RL) method to solve the dynamic formations of index tracking problem. The RL method has several advantages. First, the RL method allows high dimensional state variables that can include as much market information as possible, not just limited to prices or returns. Second, the state variables of the RL method include market data over a time period such as one year instead of a single time point, so the RL method allows model-free learning of the joint dynamics of daily prices/returns of stocks and index in both cross-sectional and intertemporal dimension; in contrast, the static methods only utilize the joint distribution of daily prices/returns of stocks and index in the cross-sectional dimension. The model-free learning mitigates model misspecification error in existing continuous-time stochastic control methods based on parameter models. Third, the RL method allows a unified approach to determine both portfolio weights and other decision variables such as cash injection and withdraw. Fourth, the RL method allows any specification of tracking errors. Fifth, the RL method learns the optimal policy, which is a function of or a distribution conditional on the current state variable. Therefore, the RL method learns the relation between the portfolio weights and the current market state. 
Because the relation may stay the same even over a long time period, the RL method can effectively utilize the data in a very long time period. 
In contrast, the static methods solve for only the numerical values of the portfolio weights based on the data in the historical time period. In order for the portfolio weights to be applicable in the current time period, the historical time period has to be short in the static methods.  
 

Our proposed RL method extends the existing RL methods used in computer games or robotics mainly in two aspects. First, we propose a new training scheme for the RL method to address the issue of data limitation, one of the major difficulties of an RL method for financial index tracking. In the application of RL for computer games and robotics, 
unlimited data can be generated for training by playing games or operating robots as many times as needed, but there is only one sample path of the time series of financial data in the context of index tracking. The time period of index tracking problems is typically a quarter or half a year, so basically we need a large number of quarterly or semiyearly data of training trajectories of the interaction between the RL agent and financial market. However, we only have a sample path of at most a few decades of training data. In addition, without a good parametric model of the joint dynamics of the high dimensional state variable involving a large number of assets and market information variables, training data cannot be obtained from simulation. We propose a new training scheme that generate enough quarterly or semiyearly data from daily data by randomly selecting the starting date of the first time period during training. Second, our RL method needs to estimate the value function at the terminal state, i.e., the state at the end of a finite length episode, because the index tracking problem is an infinite-horizon problem in which the value function at any state is never zero. In contrast, the training trajectory of an RL method for a computer game problem reaches a terminal state after a finite number of steps, and the value function at the terminal state is simply equal to zero.

We carry out comprehensive empirical analysis of the out-of-sample performance of our RL method on a rolling-window basis. We are the first to test the out-of-sample performance of an index tracking strategy using a 17-year-long testing set, which is much longer than those in the existing literature. We test the out-of-sample performance of our RL method for three stock indices with different weighting schemes: the price-weighted Dow Jones Industrial Average (DJIA) index, the value-weighted S\&P 500 index, and the equally-weighted S\&P 500 Equal Weight Index (EWI). The empirical results show that: $(i)$ Our RL method achieves smaller mean return tracking error than a benchmark method on the testing set for return tracking of the indices. $(ii)$ For value tracking of indices, our RL method not only achieves one magnitude smaller mean value tracking error than the benchmark method, but also produces a positive mean cash withdraw, implying that the strategy of cash injection and withdraw may earn extra profit for the fund manager.



In summary, the contributions of this paper are fivefold. $(i)$ We are the first to propose a discrete-time infinite-horizon dynamic formulation of the index tracking problem. In the formulation, the portfolio weights at each rebalancing time are contingent on the state variables at that time. The formulation explicitly takes into account the transaction costs at all rebalancing time so it exactly accounts for the tradeoff between tracking error and transaction costs in the long run. $(ii)$ We are the first to incorporate cash injection and withdraw decision variables in the formulation of index tracking problem, which allows more flexible index tracking strategies. $(iii)$ We are the first to propose a fast Banach fixed  point iteration algorithm to exactly solve the rebalancing equation and calculate the transaction cost under various specification of transaction costs, especially those used in practice. The algorithm can be used in general portfolio management problems in which rebalancing is needed. $(iv)$ We are the first to use a RL based method to solve the index tracking problem. Our RL method extends the existing RL methods by a novel training scheme overcoming the data limitation issue and taking into account the value function at the terminal state of a finite training episode. $(v)$ We are the first to carry out out-of-sample test of index tracking methods by using a 17-year-long testing set. The empirical results demonstrate that the proposed method achieves smaller tracking error than a benchmark method for three indices with different weighting schemes and may earn extra profit through cash withdraw and injection.

\subsection{Literature Review}


\subsubsection{Static Formulations}

Most existing literature formulate the index tracking problem in a static setting, i.e., a one-period setting. The optimal portfolio weights at the beginning of a given time period are determined by the return data during a historical time period right ahead of the given time period. Then the portfolio is rebalanced to the determined portfolio weights at the beginning of the time period. 
Most papers do not explicitly incorporate or calculate transaction costs in their problem formulations; instead, they implicitly take into account transaction costs by choosing a sparse portfolio with a smaller number of stocks than that of the index. The sparsity consideration is based on the intuition that trading a smaller number of liquid stocks may reduce transaction costs. The intuition does not have empirical or theoretical support, and may not hold if the total transaction costs during multiple time periods are considered. There are three ways to choose a sparse portfolio in the literature. The first is to select a given small number of stocks from a large number of stocks based on some heuristic selection criteria and stock characteristics such as beta and size \citep[see, e.g.,][]{Oh-2005}. The second is to use a constraint on the cardinality of tracking portfolio \citep[see, e.g.,][]{beasley2003evolutionary, canakgoz2009mixed, mutunge2018minimizing, kim2020index, zheng2020index, Li2022}. The third is to penalize the cardinality in the objective function \citep[see, e.g.,][]{jansen2002optimal, benidis2017sparse}. 

The sparsity constraint or penalty leads to mixed integer programming problems. \citet{gilli2002threshold}, \citet{beasley2003evolutionary}, \citet{maringer2007index}, \citet{chiam2013dynamic}, and \citet{andriosopoulos2014performance} provide heuristic solutions for index tracking problems that are formulated as mixed-integer nonlinear programming problems. These heuristic methods implement search schemes to evolve an initial population of solutions through reproduction and mutation until a satisfactory solution is reached. 
\citet{gaivoronski2005optimal} propose mixed-integer programming formulations that consider different measures of tracking error. 
\citet{canakgoz2009mixed} and \citet{mezali2013quantile} present mixed-integer linear programming (MILP) formulations for the index tracking problem from a regression-based view. 
\citet{strub2018optimal} present a MILP formulation based on an index value tracking objective function. 
\citet{benidis2017sparse} propose a majorization-minimization (MM) method to minimize the sum of tracking error and a cardinality penalty term. The MM method is based on approximating the $\ell_0$ norm function by a differentiable function. 
\citet{Li2022} propose to solve two constrained optimization problems that minimize the tracking error subject to an upper bound of the number of selected assets and minimize the number of selected assets subject to an upper bound on the tracking error, respectively.

Recently a few literature apply deep learning methods to solve the index tracking problem in a static setting. 
\citet{heaton2016deep}, \citet{ouyang2019index}, \citet{kim2020index} propose to select a given number of assets from the candidate assets using deep autoencoder. \citet{zheng2020index} propose to solve the index tracking problem with cardinality constraint using a stochastic neural network method. 

Both return tracking error and value tracking error have been proposed in the literature. For example, \citet{maringer2007index}, \citet{benidis2017sparse}, and \citet{Li2022} consider the root mean square error (MSE) of the portfolio returns and the index returns as their tracking error functions. \citet{beasley2003evolutionary} measure return tracking error in terms of general $q$-norm. 
\citet{jansen2002optimal} and \citet{mutunge2018minimizing} 
define the tracking error as the variance of the difference of the portfolio return and the index return. 
\citet{gaivoronski2005optimal} and \citet{strub2018optimal} use value tracking error that measures the differences between portfolio values and index values in their objective functions.

In the static formulations based on return tracking errors, the decision variables are the portfolio weights at the beginning of the testing time period. The weights are obtained by using daily returns during the historical time period right ahead of the testing time period, assuming that the weights are constant over the historical time period. 
However, when the out-of-sample tracking error is calculated during the testing period, the portfolio is only rebalanced once at the beginning of the testing period and then the shares of stocks are constant during the testing period. Because the portfolio weights change daily due to price change, the reported our-of-sample tracking error is not the true error corresponding to the weights determined at the beginning of the testing period. This results in inconsistency between the model training and testing. In contrast, our method can incorporate both the constant-weights and constant-shares assumption and ensure consistency between in-sample training and out-of-sample testing. 

\subsubsection{Dynamic Formulations}

Dynamic formulations differ from static formulations mainly in three aspects: $(i)$ Dynamic formulations take into account the tracking error in an infinite time horizon or multiple time periods. $(ii)$ Dynamic formulations determine the optimal control policy, i.e., the functional relation between the tracking portfolio weights and the current market state variable. $(iii)$ In dynamic formulations, the tracking portfolio weights during the testing time period depend on the functional relation determined from the historical time period and the current market state; in contrast, in static formulations, the portfolio weights during the testing time period are determined only by market information in the historical time period.



In the literature, there has been only one paper that studies index tracking in a dynamic setting. \citet{yao2006tracking} propose to formulate the financial index tracking problem as a continuous-time stochastic linear quadratic control problem. In the formulation, the portfolio weights are the feedback controls as a function of the current market information and are solved by semidefinite programming. The formulation is based on a multi-dimensional geometric Brownion motion model of the stock dynamics which is subject to model misspecification error. The formulation does not take into account transaction costs. In addition, the formulation needs to know the numbers of shares of component stocks in the index and assumes those numbers are fixed, which only hold during a time period when there is no index rebalancing. For example, the assumption may only hold during each quarter between the quarterly rebalancing of the S\&P 500 Equal Weight Index. 

Our dynamic discrete-time formulation of the index tracking problem complements \citet{yao2006tracking} in several aspects. First, our formulation is model-free and data-driven which avoid model misspecification error and parameter estimation error. Second,  we allow the exact calculation and incorporation of 
transaction costs in the formulation. Third, our method does not need the information of the composition of indices and 
makes no assumption on whether the composition is fixed or not, so it can be applied to track different types of indices
over a long (infinite) time horizon. 
Forth, our method allows tracking portfolio weights to depend on market information other than prices, such as trading volumes and market sentiment. Fifth, our method allows flexible index tracking strategies such as injection or withdrawn of cash from the tracking portfolio (see Section \ref{subsec:vb_form}).


\subsubsection{Transaction Costs}\label{subsubsec:TC}
Transaction costs are incurred when the portfolio is rebalanced. In practice, the transaction costs specified by brokers are nonlinear functions of the trading volume in terms of dollar value and the number of shares traded; 
see, e.g., the formula of transaction cost in Equation \eqref{eq:tc} of Section \ref{subsec:transaction_cost}. 
Furthermore, government regulators charge regulatory fees (see, e.g., Equation \eqref{eq:tc_with_reg_fee} in Appendix \ref{appendix:regulatory_fees}), which further complicates the specification of transaction costs. 

Transaction costs are handled in three different ways in the literature: $(i)$ ignored completely; 
$(ii)$ financed by an external account; $(iii)$ borne by the portfolio itself (self-financing). 
The majority of studies, such as \citet{jansen2002optimal}, \citet{dose2005clustering}, \citet{mutunge2018minimizing}, \citet{heaton2016deep}, \citet{ouyang2019index}, \citet{kim2020index}, \citet{kwak2021neural}, and \citet{Li2022}, ignore transaction costs in their problem formulations, which deviates from the practice and results in inaccurate results of index tracking performance. 

In some other studies such as 
\citet{Guastaroba2012} and \citet{benidis2017sparse}, transaction costs are assumed to be paid out of a separate account. Under this circumstance, transaction costs are not controlled by objective functions at all, 
so it is not clear how much transaction costs are incurred by the index tracking strategies. 


Only very few studies incorporate proportional and/or fixed transaction costs as being paid by the tracking portfolio itself, but only one of them can accurately compute transactions costs in the restrictive case of holding cash in the tracking portfolio; these studies consider only the transaction cost at the \emph{single} time point of the \emph{beginning} of time period but \emph{no} transaction costs \emph{during} the time period. In this self-financing case, the rebalancing equation holds at the rebalancing time.
\citet{beasley2003evolutionary} include proportional transaction costs in the problem formulation and propose a evolutionary heuristic to solve the problem, but the heuristic cannot guarantee the rebalancing equation to hold exactly and hence the transaction costs are not exactly calculated.
\citet{canakgoz2009mixed} include variables representing proportional transaction costs in their mixed-integer linear programming problem formulation, but the values of those variables obtained from solving the problem
are over-estimates of the actual transaction costs, so the actual rebalancing equation does not hold, and their method cannot calculate the actual transaction costs. 
\citet{strub2018optimal} include proportional and fixed transaction costs in the formulation of a mixed integer linear programming problem, but that formulation requires holding cash in the tracking portfolio. 
Their value-tracking objective function can only incorporate the tracking error at the beginning of each time period but not the daily tracking error within each time period. 


There has been no paper in the literature that can incorporate the transaction cost specified in practice as Equation \eqref{eq:tc}. The major difficulty lies in solving the rebalancing equation, which in this case is nonlinear with respect to the portfolio value and number of shares of stocks after rebalancing. So the portfolio value after rebalancing and the transaction cost need to be determined simultaneously, which is nontrivial to solve.



In this paper, we assume the transaction costs are paid by the tracking portfolio itself so that the index tracking strategy can take into account both tracking error and transaction costs. We propose a method that can exactly and accurately incorporate the transaction cost specified in practice. The method solves the rebalancing equation exactly and hence allows calculation of the exact transaction cost and the portfolio value after rebalancing. 
Furthermore, our method for calculating transaction costs is fast and converges in only several iterations (see more detailed discussion in Section \ref{subsec:transaction_cost}). The fast calculation is essential in generating sample trajectories for the reinforcement learning method, as the portfolio rebalancing may take place on a daily basis. 




\subsubsection{Reinforcement Learning for Finance}

%

Our paper contributes to an emerging literature that use RL for problems in finance 
such as optimal execution \citep[see, e.g.,][]{nevmyvaka2006reinforcement, hendricks2014reinforcement, ning2021double}, and active portfolio selection \citep[see, e.g.,][]{almahdi2017adaptive, pendharkar2018trading, yu2019model, cong2020alphaportfolio}. 
These studies aim to solve different problems from index tracking, and their RL methods do not apply to the index tracking problem. 
We are the first to use reinforcement learning for solving the index tracking problem. In particular, we are the first to propose a method to exactly solve the rebalancing equation and calculate the transaction cost. 
For a broader survey on recent advances in RL in finance, readers are referred to \citet{hambly2023recent}.

The rest of the paper is organized as follows. Section \ref{sec:problem_formulation} proposes two dynamic formulations for financial index tracking in terms of returns and values, respectively. Section \ref{sec:RL_for_IT} presents a RL method to solve these two financial index tracking problems. Section \ref{sec:experiments} reports empirical results followed by a conclusion section.

\section{A Dynamic Formulation of the Index Tracking Problem}\label{sec:problem_formulation}

In this section, we propose the formulation of the index tracking problem as a discrete-time infinite-horizon stochastic control problem. Our formulation is different from existing ones mainly in two aspects: $(i)$ We deduct the exact value of transaction costs from the portfolio value at each portfolio rebalancing time so that the optimization problem directly takes into account transaction costs. We will show that the transaction costs can be exactly calculated in Section \ref{subsec:transaction_cost}. $(ii)$ We dynamically control not only the portfolio weights but also the amount of cash inflow and outflow of the tracking portfolio, which enables better reduction of the value tracking error. 

We discuss the dynamic setting of the problem in Section \ref{subsec:setting_problem}. In Section \ref{subsec:formulation_return_based} and Section \ref{subsec:vb_form}, we consider two formulations of the index tracking problem in terms of the return based tracking error and the value based tracking error, respectively. We discuss the flexibility of the formulation in Section \ref{subsec:fis_form} and its extension for index enhancement problems in Section \ref{subsec:index_enhancement}.



\subsection{The Setting of the Problem}\label{subsec:setting_problem}

We consider a general discrete-time infinite-horizon problem starting from time 0. 
Suppose the fund manager intends to track the performance of a stock index by a tracking portfolio consisting of $N$ stocks and a cash position. The $N$ stocks may not necessarily be component stocks of the stock index. The fund manager can rebalance the tracking portfolio dynamically for better tracking performance.
Suppose each time period $[t, t+1)$, $t = 0, 1, 2, \dots$, has $M$ trading days denoted as $t + \frac{k}{M}$, $k = 0, \dots, M-1$. 
For example, if the time unit is a quarter, then each time period has $M=66$ trading days. 
For each stock $i = 1, \dots, N$, let $p_{i,t+\frac{k}{M}}$ denote the price of one share of stock $i$ at time $t + \frac{k}{M}$, and $I_{t+\frac{k}{M}}$ denote the value (price) of the index at time $t + \frac{k}{M}$. 
Let $x_{i,(t+\frac{k}{M})-}$ and $x_{i,t+\frac{k}{M}}$ (resp., $V_{i,(t+\frac{k}{M})-}$ and $V_{i,t+\frac{k}{M}}$) denote the number of shares (resp., dollar value) of stock $i$  in the tracking portfolio right before and right after the rebalancing time $t + \frac{k}{M}$, respectively. Hence, $V_{i,(t+\frac{k}{M})-} = p_{i,(t+\frac{k}{M})-} x_{i,(t+\frac{k}{M})-}$ and $V_{i,t+\frac{k}{M}} = p_{i,t+\frac{k}{M}} x_{i,t+\frac{k}{M}}$ for $i = 1, \ldots, N$.
Let $H_{(t+\frac{k}{M})-}$ and $H_{t+\frac{k}{M}}$ denote the value of cash held in the tracking portfolio right before and after the rebalancing time $t+\frac{k}{M}$. Let $V_{(t+\frac{k}{M})-}$ and 
$V_{t+\frac{k}{M}}$ denote the value of tracking portfolio right before and after time $t+\frac{k}{M}$.  
Hence, 
$V_{(t+\frac{k}{M})-} = \sum_{i=1}^{N} p_{i,t+\frac{k}{M}} x_{i,(t+\frac{k}{M})-} + H_{(t+\frac{k}{M})-}$ and 
$V_{t+\frac{k}{M}} = \sum_{i=1}^{N} p_{i,t+\frac{k}{M}} x_{i,t+\frac{k}{M}}+H_{t+\frac{k}{M}}$. In general, $V_{(t+\frac{k}{M})-} \neq V_{(t+\frac{k}{M})}$ whenever rebalancing is carried out because of the transaction cost. 
The tracking portfolio weights right before and after the rebalancing at time $t + \frac{k}{M}$ are denoted as $w_{(t+\frac{k}{M})-} = \left(w_{1, (t+\frac{k}{M})-}, \dots, w_{N, (t+\frac{k}{M})-}\right)$ and $w_{t+\frac{k}{M}} = \left(w_{1, t+\frac{k}{M}}, \dots, w_{N, t+\frac{k}{M}}\right)$, respectively, where $w_{i, (t+\frac{k}{M})-}$ and $w_{i,t+\frac{k}{M}}$ are defined as
$w_{i, (t+\frac{k}{M})-} = \frac{V_{i,(t+\frac{k}{M})-}}{V_{(t+\frac{k}{M})-}}$ and $w_{i,t+\frac{k}{M}} = \frac{V_{i,t+\frac{k}{M}}}{V_{t+\frac{k}{M}}}$.

Finally, $r_{t+\frac{k+1}{M}}^{\text{TP}} = \frac{V_{(t+\frac{k+1}{M})-}}{V_{(t+\frac{k}{M})-}} - 1$ and $r_{t+\frac{k+1}{M}}^{\text{I}} = \frac{I_{t+\frac{k+1}{M}}}{I_{t+\frac{k}{M}}} - 1$ are the simple return of the tracking portfolio and that of the stock index during the single rebalancing period from $t+\frac{k}{M}$ to $t+\frac{k+1}{M}$, respectively. Note that $r_{t+\frac{k+1}{M}}^{\text{TP}}$ is calculated in terms of the value of tracking portfolio right before the rebalancing times, so it is affected by the transaction cost of rebalancing at time $t+\frac{k}{M}$ but not $t+\frac{k+1}{M}$. Similar definition of this before-rebalancing return can be found, for example, in \citet{chiam2013dynamic} and \citet{gaivoronski2005optimal}.


Let the $\sigma$-field $\mathcal{F}_t$ represent the set of market information available at time $t$, i.e., 
$\mathcal{F}_t=\sigma\{I_s, p_{1, s}, p_{2, s}, \ldots, p_{N, s}, X_s; \forall s\leq t\}$, where $X_s$ represents market information other than prices at time $s$. 
The fund manager's trading strategy is adapted to the 
filtration 
$(\mathcal{F}_t)_{t \ge 0}$.  
We assume that at each time $t=0, 1, 2, \ldots$, the fund manager selects portfolio weights $w_{t}\in \mathcal{F}_t$, and then rebalances the portfolio at each rebalancing time $t+\frac{k}{M}$, $k=0, 1, \ldots, M-1$, such that $w_{t+\frac{k}{M}}=w_t$ for all $k=0, 1, \ldots, M-1$.  In other words, we assume that the portfolio weights right after each rebalancing time are fixed to be $w_t$ during the whole period $[t, t+1)$.

For the sole purpose of better comparison with existing methods in the literature, we adopt the assumption that portfolio weights are determined at time $t$ and the portfolio is rebalanced to keep the same weights at each rebalancing time during $[t, t+1)$. This assumption is adopted by many existing index tracking methods which mainly use portfolio weights as controls \citep[see, e.g.,][]{jansen2002optimal, mutunge2018minimizing, benidis2017sparse}. The assumption is necessary for the formulation of the index tracking problem in these literature; however, it is not necessary in our method. Our method can accommodate any other assumptions such as a fixed number of shares for each stock is determined at time $t$ and then no rebalancing is performed during $[t, t+1)$ for each $t$ (see, Section \ref{subsec:fis_form}), which is typically adopted by models that make decisions on portfolio shares rather than weights \citep[see, e.g.,][]{beasley2003evolutionary, canakgoz2009mixed, strub2018optimal}. Another example is that \citet{yao2006tracking} track financial indices to minimize the value-tracking error via directly adjusting the wealth invested in each stock from time to time. In this case, they neither fix weights nor shares after rebalancing. Our method, albeit can similarly set stock values as controls, still makes decisions upon weights when tracking in values, as we shall discuss in Section \ref{subsec:vb_form}.

\subsection{Formulation Based on Return Tracking Error}\label{subsec:formulation_return_based}
In this formulation, we measure index tracking error as the difference between portfolio returns and index returns. Generally, the return-based tracking error over period $(t_1, t_2]$, with $t_1, t_2 \in \left\{t + \frac{k}{M}: t = 0, 1, 2, \dots, \ k = 0, \dots, M - 1 \right\}$ and $t_1 < t_2$, is defined as
\begin{equation}\label{eq:te}
	\textbf{R-TE}^{q}_{(t_1, t_2]} = \left[ \frac{1}{M (t_2 - t_1)} \sum_{k=1}^{M (t_2 - t_1)}  \left| r^{\text{TP}}_{t_1+\frac{k}{M}} - r^{\text{I}}_{t_1+\frac{k}{M}} \right|^q \right]^{\frac{1}{q}},
\end{equation}
where the parameter $q > 0$ is typically chosen to be 1 or 2. 

At time 0, the fund manager holds an initial cash position $H_{0-}>0$ and shares of stocks $x_{0-}$. Then, 
at the beginning of each time period $t=0, 1, \ldots$ (i.e., at each time $t=0,1,\ldots$), the fund manager chooses the after-rebalancing portfolio weights $w_{t} \in \mathcal{F}_t$ so as to minimize the cumulative discounted tracking error. As cash generates zero return, we assume that 
\begin{equation}\label{equ:weights_sum_1}
	\sum_{i=1}^N w_{i,t} = 1,\ \text{for any}\ t=0,1,\ldots, 
\end{equation}
namely, after each rebalancing, the tracking portfolio only holds stocks. This implies that 
\begin{equation}
	H_{\tk}=0,\ \text{and}\ H_{(t+\frac{k+1}{M})-}=0,\ \text{for any}\ t=0,1,\ldots,k=0,1,\ldots,M-1.
\end{equation}
More precisely, the fund manager's problem is formulated as
{\allowdisplaybreaks
	\begin{align}\label{eq:cumulative_discounted_tracking_error}
		\mathop{\min}_{\left\{ w_{t} \in \mathcal{F}_t: \ t = 0, 1, \dots \right\}}\ & E \left[ \sum_{t = 0}^{\infty} \gamma^{t} \textbf{R-TE}^{q}_{(t, t+1]}\right],\\
		\text{s.t.}\quad\quad\ \   & w_{i,t} \geq 0, i=1,\ldots,N, \sum_{i=1}^N w_{i,t} = 1, t=0,1,\ldots,\notag\\
		& V_{(t+\frac{k}{M})-} - c_{t+\frac{k}{M}} = V_{t+\frac{k}{M}}, t=0,1,\ldots, k=0, \ldots, M-1,\label{eq:rebalance_eq}\\
		& x_{i,t+\frac{k}{M}}p_{i,t+\frac{k}{M}} = w_{i,t+\frac{k}{M}}V_{t+\frac{k}{M}}, i=1,\ldots,N, t=0,1,\ldots, k=0, \ldots, M-1,\label{eq:x_equal_wV}\\		
		& w_{i,t+\frac{k}{M}} = w_{i,t}, i=1,\ldots,N, t=0,1,\ldots, k=0, \ldots, M-1, \label{eq:weight_const_constraint}
\end{align}}%
where $c_{t+\frac{k}{M}}$ is the transaction cost at time $t+\frac{k}{M}$ due to rebalancing the portfolio from number of shares $x_{(\tk)-}$ to $x_{\tk }$; $\gamma \in (0, 1)$ is a discount factor. In general, the transaction cost $c_{t+\frac{k}{M}}$ is a nonlinear function of $x_{i,(t+\frac{k}{M})-}, V_{i,(t+\frac{k}{M})-}, x_{i,t+\frac{k}{M}}$ and $V_{i,t+\frac{k}{M}}$, $i=1,\ldots, N$, which is specified by the broker. Equation \eqref{eq:rebalance_eq} is the rebalance equation (see Lemma \ref{lemma:portfolio_value_equaility} in the Appendix \ref{appendix:proof_prop_achieve_target_weight} for a formal proof). Equation \eqref{eq:x_equal_wV} holds as both sides of the equation are equal to $V_{i,\tk}$, the dollar value of the $i$th stock in the portfolio right after the rebalancing. 
Equation \eqref{eq:weight_const_constraint} is the constraint that portfolio weights right after each rebalancing time $t+\frac{k}{M}$ should be equal to $w_t$, which is decided at time $t$. Note that Equation \eqref{eq:rebalance_eq} holds at time 0, so the transaction cost at time 0 is taken into account in the problem. 

The timing of events is as follows. Right before each rebalancing time $t+\frac{k}{M}$, the fund manager holds $x_{i,(t+\frac{k}{M})-}$ shares of stock $i$ in the tracking portfolio, and he or she observes the stock prices $\left(p_{1, t+\frac{k}{M}}, \dots, p_{N, t+\frac{k}{M}} \right)$ at time $t+\frac{k}{M}$, the portfolio value $V_{(t+\frac{k}{M})-}$ before rebalancing, and the portfolio weights $w_{(t+\frac{k}{M})-}$. He or she needs to find the number of shares $x_{i,t+\frac{k}{M}}$ of stock $i$ to be held right after the rebalancing in order to satisfy the rebalancing constraint \eqref{eq:weight_const_constraint}, i.e., portfolio weights right after the rebalancing $w_{t+\frac{k}{M}}$ are equal to $w_t$ that are decided at time $t$. The number of shares 
$x_{i,t+\frac{k}{M}}$, $i=1,\ldots, N$ and the portfolio value right after rebalancing $V_{t+\frac{k}{M}}$ satisfy the nonlinear rebalancing equations \eqref{eq:rebalance_eq}-\eqref{eq:x_equal_wV}, 
where $c_{t+\frac{k}{M}}$ is a nonlinear function of the unknown $x_{i,t+\frac{k}{M}}$, $i=1, \ldots, N$ and $V_{t+\frac{k}{M}}$. We will show in Section \ref{subsec:transaction_cost}
that for any given $w_{i,t+\frac{k}{M}},i=1,\ldots,N$,  
the system of equations has a unique solution for the unknown variables $x_{i,t+\frac{k}{M}}$, $i=1,\ldots, N$ and $V_{t+\frac{k}{M}}$, and can be solved accurately and efficiently by a simple fixed point iteration algorithm based on the Banach Fixed Point Theorem. After the unknown variables $x_{i,t+\frac{k}{M}}$, $i=1,\ldots, N$ and $V_{t+\frac{k}{M}}$ are solved, the exact transaction cost $c_{t+\frac{k}{M}}$ as a function of these variables will be obtained.

\subsection{Formulation Based on Value Tracking Error}\label{subsec:vb_form}
In this formulation, we measure tracking error as the distance between the trajectories of dollar values of the tracking portfolio and index. Unlike the above return-based tracking model that targets for investors who care about daily return, the value-based model in this section can be more attractive to investors with long investment horizon and more focusing on cumulative return. Formally, the value-based tracking error over period $(t_1, t_2]$, with $t_1, t_2 \in \left\{t + \frac{k}{M}: t = 0, 1, 2, \dots, \ k = 0, \dots, M - 1 \right\}$ and $t_1 < t_2$, is
\begin{equation}\label{eq:te_value_based}
	\textbf{V-TE}^{q}_{(t_1, t_2]} = \left[ \frac{1}{M (t_2 - t_1)} \sum_{k=1}^{M (t_2 - t_1)}  \left| \frac{V_{(t_1+\frac{k}{M})-}}{N_{0}} - I_{t_1+\frac{k}{M}} \right|^q \right]^{\frac{1}{q}},
\end{equation}
where $N_{0} = \frac{V_{0-}}{I_{0}}$ represents total number of shares of the index fund (i.e., tracking portfolio) that is sold to investors at time 0, with the value of each share expected to be close to the index value at all time.  
The quantity $\frac{V_{(t_1+\frac{k}{M})-}}{N_{0}}$ is the value of one share of the index fund before rebalancing at time $t_1+\frac{k}{M}$, which should be close to the index value $I_{t_1+\frac{k}{M}}$.

In this formulation, the fund manager may choose to 
withdraw or inject certain amount of cash to the tracking portfolio at each portfolio rebalancing time $t+\frac{k}{M}$. 
The amount of cash is defined to be a proportion of $I_{t+\frac{k}{M}} N_0 - V_{(t+\frac{k}{M})-}$, the amount of deviation of the value of tracking portfolio to the index value multiplied by $N_0$. More precisely, at each time $t$ (i.e., the beginning of the time period $[t, t+1)$), the fund manager chooses portfolio weights $w_t$ and a decision variable $f_t$, and then at each rebalancing time $t + \frac{k}{M}, k=0, 1, \ldots, M-1$, 
the fund manager injects cash 
\begin{equation}\label{eq:cash_fn}
	h_{t+\frac{k}{M}} = \mathop{\max} \left\{\left(I_{t+\frac{k}{M}} N_0 - V_{(t+\frac{k}{M})-}\right) f_t, \ -\xi V_{(t+\frac{k}{M})-} \right\}
\end{equation}
to the tracking portfolio, and rebalances the portfolio so that portfolio weights after the rebalancing become $w_t$. Here $\xi \in (0, 1)$ is a constant. 

By the definition in \eqref{eq:cash_fn}, at the rebalancing time $t+\frac{k}{M}=0$, $h_0=0$. At each other rebalancing time $t+\frac{k}{M}>0$, if $I_{t+\frac{k}{M}} N_0 \ge V_{(t+\frac{k}{M})-}$, i.e., 
the fund value falls short of the index value multiplied by $N_0$, 
the fund manager injects cash $h_{t+\frac{k}{M}}$ to the tracking portfolio. In this case, the fund manager turns the portfolio with value $V_{(t+\frac{k}{M})-}$ and the additional cash $h_{t+\frac{k}{M}}$ into a new portfolio with value $V_{t+\frac{k}{M}}$ through stock transactions. If $I_{t+\frac{k}{M}} N_0 < V_{(t+\frac{k}{M})-}$, i.e., 
the fund value exceeds the index value multiplied by $N_0$, the fund manager injects a negative amount of cash $h_{t+\frac{k}{M}}$ into the tracking portfolio, i.e., he or she withdraws cash 
$-h_{t+\frac{k}{M}}$, which has 
an upper bound $\xi V_{(t+\frac{k}{M})-}$, from the tracking portfolio. In this case, the fund manager turns the portfolio with value $V_{(t+\frac{k}{M})-}$ into a new portfolio with value $V_{t+\frac{k}{M}}$ and cash $-h_{t+\frac{k}{M}}$ by stock transactions. 


The profit of the fund manager over the period $[t, t+1)$ is the management fee collected from the investors subtracting the total cash injected into the tracking portfolio over the period, $\sum_{k=0}^{M-1} h_{t+\frac{k}{M}}$,\footnote{The interest rate adjusted total cash injection is $\sum_{k=0}^{M-1} \left[\prod_{j=k}^{M-1}(1+r_{t+\frac{j}{M}})\right] h_{t+\frac{k}{M}} $, where $r_{t+\frac{j}{M}}$ is the simple interest rate from $t+\frac{j}{M}$ to $t+\frac{j+1}{M}$.\label{fn:interest_rate_adj_cash}} and other costs. 

%

Starting from time 0, the fund manager selects portfolio weights and cash flow decision variables to solve the following problem
{\allowdisplaybreaks
	\begin{align}\label{eq:cumulative_discounted_tracking_error_vb}
		\mathop{\min}_{\left\{w_{t}, f_t \in \mathcal{F}_t: \ t = 0, 1, \dots \right\}}\ & E \left[\sum_{t = 0}^{\infty} \gamma^{t} \textbf{V-TE}^{q}_{(t, t+1]}\right]\\
		\text{s.t.}\quad\quad\ \ \    & w_{i,t} \geq 0, i=1,\ldots,N, \sum_{i=1}^N w_{i,t} = 1, t=0,1,\ldots,\notag\\	
		& 0\leq  f_t \leq b_f,\label{eq:constraint_proportion}	\\
		& 	h_{t+\frac{k}{M}} = \mathop{\max} \left\{\left(I_{t+\frac{k}{M}} N_0 - V_{(t+\frac{k}{M})-}\right) f_t, \ -\xi V_{(t+\frac{k}{M})-} \right\}, \label{eq:cash_flow_formula}\\
		&\quad\quad\quad\quad  t=0,1,\ldots, k=0, \ldots, M-1,\notag\\
		& V_{(t+\frac{k}{M})-} - c_{t+\frac{k}{M}} + h_{t+\frac{k}{M}} = V_{t+\frac{k}{M}}, t=0,1,\ldots, k=0, \ldots, M-1, \label{eq:rebalance_eq_with_cash}\\
		& x_{i,t+\frac{k}{M}}p_{i,t+\frac{k}{M}} = w_{i,t+\frac{k}{M}}V_{t+\frac{k}{M}}, i=1,\ldots,N, t=0,1,\ldots, k=0, \ldots, M-1,\label{eq:x_equal_wV_cash}\\		
		& w_{i,t+\frac{k}{M}} = w_{i,t}, i=1,\ldots,N, t=0,1,\ldots, k=0, \ldots, M-1. \label{eq:weight_const_constraint_with_cash}	
\end{align}}

In this formulation, $b_f$ in Equation \eqref{eq:constraint_proportion} is a nonnegative constant representing an upper bound of the decision variable $f_t$. The rebalancing equation \eqref{eq:rebalance_eq_with_cash} has an additional term $h_{t+\frac{k}{M}}$ compared with \eqref{eq:rebalance_eq}; see Lemma \ref{lemma:portfolio_value_equaility} in the Appendix \ref{appendix:proof_prop_achieve_target_weight} for a formal proof of the equation. In particular, the equation holds at time 0, so the transaction cost at time 0 is taken into account in the problem. 
Equation \eqref{eq:weight_const_constraint_with_cash} imposes the constraint that the portfolio weights right after each rebalancing time $w_{i,\tk}$ should be equal to $w_{i,t}$. 
Similar to problem \eqref{eq:cumulative_discounted_tracking_error}, at each rebalancing time $t+\frac{k}{M}$, the fund manager needs to find the number of shares $x_{i,t+\frac{k}{M}}$ of stock $i$ to be held right after the rebalancing and the portfolio value $V_{t+\frac{k}{M}}$ right after rebalancing by solving the rebalancing equations \eqref{eq:rebalance_eq_with_cash}-\eqref{eq:x_equal_wV_cash}.
We will show in Section \ref{subsec:transaction_cost} that for any given 
$w_{i,t+\frac{k}{M}},i=1,\ldots,N$, 
the rebalancing equations \eqref{eq:rebalance_eq_with_cash}-\eqref{eq:x_equal_wV_cash} has a unique solution for the unknown variables $x_{i,t+\frac{k}{M}}$, $i=1,\ldots, N$ and $V_{t+\frac{k}{M}}$, and can be solved efficiently by a simple fixed point iteration algorithm.

The allowance of cash flow at the rebalancing times generalizes the case of no cash flow. In fact, when the constant $b_f$ in Equation \ref{eq:constraint_proportion} is defined to be 0, the cash flow $h_{t+\frac{k}{M}}$ is always 0, reducing to the case of no cash flow is allowed.

\subsection{Formulation for Other Rebalancing Frequency}\label{subsec:fis_form}

Although the problem formulation in Section \ref{subsec:formulation_return_based} and Section \ref{subsec:vb_form} assumes that the portfolio is rebalanced every trading day to keep the portfolio weights the same as the weights determined at the beginning of each time period, the formulation is flexible to incorporate general  rebalancing frequency. More precisely, suppose we would like to formulate that the portfolio is rebalanced every $n_b$ trading days during each time period $[t, t+1)$. This can be achieved by requiring Equation \eqref{eq:rebalance_eq}-\eqref{eq:weight_const_constraint} and \eqref{eq:cash_flow_formula}-\eqref{eq:weight_const_constraint_with_cash} to hold only for $k=0,n_b,2n_b, \ldots, M-n_b$.
In particular, by specifying $n_b=M$, we obtain the formulation that during each time period $[t, t+1)$, the portfolio is rebalanced only once at time $t$.


\subsection{Extension for Index Enhancement Strategies}\label{subsec:index_enhancement}

We can extend the formulations of index tracking problems in \eqref{eq:cumulative_discounted_tracking_error} and \eqref{eq:cumulative_discounted_tracking_error_vb} for index enhancement problems by replacing $\textbf{V-TE}^{q}_{(t_1, t_2]}$ or $\textbf{R-TE}^{q}_{(t_1, t_2]}$ with other measures that only penalize under-performance of the tracking portfolio with respect to the index. For example, we can replace the absolute value function $|\cdot|$ in \eqref{eq:te} and \eqref{eq:te_value_based} by the negative part function $x^-=\max(0, -x)$, $x\in\mathbb{R}$. The extended formulation for index enhancement can also be solved by the reinforcement learning method proposed in Section \ref{sec:RL_for_IT}.

\section{A Reinforcement Learning Approach for Index Tracking}\label{sec:RL_for_IT}

In this section, we propose a reinforcement learning method for solving the two index tracking problems in \eqref{eq:cumulative_discounted_tracking_error} and \eqref{eq:cumulative_discounted_tracking_error_vb}. First, we propose a Banach fixed point iteration method for solving the portfolio rebalancing equations \eqref{eq:rebalance_eq}-\eqref{eq:x_equal_wV} and \eqref{eq:rebalance_eq_with_cash}-\eqref{eq:x_equal_wV_cash} respectively for the return-based problem and the value-based problem. This method can also be applied in the portfolio rebalancing involving transaction costs for other portfolio selection problems. Then, we propose the reinforcement learning method for solving the index tracking problems. 

\subsection{A Fixed Point Iteration Algorithm for Solving the Rebalancing Equation}\label{subsec:transaction_cost}

In this subsection, we propose a Banach fixed point algorithm for 
solving the portfolio rebalancing equations \eqref{eq:rebalance_eq}-\eqref{eq:x_equal_wV} and \eqref{eq:rebalance_eq_with_cash}-\eqref{eq:x_equal_wV_cash} at each rebalancing time $t+\frac{k}{M}$. Recall that at the beginning of each time period $[t, t+1)$ (i.e., at time $t$), we determine the portfolio weights $w_t$ and $f_t$ (for the value-based tracking problem), then at each trading day $t+\frac{k}{M}$, $k=0,1,\ldots,M-1$, we
inject cash $h_{t+\frac{k}{M}}$ defined by \eqref{eq:cash_fn} into the tracking portfolio, and
rebalance the portfolio so that the portfolio weights after rebalancing become $w_t$, i.e., $w_{t+\frac{k}{M}}=w_t$. 

Right before rebalancing at time $t+\frac{k}{M}$, we observe the stock prices $\left(p_{1, t+\frac{k}{M}}, \dots, p_{N, t+\frac{k}{M}} \right)$, the stock shares $\left( x_{1, (t+\frac{k}{M})-}, \dots, x_{N, (t+\frac{k}{M})-} \right)$, the rebalancing target portfolio weights (i.e., the portfolio weights right after rebalancing) $w_{t+\frac{k}{M}}=w_t$, and the cash injection $h_{t+\frac{k}{M}}$. Hence, in the portfolio rebalancing equations  \eqref{eq:rebalance_eq}-\eqref{eq:x_equal_wV} and \eqref{eq:rebalance_eq_with_cash}-\eqref{eq:x_equal_wV_cash}, the unknown variables are $V_{t+\frac{k}{M}}$ and $x_{i,t+\frac{k}{M}}$, $i=1,\ldots, N$. 

In the rebalancing equations, the transaction cost $c_{t+\frac{k}{M}}$ is a given function of observable variables and the unknown variables, which is specified by stock brokerage firms. Following  \citet[][pp.~17-18]{benidis2018optimization}, we specify the transaction cost incurred at time $t+\frac{k}{M}$, $t = 0, 1, 2, \dots, k = 0, \dots, M-1$, as
\begin{align}\label{eq:tc}
	&c_{t+\frac{k}{M}} = \sum_{i=1}^{N} \mathop{\min} \left \{ \mathop{\max} \left \{ \xi_{1i} \left|x_{i,(t+\frac{k}{M})-} - x_{i,t+\frac{k}{M}}\right|, \; \xi_{3i} \right \},\; \xi_{2i} \left|V_{i,(t+\frac{k}{M})-} - V_{i, t+\frac{k}{M}} \right| \right \},
\end{align}
where $\xi_{1i}$ is transaction fee per share for stock $i$; $\xi_{2i}>0$ is a proportion of the trade value; $\xi_{3i}>0$ is the minimum transaction fee of the trade for stock $i$. In words, the transaction cost is $\xi_{1i}$ per share, with a minimum of $\xi_{3i}$ and a maximum of $\xi_{2i}$ of the trade value. 
The transaction cost in Equation \eqref{eq:tc} is adopted by Interactive Brokers LLC (IB), one of the largest brokerage firm in the US that serves both retail and institutional clients. IB specifies $\xi_{1i} = 0.005$ USD, $\xi_{2i} = 0.5\%$, and $\xi_{3i} = 1.0$ USD for investors of the IBKR Pro-Fixed category.\footnote{ \url{https://www.interactivebrokers.com/en/index.php?f=49637}.} These constants may be specified to be smaller for investors with higher trading volume.

The transaction cost in the US market also includes SEC transaction fee and FINRA trading activity fee, which charge respectively based on the value and shares of stocks that are sold. 
Our method of computing the exact transaction cost and portfolio value after rebalancing will still work if these regulatory  fees are added to the transaction cost in Equation \eqref{eq:tc}; see the transaction cost specified in Equation \eqref{eq:tc_with_reg_fee} and Proposition \ref{prop:regulatory_fees} in Appendix \ref{appendix:regulatory_fees} for details. 
For simplicity of exposition, Equation \eqref{eq:tc} will be our focus in the remainder of the paper.

The transaction fee specified in Equation \eqref{eq:tc} (and in Equation \eqref{eq:tc_with_reg_fee} in the Appendix) is more general and realistic than the simple form 
of transaction cost specified as a fixed cost plus a proportion of the transaction value, which is adopted in most existing studies \citep[see, e.g.,][]{gaivoronski2005optimal, canakgoz2009mixed, strub2018optimal}. Our method also works for this simple form of transaction cost; see Proposition \ref{prop:fixed_proportional_cost} in Appendix \ref{appendix:fixed_proportional_cost}.

By using Equation \eqref{eq:x_equal_wV} to eliminate variables $x_{i, t+\frac{k}{M}}$ in Equation \eqref{eq:tc}, we obtain
\begin{align}\label{eq:tc_as_func_V}
	c_{t+\frac{k}{M}} & = \sum_{i=1}^{N} \mathop{\min} \left \{ \mathop{\max} \left \{ \frac{\xi_{1i}}{p_{i,t+\frac{k}{M}}} \left|V_{i,(t+\frac{k}{M})-} - w_{i,t+\frac{k}{M}} V_{t+\frac{k}{M}} \right|, \; \xi_{3i} \right \}, \xi_{2i} \left|V_{i,(t+\frac{k}{M})-} - w_{i,t+\frac{k}{M}} V_{t+\frac{k}{M}} \right| \right \}.
\end{align}

At the portfolio rebalancing time $t+\frac{k}{M}$,  $V_{i,(t+\frac{k}{M})-}$, $p_{i,t+\frac{k}{M}}$, and $w_{i,t+\frac{k}{M}}$ are all given; hence, $c_{t+\frac{k}{M}}$ is a function of the unknown variable $V_{t+\frac{k}{M}}$, denoted as $c_{t+\frac{k}{M}}(V_{t+\frac{k}{M}})$. Then, the rebalancing equation \eqref{eq:rebalance_eq} becomes 
\begin{equation}\label{equ:V_fixed_point_prob}
	V_{(t+\frac{k}{M})-} - c_{t+\frac{k}{M}}(V_{t+\frac{k}{M}}) = V_{t+\frac{k}{M}},
\end{equation}
and for the value-tracking problem, the rebalancing equation \eqref{eq:rebalance_eq_with_cash} becomes
\begin{equation}\label{equ:V_fixed_point_prob_cash_injection}
	V_{(t+\frac{k}{M})-} - c_{t+\frac{k}{M}}(V_{t+\frac{k}{M}}) + h_{t+\frac{k}{M}} = V_{t+\frac{k}{M}}.
\end{equation}


%

The following proposition shows that under mild conditions that hold in practice, either of the rebalancing equations \eqref{equ:V_fixed_point_prob} and \eqref{equ:V_fixed_point_prob_cash_injection} has a unique solution $V_{t+\frac{k}{M}}$, which can be solved by a Banach fixed point iteration algorithm. Then, the exact transaction cost $c_{t+\frac{k}{M}}(V_{\tk})$ can be calculated, and the stock holding after rebalancing $x_{i,t+\frac{k}{M}}$ is uniquely determined by Equation \eqref{eq:x_equal_wV} and \eqref{eq:x_equal_wV_cash} respectively for the return-based and value-based tracking problems. Therefore, the number of shares of each stock to be traded at the rebalancing time is also uniquely determined. 

\begin{proposition}\label{prop:achieve_target_weight}
	($i$) If $V_{(t+\frac{k}{M})-} > 0$, $\sum_{i=1}^{N} \xi_{2i} | w_{i, (t+\frac{k}{M})-} | < 1$, and $\sum_{i=1}^{N} (\frac{\xi_{1i}}{p_{i, t+\frac{k}{M}}} + \xi_{2i})| w_{i, t+\frac{k}{M}} | < 1$, then 
	the function on the left-hand side of Equation \eqref{equ:V_fixed_point_prob} is a contraction on $\mathbb{R}$ with 
	the contraction coefficient $\sum_{i=1}^{N} ( \frac{\xi_{1i}}{p_{i,\tk}} + \xi_{2i} ) |w_{i,\tk}|$, and 
	the rebalancing equation \eqref{equ:V_fixed_point_prob} has a unique solution $V^*_{t+\frac{k}{M}}$  on $\mathbb{R}$. In addition, the solution satisfies $0<V^*_{t+\frac{k}{M}} \leq V_{(t+\frac{k}{M})-}$.
	
	($ii$) If all the conditions in (i) hold, and $0 < \xi < 1 - \sum_{i=1}^{N} \xi_{2i} | w_{i,(t+\frac{k}{M})-} |$, and  
	$h_{t+\frac{k}{M}} \ge \ -\xi V_{(t+\frac{k}{M})-}$, then the function on the left-hand side of Equation \eqref{equ:V_fixed_point_prob_cash_injection} is a contraction on $\mathbb{R}$ with 
	the contraction coefficient $\sum_{i=1}^{N} ( \frac{\xi_{1i}}{p_{i,\tk}} + \xi_{2i} ) |w_{i,\tk}|$, and the rebalancing equation \eqref{equ:V_fixed_point_prob_cash_injection} has a unique solution $V^*_{t+\frac{k}{M}}$ on $\mathbb{R}$. In addition, the solution satisfies $0<V^*_{t+\frac{k}{M}}\leq V_{(t+\frac{k}{M})-} + h_{t+\frac{k}{M}}$. 
\end{proposition}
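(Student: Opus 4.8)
The plan is to recast each rebalancing equation as a fixed point problem $V = g(V)$ on the complete metric space $\R$ and invoke the Banach Fixed Point Theorem. For part ($i$), I would define $g(V) = V_{(\tk)-} - c_{\tk}(V)$, where $c_{\tk}(V)$ is the transaction cost written as a function of $V$ as in Equation \eqref{eq:tc_as_func_V}; then Equation \eqref{equ:V_fixed_point_prob} reads exactly $V = g(V)$. Since $\R$ is complete, it suffices to show $g$ is a contraction, after which existence and uniqueness of $V^*_{\tk}$ follow immediately, and the bounds $0 < V^*_{\tk} \le V_{(\tk)-}$ can be established separately.

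The core step is bounding the Lipschitz constant of $g$, equivalently of $c_{\tk}(\cdot)$, since the two differ only by the constant $V_{(\tk)-}$. I would build this up from the elementary facts that $x \mapsto |x|$, $x \mapsto \max\{x, \beta\}$, and $x \mapsto \min\{x, \beta\}$ are all $1$-Lipschitz, and that $\min\{f, h\}$ (resp.\ $\max\{f, h\}$) has Lipschitz constant at most $\max\{\mathrm{Lip}(f), \mathrm{Lip}(h)\}$. Writing $a_i := V_{i,(\tk)-}$ and $\phi_i(V) := |a_i - w_{i,\tk} V|$, the map $\phi_i$ is $|w_{i,\tk}|$-Lipschitz in $V$; hence $\max\{\frac{\xi_{1i}}{p_{i,\tk}}\phi_i(V), \xi_{3i}\}$ is $\frac{\xi_{1i}}{p_{i,\tk}}|w_{i,\tk}|$-Lipschitz, and the $i$-th summand of \eqref{eq:tc_as_func_V}, being the minimum of this with $\xi_{2i}\phi_i(V)$, has Lipschitz constant at most $(\frac{\xi_{1i}}{p_{i,\tk}} + \xi_{2i})|w_{i,\tk}|$. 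Summing over $i$ and applying the triangle inequality yields that $c_{\tk}(\cdot)$, and therefore $g$, is Lipschitz with constant $L := \sum_{i=1}^N (\frac{\xi_{1i}}{p_{i,\tk}} + \xi_{2i})|w_{i,\tk}|$, which is $< 1$ by hypothesis. This is the contraction, and Banach delivers the unique $V^*_{\tk} \in \R$.

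For the two-sided bound on $V^*_{\tk}$, I note that every summand of \eqref{eq:tc_as_func_V} is nonnegative (the inner $\max$ is at least $\xi_{3i} > 0$ and the outer $\min$ is taken against the nonnegative quantity $\xi_{2i}\phi_i$), so $c_{\tk}(V) \ge 0$ and therefore $V^*_{\tk} = V_{(\tk)-} - c_{\tk}(V^*_{\tk}) \le V_{(\tk)-}$. The positivity $V^*_{\tk} > 0$ is the more delicate half, and is where the remaining hypothesis enters. I would first evaluate at $V = 0$: bounding the $i$-th summand by its second argument gives $c_{\tk}(0) \le \sum_i \xi_{2i} |a_i| = V_{(\tk)-}\sum_i \xi_{2i} |w_{i,(\tk)-}| < V_{(\tk)-}$ (using $a_i = w_{i,(\tk)-}V_{(\tk)-}$ and the assumption $\sum_i \xi_{2i}|w_{i,(\tk)-}| < 1$), hence $g(0) > 0$. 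A sign argument then closes the gap: from $|V^*_{\tk} - g(0)| = |g(V^*_{\tk}) - g(0)| \le L|V^*_{\tk}|$ with $L < 1$, assuming $V^*_{\tk} \le 0$ would force $g(0) \le (1-L)V^*_{\tk} \le 0$, contradicting $g(0) > 0$; hence $V^*_{\tk} > 0$.

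Part ($ii$) requires only cosmetic changes. Here $g(V) = V_{(\tk)-} + h_{\tk} - c_{\tk}(V)$, which differs from the map in part ($i$) by the additive constant $h_{\tk}$, so the Lipschitz analysis is unaffected, the same coefficient $L$ applies, and Banach again yields a unique $V^*_{\tk}$. Nonnegativity of $c_{\tk}$ gives $V^*_{\tk} \le V_{(\tk)-} + h_{\tk}$. For positivity, the only new ingredient is the estimate $V_{(\tk)-} + h_{\tk} \ge (1-\xi)V_{(\tk)-}$ coming from $h_{\tk} \ge -\xi V_{(\tk)-}$; combined with $c_{\tk}(0) \le V_{(\tk)-}\sum_i \xi_{2i}|w_{i,(\tk)-}|$ and the hypothesis $\xi < 1 - \sum_i \xi_{2i}|w_{i,(\tk)-}|$, this yields $c_{\tk}(0) < V_{(\tk)-} + h_{\tk}$, i.e.\ $g(0) > 0$, after which the identical sign argument gives $V^*_{\tk} > 0$. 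The main obstacle throughout is the careful Lipschitz bookkeeping for the nested $\min$--$\max$--$|\cdot|$ structure of the transaction cost; once that is in hand, everything else is a short consequence of the contraction property.
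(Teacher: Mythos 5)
Your proposal is correct, and while the contraction half mirrors the paper, your positivity argument takes a genuinely different route. For the contraction, the paper proves two elementary inequalities (Lemmas \ref{lemma:min_inequality} and \ref{lemma:max_inequality}, giving $|\min\{a_1,a_2\}-\min\{b_1,b_2\}|\le|a_1-b_1|+|a_2-b_2|$ and the analogous max bound) and applies them summand by summand, arriving directly at the coefficient $\sum_{i}(\frac{\xi_{1i}}{p_{i,\tk}}+\xi_{2i})|w_{i,\tk}|$; your route through the standard Lipschitz calculus (a min or max of Lipschitz functions is Lipschitz with constant the \emph{maximum} of the two constants) actually yields the sharper per-term constant $\max\{\frac{\xi_{1i}}{p_{i,\tk}},\xi_{2i}\}|w_{i,\tk}|$, which you then relax to the sum to match the statement --- both are valid, yours is marginally tighter. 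The real divergence is in proving $V^*_{\tk}>0$: the paper splits the index set according to the sign of $V_{i,(\tk)-}-w_{i,\tk}V^*_{\tk}$, bounds the cost by its proportional part, and extracts an explicit inequality of the form $(1-\sum_{i\le\tilde N}\xi_{2i}w_{i,\tk}+\sum_{i>\tilde N}\xi_{2i}w_{i,\tk})V^*_{\tk}\ge(\cdots)V_{(\tk)-}+h_{\tk}$, closing with the hypotheses on $\xi$ and $h_{\tk}$; you instead evaluate the map at zero, show $g(0)>0$ from $c_{\tk}(0)\le V_{(\tk)-}\sum_i\xi_{2i}|w_{i,(\tk)-}|$ together with the cash bound, and then use the contraction itself in a short sign argument ($V^*_{\tk}\le 0$ would force $g(0)\le(1-L)V^*_{\tk}\le 0$, a contradiction). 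Your argument is shorter, avoids the case analysis entirely, and makes transparent exactly where each hypothesis enters; the paper's explicit-inequality style has the modest advantage that the same case-splitting template is reused verbatim in its variants with regulatory fees and fixed-plus-proportional costs (Propositions \ref{prop:regulatory_fees} and \ref{prop:fixed_proportional_cost}), though your evaluation-at-zero device would adapt to those settings just as well. A final immaterial difference: the paper proves part ($ii$) first and obtains ($i$) as the special case $h_{\tk}=0$, $\xi=\frac{1}{2}(1-\sum_i\xi_{2i}|w_{i,(\tk)-}|)$, whereas you prove ($i$) and then note ($ii$) needs only the additive constant $h_{\tk}$ and the adjusted bound at zero.
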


\begin{proof}
See Appendix \ref{appendix:proof_prop_achieve_target_weight}.
\end{proof}

\begin{remark}
	The proposition can be applied in the case of short selling as it allows general portfolio weights in which some  $w_{i,(\tk)-}$ or $w_{i,\tk}$ can be negative.
\end{remark}

\begin{remark}\label{rmk:relation_of_two_weights}
	Due to price movement, the weight $w_{i, (t+\frac{k}{M})-}$  is usually not equal to $w_{i, t+\frac{k-1}{M}}$ for $t + \frac{k}{M} > 0$. To calculate $w_{i, (t+\frac{k}{M})-}$ to verify the condition of the proposition, one can use the following simple formula (see Lemma \ref{lemma:relation_of_weights} in Appendix \ref{appendix:proof_prop_achieve_target_weight})
	\begin{align}
		& w_{i, (t+\frac{k}{M})-} = \frac{w_{i, t+\frac{k-1}{M}} p_{i,t+\frac{k}{M}}}{p_{i,t+\frac{k-1}{M}} \sum_{j=1}^{N} w_{j,t+\frac{k-1}{M}} \frac{p_{j,t+\frac{k}{M}}}{p_{j,t+\frac{k-1}{M}}}},\ \text{for any}\ t+\frac{k}{M}>0. \notag
	\end{align}
\end{remark}

\begin{remark}
	The conditions of the proposition depend on the specific functional form of the transaction cost. Proposition \ref{prop:regulatory_fees} in the Appendix \ref{appendix:regulatory_fees} provides the conditions when the transaction cost includes regulatory fees. Proposition \ref{prop:fixed_proportional_cost} in the Appendix \ref{appendix:fixed_proportional_cost} provides the conditions when the transaction cost is specified as proportional cost plus a fixed cost.
\end{remark}

The conditions in Proposition \ref{prop:achieve_target_weight} typically hold in practice, as $\xi_{1i}$ and $\xi_{2i}$ are in the order of 0.001, $\sum_{i=1}^N |w_{i,\tk}|$ and $\sum_{i=1}^N |w_{i,(\tk)-}|$ are in the order of 1, and usually $p_{i,\tk}>1$ for all $i$. Therefore, at any time $\tk$, one can rebalance the portfolio so that the portfolio weights after rebalancing are equal to the target portfolio weights and the transaction cost is exactly taken into account in the rebalancing. 

By the contraction property of the function on the left-hand side of Equation \eqref{equ:V_fixed_point_prob} and \eqref{equ:V_fixed_point_prob_cash_injection}, the solution $V_{t+\frac{k}{M}}$ to the equations can be computed by a Banach fixed point iteration algorithm. More precisely, Algorithm \ref{algo1} and Algorithm \ref{algo2} show the algorithm for solving Equation \eqref{equ:V_fixed_point_prob} and \eqref{equ:V_fixed_point_prob_cash_injection}, respectively. Under the conditions of Proposition \ref{prop:achieve_target_weight}, the algorithms are guaranteed to converge by the Banach Fixed Point Theorem. Furthermore, the contraction coefficient $\sum_{i=1}^{N} \left( \frac{\xi_{1i}}{p_{i,\tk}} + \xi_{2i} \right ) |w_{i,\tk}|$  is in the order of 0.01 if stock price $p_i > 1$ for all $i = 1, \ldots, N$, so the loops in Algorithm \ref{algo1} and \ref{algo2} end after just a few iterations.

After the solution $V_{t+\frac{k}{M}}$ to the rebalancing equation is computed, the number of shares of stock $i$ that should be held after rebalancing is computed as $x_{i, t+\frac{k}{M}} = \frac{w_{i, t+\frac{k}{M}} V_{t+\frac{k}{M}}}{p_{i, t+\frac{k}{M}}}$, $i=1, \dots, N$. 


\begin{algorithm}[htbp]
	\caption{A fixed point iteration algorithm for solving Equation \eqref{equ:V_fixed_point_prob}\label{algo1}} 
	\begin{algorithmic}[1] \label{code:train_alg_rb}
		\STATE Set convergence tolerance $\varepsilon=\text{1E-15}$
		\STATE Initialize $V^{\text{old}}_{t+\frac{k}{M}} \leftarrow 0$ and $V^{\text{new}}_{t+\frac{k}{M}} \leftarrow V_{(t+\frac{k}{M})-} - c_{t+\frac{k}{M}}(V^{\text{old}}_{t+\frac{k}{M}})$ 
		\WHILE {$\left| V^{\text{new}}_{t+\frac{k}{M}} - V^{\text{old}}_{t+\frac{k}{M}} \right| \ge \varepsilon$}
		\STATE $V^{\text{old}}_{t+\frac{k}{M}} \leftarrow V^{\text{new}}_{t+\frac{k}{M}}$
		\STATE $V^{\text{new}}_{t+\frac{k}{M}} \leftarrow V_{(t+\frac{k}{M})-} - c_{t+\frac{k}{M}}(V^{\text{old}}_{t+\frac{k}{M}})$
		\ENDWHILE
		\RETURN $V^{\text{new}}_{t+\frac{k}{M}}$
	\end{algorithmic}
\end{algorithm}

\begin{algorithm}[htbp]
	\caption{A fixed point iteration algorithm for solving Equation \eqref{equ:V_fixed_point_prob_cash_injection}\label{algo2}}  
	\begin{algorithmic}[1] \label{code:train_alg_vb}
		\STATE Set convergence tolerance $\varepsilon=\text{1E-15}$
		\STATE Initialize $V^{\text{old}}_{t+\frac{k}{M}} \leftarrow 0$ and $V^{\text{new}}_{t+\frac{k}{M}} \leftarrow V_{(t+\frac{k}{M})-} + h_{t+\frac{k}{M}} - c_{t+\frac{k}{M}}(V^{\text{old}}_{t+\frac{k}{M}})$
		\WHILE {$\left| V^{\text{new}}_{t+\frac{k}{M}} - V^{\text{old}}_{t+\frac{k}{M}} \right| \ge \varepsilon$}
		\STATE $V^{\text{old}}_{t+\frac{k}{M}} \leftarrow V^{\text{new}}_{t+\frac{k}{M}}$
		\STATE $V^{\text{new}}_{t+\frac{k}{M}} \leftarrow V_{(t+\frac{k}{M})-} + h_{t+\frac{k}{M}} - c_{t+\frac{k}{M}}(V^{\text{old}}_{t+\frac{k}{M}})$
		\ENDWHILE
		\RETURN $V^{\text{new}}_{t+\frac{k}{M}}$
	\end{algorithmic}
\end{algorithm}

\subsection{A Reinforcement Learning (RL) Approach for Solving the Index Tracking Problems}\label{subsec:RL_problem_solution}

In this subsection, we first propose a RL approach for solving the dynamic formulations of the return tracking problem in \eqref{eq:cumulative_discounted_tracking_error} and the value tracking problem in \eqref{eq:cumulative_discounted_tracking_error_vb}. We then discuss the specification of state variables, actions, policies, and neutral networks in the RL method for the two problems.

The major difficulty of the RL approach for solving the index tracking problem is the limitation of available historical data for training. We propose a new training scheme to address this issue in Section \ref{subsec:new_training_scheme}.

\subsubsection{A RL Approach for Solving the Dynamic Formation of Index Tracking Problem}
In the RL framework, at each time period $t$, an agent observes the state of environment $s_t$, and then takes an action $a_t$ that follows a policy (i.e., distribution) $\pi_{\theta}(\cdot |s_t)$, where $\theta$ is the parameter of the policy. Then, the state of environment evolves to $s_{t+1}$ at time $t+1$, where $s_{t+1}$ follows the transition distribution $P(\cdot | s_t, a_t)$ determined by the environment; in addition, the agent receives a feedback from the environment in form of a reward  $r_{t} = R(s_t, a_t, s_{t+1})$. Then the cumulative discounted reward of the agent is 
\begin{equation}\label{equ:cumu_disc_reward}
	\eta(\theta) = E_{s_0, a_0, s_1, a_1, \dots} \left[ \sum_{t=0}^{\infty} \gamma^t r_{t} \right],\ \text{where}\ s_0\sim \rho_0, a_t\sim \pi_{\theta}(\cdot |s_t), s_{t+1}\sim P(\cdot \mid s_t, a_t), 	
\end{equation}
where $\rho_0$ is the distribution of the initial state $s_0$ at time $0$, $\gamma\in (0, 1)$ is the discount factor. The objective of the agent is to find an optimal policy $\pi_{\theta^*}$ that maximizes the cumulative discounted reward.

In the index tracking problem, the fund manager is the agent and the financial market is the environment. The return tracking problem \eqref{eq:cumulative_discounted_tracking_error} is equivalent to the objective of the agent who 
chooses an action $a_t$ to determine $w_t$ at time $t$ and then receives the reward 
\begin{align}\label{eq:reward_rb}
	r_{t} = R(s_t, a_{t}, s_{t+1}) = - \beta \textbf{R-TE}^{q}_{(t, t+1]},
\end{align}
where $\beta$ is a constant that rescales the tracking error to facilitate model training. The value tracking problem  \eqref{eq:cumulative_discounted_tracking_error_vb} is equivalent to the objective of the agent who chooses an action $a_t$ to determine $(w_t, f_t)$ at time period $t$ and then receives the reward
\begin{align}\label{eq:reward_vb}
	r_{t} = R(s_t, a_{t}, s_{t+1}) = - \beta \textbf{V-TE}^{q}_{(t, t+1]}.
\end{align}

The index tracking problems have high-dimensional continuous states and continuous actions. We propose to solve the index tracking problems by an extension of the proximal policy optimization (PPO) algorithm \citep{schulman2017proximal} due to its empirical stability and high performance. 
PPO is based on the Trust Region Policy Optimization (TRPO) algorithm for RL \citep{schulman2015trust}. \citet{Kakade-Langford-2002} consider the following function $L_{\theta}(\cdot)$ as an approximation to the RL objective function $\eta(\cdot)$: 
\begin{equation}
	L_{\theta}(\tilde{\theta}) = \eta(\theta) + \sum_{s} \rho_{\theta}(s) \sum_{a} \pi_{\tilde{\theta}} (a|s) A_{\theta}(s, a),\notag
\end{equation}
where $\rho_{\tilde \theta}(s) = P(s_0 = s | \pi_{\tilde \theta}) + \gamma P(s_1 = s | \pi_{\tilde \theta}) + \gamma^2 P(s_2 = s | \pi_{\tilde \theta}) + \cdots $ is the unnormalized discounted visitation frequency following $\pi_{\tilde \theta}$; $A_{\theta}(s, a) = Q_{\theta}(s, a) - V_{\theta}(s)$ is the  advantage function; $Q_{\theta}(s_t, a_t) = E_{s_{t+1}, a_{t+1}, \dots \sim \pi_{\theta}} \left[ \sum_{l = 0}^{\infty} \gamma^l r_{t+l} \mid s_t, a_t\right]$ is state-action value function; $V_{\theta}(s_t) = E_{a_{t}, s_{t+1}, \dots \sim \pi_{\theta}} \left[ \sum_{l = 0}^{\infty} \gamma^l r_{t+l} \mid s_t\right]$ is the value function. $L_{\theta}(\tilde \theta)$ matches $\eta(\tilde \theta)$ at $\tilde \theta=\theta$ to the first order. 

Based on the function $L_{\theta}(\cdot)$, \citet{schulman2015trust} propose to solve the following problem to generate a policy update at $\theta_{\text{old}}$:
\begin{align}\label{eq:eq_12}
	\mathop{\max}_{\theta}\ & L_{\theta_{\text{old}}}(\theta) \\ 
	\text{s.t.}\ & \bar{D}^{\rho_{\theta_{\text{old}}}}_{\text{KL}} \left( \theta_{\text{old}}, \theta \right) \le \delta,\notag 
\end{align}
where $\bar{D}^{\rho_{\theta_{\text{old}}}}_{\text{KL}} \left( \theta_{\text{old}}, \theta \right) = E_{s \sim \rho_{\theta_{\text{old}}}} \left[D_{\text{KL}}(\pi_{\theta_{\text{old}}}(\cdot\mid s) \parallel \pi_{\theta}(\cdot\mid s))\right]$ is the average KL divergence, $D_{\text{KL}}( \pi_{\theta_{\text{old}}}(\cdot\mid s) \parallel \pi_{\theta}(\cdot\mid s))$ is the KL divergence between $\pi_{\theta_{\text{old}}}(\cdot \mid s)$ and $\pi_{\theta}(\cdot \mid s)$, and $\delta$ is a parameter specifying an upper bound of the divergence. 

To solve the problem \eqref{eq:eq_12} using Monte Carlo simulation, \citet{schulman2015trust} reformulate the problem as:
\begin{align}\label{eq:eq_14}
	\mathop{\max}_{\theta}\ & E_{s \sim \rho_{\theta_{\text{old}}}, a \sim \pi_{\theta_{\text{old}}}} \left[\frac{\pi_{\theta}(a|s)}{\pi_{\theta_{\text{old}}}(a|s)} Q_{\theta_{\text{old}}}(s,a) \right] \notag \\ 
	\text{s.t.}\ & \bar{D}^{\rho_{\theta_{\text{old}}}}_{\text{KL}} \left( \theta_{\text{old}}, \theta \right) \le \delta,
\end{align}
which leads to the TRPO algorithm for RL. 

\citet{schulman2017proximal} propose the PPO algorithm that solves a variant of the problem \eqref{eq:eq_14}. The PPO algorithm replaces the trust-region constraint in \eqref{eq:eq_14} by bounding the probability ratio $\frac{\pi_{\theta}(a|s)}{\pi_{\theta_{\text{old}}}(a|s)}$ in an interval. Formally, when updating parameter $\theta_{\text{old}}$, the PPO algorithm solves the problem 
\begin{align}\label{eq:ppo_obj}
	\mathop{\max}_{\theta}\	&\mathcal{L}^{\text{CLIP}}(\theta) := \notag \\
	&E_{s \sim \rho_{\theta_{\text{old}}}, a \sim \pi_{\theta_{\text{old}}}} \left[\mathop{\min} \left \{ \frac{\pi_{\theta}(a|s)}{\pi_{\theta_{\text{old}}}(a|s)} A_{\theta_{\text{old}}}(s, a), \ \operatorname{clip} \left( \frac{\pi_{\theta}(a|s)}{\pi_{\theta_{\text{old}}}(a|s)}, 1-\epsilon, 1+\epsilon \right ) A_{\theta_{\text{old}}}(s, a) \right \} \right],
\end{align}
where the probability ratio is bounded in an interval $\left[ 1 - \epsilon, 1 + \epsilon \right]$ by a clip function with $\epsilon > 0$, and the clip function $\operatorname{clip}(y, b_1, b_2)$ with $b_1 \le b_2$ is defined as $\operatorname{clip}(y, b_1, b_2) = \max(\min(y, b_2), b_1)$. 
Through clipping the probability ratio, 
PPO can take a larger 
improvement step size when updating a policy, but not too far away to undermine performance. 

The value function can be parameterized by another parameter $\phi$ and denoted as $V_{\phi}$. The parameter $\phi$ is updated through the following problem
\begin{align}\label{eq:value_optimize}
	\mathop{\min}_{\phi}\ &\mathcal{V}(\phi) := E_{s \sim \rho_{\theta_{\text{old}}}} \left[\left(V_{\phi}(s) - V^{\text{target}}_{\phi_{\text{old}}}(s)\right)^2\right],
\end{align}
where $V^{\text{target}}_{\phi_{\text{old}}}(s)$ is the target value estimated based on observed sample path of rewards and parameter $\phi_{\text{old}}$.
In addition, for sufficient exploration, it is desirable to minimize the entropy loss $\mathcal{U}(\theta)=-E_{s \sim \rho_{\theta_{\text{old}}}}[\text{Entropy}(\pi_{\theta}(\cdot|s))]$. 

Finally, the PPO algorithm updates the parameters from $(\theta_{\text{old}}, \phi_{\text{old}})$ to new parameters $(\theta, \phi)$  by applying multiple steps of stochastic gradient descent for solving the following problem
\begin{align}\label{eq:total_loss}
	\mathop{\min}_{\theta, \phi}\ &\mathcal{L}(\theta, \phi) := -\mathcal{L}^{\text{CLIP}}(\theta) + e_1\mathcal{V}(\phi) + e_2 \mathcal{U}(\theta),
\end{align}
for some coefficients $e_1 > 0$ and $e_2 \ge 0$.

The PPO algorithm is implemented as an actor-critic on-policy algorithm in which the policy $\pi_{\theta}$ and the value function $V_{\phi}$ are modeled by neural networks, and the expectation in \eqref{eq:total_loss} is replaced by the average of samples obtained by running the policy $\pi_{\theta_{\text{old}}}$ and using $V_{\phi_{\text{old}}}$.


\subsubsection{The States and Networks in the RL Method for the Return Tracking Problems}\label{subsubsec:states}

One of the big advantages of the RL method is that it can effectively utilize market information other than prices or returns, such as market volatility, sentiment, liquidity, etc. Although such information has been shown to provide  effective prediction of future stock and index returns, it cannot be used in existing methods for index tracking, which can only use historical data of prices or returns. In order to learn the time series correlation and cross-sectional comovements of the $N$ stocks and the index through RL, we specify the state $s_t$ at time $t$ to include the market information during $(t-n_s, t]$, where $n_s$ is the number of look-back time periods. Since each time period has $M$ trading days, $(t-n_s, t]$ include $n_sM$ trading days. 


More precisely, we define the state $s_t$ as $s_t = \left(I_{1, t}, I_{2, t}, I_{3, t}, I_{4, t}, I_{5, t} \right) \in \mathbb{R}^{n_sM \times (2N+3)}$, where $I_{1, t}\in \mathbb{R}^{n_sM}$ is the column vector of daily index returns from $t-n_s+\frac{1}{M}$ to time $t$;
$I_{2, t}\in \mathbb{R}^{n_sM}$ is the column vector of daily VIX index values from $t-n_s+\frac{1}{M}$ to time $t$; 
$I_{3, t}\in \mathbb{R}^{n_sM}$ is the column vector of daily US one month Treasury bill rates from $t-n_s+\frac{1}{M}$ to time $t$;
$I_{4, t}\in \mathbb{R}^{n_sM\times N}$ is the matrix of daily returns of the $N$ stocks from $t-n_s+\frac{1}{M}$ to time $t$; 
$I_{5, t}\in \mathbb{R}^{n_sM\times N}$ is the matrix of daily trading volumes of the $N$ stocks from $t-n_s+\frac{1}{M}$ to time $t$.

We specify the stochastic policy $\pi_{\theta}(\cdot|s_t)$  to be a truncated diagonal Gaussian distribution with a mean vector 
$\mu_{\theta_1}(s_t)$ and a standard deviation vector $\sigma_{\theta_2}(s_t)$, where 
$\mu_{\theta_1}(\cdot)$ and $\sigma_{\theta_2}(\cdot)$ are respectively represented by two feed-forward neural networks (FNNs) with parameters $\theta_1$ and $\theta_2$. So the two FNNs constitute the policy network and $\theta = (\theta_1, \theta_2)$. Then, given $\mu_{\theta_1}(s_t)$, $ \sigma_{\theta_2}(s_t)$, $a_t$ can be generated by 
\begin{align}\label{eq:sto_action}
	a_t = \operatorname{clip}\left( \mu_{\theta_1}(s_t) + \sigma_{\theta_2}(s_t) \odot z, -b, b\right), \ b > 0,
\end{align}
where $z$ is a vector of independent standard normal random variables, $\odot$ denotes the element-wise product of two vectors, $b>0$ is a specified parameter. We truncate all elements of $a_t$ to be within $[-b, b]$ for the sake of numerical stability. 

For the return tracking problem \eqref{eq:cumulative_discounted_tracking_error}, $\mu_{\theta_1}(s_t)$ and $\sigma_{\theta_2}(s_t)$ are of dimension $N$.
The $n_sM \times (2N+3)$ matrix $s_t$ is first flattened into a one dimensional row vector by sequentially concatenating the rows of $s_t$. Then the one dimensional row vector is inputted into the two FNNs. Both FNNs consist of an input layer with $n_sM(2N+3)$ nodes, eight hidden layers, each of which contains 128 nodes, and an output layer with $N$ nodes. The activation function of the hidden layers is $\tanh$ function. 
The FNN for the mean vector has a linear output layer; the FNN for the standard deviation vector has an output layer with an exponential activation function so that the output values are positive. Besides, for stability and efficiency of training, a batch normalization layer in front of each hidden layer is active during training and frozen thereafter.

For the return tracking problem \eqref{eq:cumulative_discounted_tracking_error}, the action $a_t$ is of dimension $N$. $a_t$ determines $w_t$ through the softmax function $ g: \mathbb{R}^{N} \rightarrow \mathbb{R}^{N} $:
\begin{align}
	& w_{t} = g(a_t) = \left(\frac{e^{a_{1, t}}}{\sum_{j=1}^{N} e^{a_{j,t}}}, \dots, \frac{e^{a_{N, t}}}{\sum_{j=1}^{N} e^{ a_{j,t}}}\right). \notag
\end{align}


%

Similar to the policy network, we specify the value network representing $V_{\phi}(s_t)$ to be a FNN with one input layer consisting of $n_sM(2N+3)$ nodes, six hidden layers with 128 nodes, and one linear output layer with one node. The FNN also has a batch normalization layer in front of each hidden layer. The activation function of the hidden layers is a $\tanh$ function.

\subsubsection{The States and Networks in the RL Method for the Value Tracking Problem}\label{subsubsec:RL_problem_solution_vb}


For the value tracking problem \eqref{eq:cumulative_discounted_tracking_error_vb}, the state $s_t$ is specified the same as that for the return tracking problem. The stochastic policy $\pi_{\theta}(\cdot|s_t)$ and $a_t$ are specified the same as those for the return tracking problem, except that $a_t$,
$\mu_{\theta_1}(s_t)$, and $\sigma_{\theta_2}(s_t)$ are all of dimension $N+1$ instead of $N$. The policy network shares the same structure as that for the return tracking problem, except that the two FNNs both have four hidden layers with 64 nodes and an output layer with $N+1$ nodes. The action 
$a_t = (a_{1,t}, \dots, a_{N,t}, a_{N+1, t})$ determines $w_t$ and $f_t$ through 
\begin{align}
	w_{t} &= g(a_{1,t}, \dots, a_{N,t}) = \left(\frac{e^{a_{1, t}}}{\sum_{j=1}^{N} e^{a_{j,t}}}, \dots, \frac{e^{a_{N, t}}}{\sum_{j=1}^{N} e^{ a_{j,t}}}\right),\notag\\
	f_t & = b_f \frac{\operatorname{sig}\left(a_{N+1, t}\right)}{\operatorname{sig}\left(b\right)},\label{eq:f_a_definition}
\end{align}
where $\operatorname{sig}(\cdot)$ is the sigmoid function $\operatorname{sig}\left(y\right) = 1/(1 + e^{-y}), \ y \in \mathbb{R}$. As $|a_{i,t}|\leq b$ (by Equation \eqref{eq:sto_action}), $f_t$ defined in \eqref{eq:f_a_definition} satisfies the bound constraint in Equation \eqref{eq:constraint_proportion}.

The value network representing $V_{\phi}(s_t)$ is the same as that for the return tracking problem, except that the FNN has two hidden layers with 64 nodes.

\subsection{Data Limitation and A New Training Scheme of the RL Approach}\label{subsec:new_training_scheme}


Data limitation is one of the major difficulties of the proposed RL method for financial index tracking. In the application of RL for computer games and robotics \citep[see, e.g.,][]{mnih2015human, silver2017mastering, levine2016end},
unlimited data can be generated for training by playing games or operating robots as many times as needed, but there is only one sample path of the time series of financial data. The time period of index tracking problems is typically a quarter or half a year, so basically we need a large number of quarterly or semiyearly training trajectories of the interaction between the agent and financial market from the single sample path. However, we only have a sample path of at most a few decades of training data. In addition, 
the financial index tracking problem involves a large number (in the order of a hundred) of assets 
and market information variables that are correlated cross-sectionally and across time, so it is difficult to build a good parametric model for the joint dynamics of the large number of assets and market information variables. Hence, no simulated data is available for training.

We propose a new training scheme for the RL method to address the issue of data limitation. The basic idea is to obtain enough quarterly or semiyearly data from daily data by randomly selecting the starting date of the first time period of each episode during training.

\subsubsection{Training Scheme}\label{subsubsec:training_strategy}

We split a given data set of daily market information into a training set and a testing set. 
The training set includes daily data at time $-n_s, -n_s + \frac{1}{M}, \dots, -n_s + \frac{M-1}{M}, -n_s + 1, -n_s + 1 + \frac{1}{M}, \dots, T_{\text{train}} - \frac{1}{M}, T_{\text{train}}$;
the testing set includes daily data at time  $T_{\text{train}}, T_{\text{train}}+\frac{1}{M}, \dots,  T_{\text{train}} + \frac{M-1}{M}, T_{\text{train}} + 1, T_{\text{train}} + 1 + \frac{1}{M}, \dots, T_{\text{test}} - 1 + \frac{M-1}{M}, T_{\text{test}}$,
where $T_{\text{test}} > T_{\text{train}} > 0$. The training set starts at time  $-n_s$ because state $s_0$ contains market information from time $-n_s$ to time $0$.


During training, the PPO algorithm updates the parameters $\theta$ and $\phi$ in $H$ epochs of iterations. In each epoch, based on the current parameters $\theta_{\text{old}}$ and $\phi_{\text{old}}$, the algorithm runs the policy $\pi_{\theta_{\text{old}}}$ on the training set
to collect 
$K$ episodes of training trajectories 
\begin{equation}\label{equ:tuple_for_train} 
	\left\{\left.\left(s_t, a_t, s_{t+1}, r_{t}, 
	\pi_{\theta_{\text{old}}}(s_t|a_t), \sigma_{\theta_{\text{old}}}(s_{t}), \hat{A}_t, \hat{\eta}_t\right)\right | t=t_0, t_0+1, \ldots, t_0 + \min(n-1, n') \right\},
\end{equation} 
where $t_0$ is the starting time of the episode,
$\hat{A}_t$ is the advantage estimator, $\hat{\eta}_t$ is the cumulative discounted reward,
$n\in \mathbb{N}$ is the pre-specified episode length, and 
\begin{equation}
	n' = \mathop{\max} \left\{ k \in \mathbb{Z}_{+}: t_0 + k < T_{\text{train}}\right\}.
\end{equation} 
If $t_0$ is close to $T_{\text{train}}$, then $n'$ may be smaller than $n - 1$, and then the actual episode length is $n'+1$.  
The $K$ episodes of experiences in \eqref{equ:tuple_for_train} are put into a buffer, and then mini-batches of experiences are randomly selected from the buffer and used to estimate the expectation in the objective function \eqref{eq:total_loss} by sample average, and then the parameters are updated by minimizing the estimated objective function based on stochastic gradient descent.



To obtain as many different episodes as possible from the single sample path of training data, we 
randomly sample the initial time $t_0$ from the following  discrete distribution
\begin{equation}\label{equ:t_0_dist}
	t_0 \overset{d}{\sim} 
	\begin{cases}
		\operatorname{uniform}\left\{0, \frac{1}{M}, \frac{2}{M},\ldots, T_0 - \frac{1}{M} \right\}, & \text{with probability} \ \zeta, \\
		\operatorname{uniform}\left\{T_0, T_0+\frac{1}{M}, T_0+\frac{2}{M},\ldots, T_{\text{train}} - \frac{1}{M} \right\}, & \text{with probability} \ 1 - \zeta, 
	\end{cases}
\end{equation}
where $\zeta \in [0, 1]$ and $T_0 \in \left\{\frac{1}{M}, \frac{2}{M}, \dots, T_{\text{train}}- \frac{1}{M} \right\}$ are two parameters. If $\zeta = \frac{T_0}{T_{\text{train}}}$, then $t_0$ is uniformly distributed on the dates of training set; if $\zeta < \frac{T_0}{T_{\text{train}}}$, then $t_0$ is more likely sampled from the dates in the time range $[T_0, T_{\text{train}}-\frac{1}{M}]$, so the market information during time periods near the time range of the testing set is given more weight in training, but the market information further earlier than the testing set is still effectively utilized in training. 
For example, in the empirical studies, we specify $\zeta = \frac{1}{4}$ and $T_0 = \frac{1}{M}\lceil \frac{3}{4} (M T_{\text{train}}-1) \rceil$, where $\lceil y \rceil$ denotes the ceiling function. Then $t_0$ leads to the first state $s_{t_0}$. For consistency of notation, we use $s_{t_0} \sim \rho_0(\cdot)$ to denote the sampling of $s_{t_0}$.



To compute the advantage estimator $\hat{A}_t$ in \eqref{equ:tuple_for_train}, we first compute temporal difference (TD) residuals of the approximate value function $V_{\phi_{\text{old}}}(\cdot)$. 
We define the TD residual at time $t_0$ as 
\begin{equation}\label{equ:delta_t_0}
	\delta_{t_0} = 
	\begin{cases}
		r_{t_0} + \gamma V_{\phi_{\text{old}}}(s_{t_0+1}) - V_{\phi_{\text{old}}}(s_{t_0}), & \operatorname{if} t_0 + 1 \le T_{\text{train}}, \\
		\tilde{r}_{t_0} + \gamma V_{\phi_{\text{old}}}(s_{T_{\text{train}}}) - V_{\phi_{\text{old}}}(s_{t_0}), & \operatorname{otherwise}, 
	\end{cases}
\end{equation}
where $\tilde{r}_{t_0} = - \beta \textbf{R-TE}^{q}_{(t_0, T_{\text{train}}]}$ for the return-based tracking problem and $- \beta \textbf{V-TE}^{q}_{(t_0, T_{\text{train}}]}$ for the value-based tracking problem. 
Note that if $t_0 + 1 > T_{\text{train}}$, namely, if the time interval $[t_0, T_{\text{train}})$ has less than $M$ trading days, we use $\tilde{r}_{t_0}$ as the single period reward and use $s_{T_{\text{train}}}$ as the next (and terminal) state. In this case, $n'=0$, and the actual episode length is 1. 

In general, for $l = 0, \dots, \min(n-1, n')$, 
we define the TD residual at time $t_0+l$ as
\begin{equation}\label{equ:delta_boundary_def}
	\delta_{t_0+l} = 
	\begin{cases}
		r_{t_0+l} + \gamma V_{\phi_{\text{old}}}(s_{t_0+l+1}) - V_{\phi_{\text{old}}}(s_{t_0+l}), & \operatorname{if} t_0 + l + 1 \le T_{\text{train}}, \\
		\tilde{r}_{t_0+l} + \gamma V_{\phi_{\text{old}}}(s_{T_{\text{train}}}) - V_{\phi_{\text{old}}}(s_{t_0+l}), & \operatorname{otherwise},
	\end{cases}
\end{equation}
where in the expression of the second case, $\tilde{r}_{t_0+l} = - \beta \textbf{R-TE}^{q}_{(t_0+l, T_{\text{train}}]}$ for the return-based tracking and $- \beta \textbf{V-TE}^{q}_{(t_0+l, T_{\text{train}}]}$ for the value-based tracking. In the second case, it holds that $t_0+l<T_{\text{train}}<t_0+l+1$, so the time range $[t_0+l, T_{\text{train}})$ contains less than $M$ trading days. 

We then estimate the advantage $\hat{A}_{t_0}$ by the generalized advantage estimator \citep{schulman2015high}
\begin{equation}\label{eq:advantage_estimator}
	\hat{A}_{t_0}  = 
	\begin{cases}
		\delta_{t_0} + (\gamma \lambda) \delta_{t_0+1} + \cdots + (\gamma \lambda)^{n-1} \delta_{t_0+n-1}, & \operatorname{if} t_0 + n \le T_{\text{train}}, \\ 
		\delta_{t_0} + (\gamma \lambda) \delta_{t_0+1} + \cdots + (\gamma \lambda)^{n'} \delta_{t_0+n'}, & \operatorname{otherwise},
	\end{cases}
\end{equation}
where 
$\lambda$ is the parameter that makes a compromise between bias and variance. Figure \ref{fig:advantages_1} and \ref{fig:advantages_2} demonstrate the computation of the advantage estimator $\hat{A}_{t_0}$ by using TD residuals for the two cases in Equation \eqref{eq:advantage_estimator}, respectively.

\begin{figure}[htbp]
	{\includegraphics[width=1.0\textwidth]{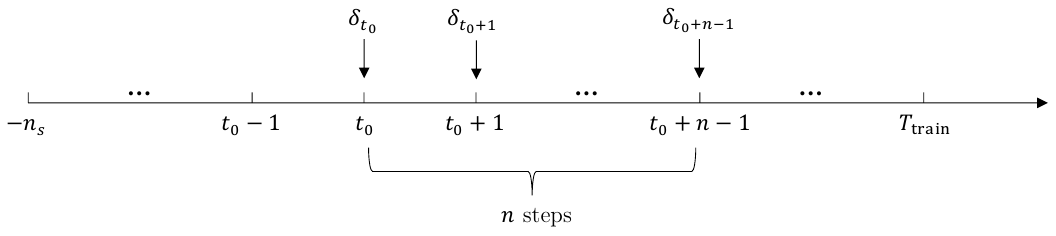}}
	\caption{\textbf{Estimation of the advantage $\hat{A}_{t_0}$ in the case of $t_0 + n \le T_{\text{train}}$}. \label{fig:advantages_1}}
\end{figure}

\begin{figure}[htbp]
	{\includegraphics[width=1.0\textwidth]{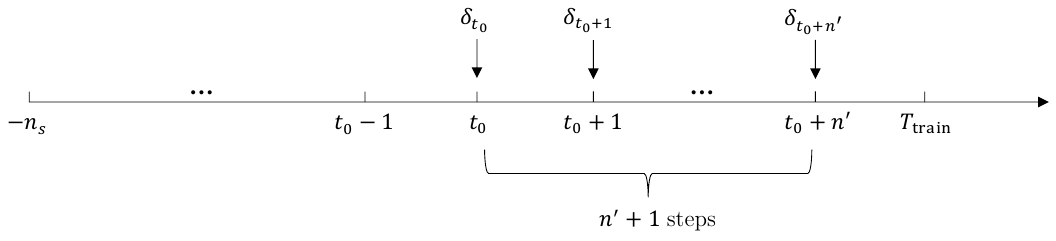}}
	\caption{\textbf{Estimation of the advantage $\hat{A}_{t_0}$ in the case of $t_0 + n > T_{\text{train}}$}. \label{fig:advantages_2}}
\end{figure}

In general, for $l = 0, 1, \dots, \min(n-1, n')$, 
the advantage estimator $\hat{A}_{t_0+l}$ is given by
\begin{equation}
	\hat{A}_{t_0+l} = 
	\begin{cases}
		\delta_{t_0+l} + (\gamma \lambda) \delta_{t_0+l+1} + \cdots + (\gamma \lambda)^{n-l-1} \delta_{t_0+n-1}, & \operatorname{if} t_0+n \le T_{\text{train}}, \notag \\ 
		\delta_{t_0+l} + (\gamma \lambda) \delta_{t_0+l+1} + \cdots + (\gamma \lambda)^{n'-l} \delta_{t_0+n'}, & \operatorname{otherwise}. \notag
	\end{cases}
\end{equation}


Similarly, for $l = 0, 1, \dots,  \min(n-1,n')$, 
the cumulative discounted reward $\hat{\eta}_{t_0+l}$ is estimated by 
\begin{equation}
	\hat{\eta}_{t_0+l} = 
	\begin{cases}
		r_{t_0+l} + \gamma r_{t_0+l+1} + \cdots + \gamma^{n-l-1} r_{t_0+n-1} + \gamma^{n-l} V_{\phi_{\text{old}}}(s_{t_0+n}), & \operatorname{if} t_0 + n \le T_{\text{train}}, \notag \\
		r_{t_0+l} + \gamma r_{t_0+l+1} + \cdots + \gamma^{n'-l} r_{t_0+n'} + \gamma^{n'-l+1} V_{\phi_{\text{old}}}(s_{T_{\text{train}}}), & \operatorname{otherwise}. \notag
	\end{cases}
\end{equation}
$\hat{\eta}_{t_0+l}$ is used as an estimate of the target value function $V^{\text{target}}_{\phi_{\text{old}}}(s_{t_0+l})$ in \eqref{eq:value_optimize}. 


For detailed training procedure of the RL method, we refer the readers to the pseudo code in Appendix \ref{appendix:ppo_training}.

\subsubsection{Value of Terminal State}

The terminal state of the episode in \eqref{equ:tuple_for_train} is $s_{t_0+n}$ if $t_0+n\leq T_{\text{train}}$ and $s_{T_{\text{train}}}$ otherwise. 
An important distinction between our RL method and most existing application of RL in computer games and robotics lies in the value function at the terminal state of each episode. Our RL method solves an infinite-horizon problem; hence, at the terminal state $s_{t_0+n}$ or $s_{T_{\text{train}}}$, the value function is never zero and should be estimated by the current value network $V_{\phi_{\text{old}}}(\cdot)$. In contrast, the training trajectory of computer game problems reaches a terminal state after a finite number of steps, and the value function at the terminal state is simply equal to zero.


\subsubsection{Evaluation of the Trading Strategy in the Testing Set}\label{subsubsec:evaluate_trading_strategy}

After training, we evaluate the performance of the RL method by the testing set. We can use either the stochastic action in Equation \eqref{eq:sto_action} or the deterministic action $a^d_t = \operatorname{clip}(\mu_{\theta_1}(s_t), -b, b)$ to decide the portfolio weights $w_t$ (and $f_t$ for the value-based tracking problem). 
Then, the performance of the RL method is evaluated by 
the return-based tracking error in \eqref{eq:cumulative_discounted_tracking_error} and the value-based tracking error in \eqref{eq:cumulative_discounted_tracking_error_vb} together with the cash flows in \eqref{eq:cash_fn} for the testing set. 

\section{Empirical Results}\label{sec:experiments}
In this section we present empirical results of the proposed RL method and compare with the majorization-minimization method (henceforth, the MM method) proposed in \citet{benidis2017sparse}. We compare the two methods in terms of return-based tracking error, value-based tracking error, cash flows, and transaction costs. According to \citet{benidis2017sparse}, the MM method outperforms some other methods such as the hybrid half-thresholding algorithm \citep{xu2015sparse}, the diversity method \citep{jansen2002optimal}, and the mixed-integer programming method \citep{fuller2008evolution}. Therefore, we choose the MM method as a benchmark to show the effectiveness of the RL method.

We will evaluate the performance of the RL method for tracking the S\&P 500 index, the S\&P 500 EWI, and the DJIA index, which are value-weighted, equal weighted, and price weighted, respectively. 

\subsection{Data and Setting of Out-of-sample Evaluation}\label{subsec:data}
The data set includes daily data of the closing prices and trading volumes of constituent stocks in the indices, the closing prices of the S\&P 500 index (from 01/02/1975 to 12/31/2021), the S\&P 500 EWI (from 02/02/2004 to 12/31/2021), and the DJIA index (from 01/02/1990 to 12/31/2021), and daily closing value of VIX index and T-Bill rates as described in Section \ref{subsubsec:states}. The data of prices, trading volumes, and T-Bill rates are obtained from the Center for Research in Security Prices, LLC (CRSP).\footnote{If the closing price on a day was not available, CRSP records the average of bid and ask prices with a leading dash so as to distinguish it from an actual closing price. We delete the dash and use the bid/ask average price as the missing closing price.} The VIX data are downloaded from the website of Chicago Board Options Exchange (CBOE). 

We use a rolling window approach illustrated in Figure \ref{fig: roll_windows} to evaluate the performance of the RL method. 
Taking the S\&P 500 index as an example, we first 
train the RL model using the data in the training window from 01/02/1990 to 01/04/2010; the tracking portfolio includes all those stocks that were in the S\&P 500 index during the full time period of the training window.  
We then evaluate the trained RL model in the testing window from 01/04/2010 to 01/03/2011. We then roll the training and testing window one year forward, retrain the model in the new training window and evaluate its performance in the new testing window, and so on.


Table \ref{table:data_info} and Table \ref{table:data_info_sp500_rb} show the details of date information and the number of stocks in the tracking portfolio for each training window and testing window for the three indices. The training window length is specified to be 20 years in all cases (Table \ref{table:data_info}) except that 30 years of training data is used for the return-based tracking problem for the S\&P 500 index (Table \ref{table:data_info_sp500_rb});\footnote{When using the 30-year training window, VIX data $I_{2, t}$ are excluded due to data unavailability, so the state $s_t = \left(I_{1,t}, I_{3, t}, I_{4, t}, I_{5, t} \right) \in \mathbb{R}^{n_sM \times (2N+2)}$.} if 20-year training data is not available, the training window contains all available data from the earliest possible starting date. For example, all the first five training windows for the S\&P 500 index start from 01/02/1990, and all the training windows for the S\&P 500 EWI start from 02/02/2004.

For the S\&P 500 index and the DJIA index, we test yearly from 01/03/2005 to 12/31/2021; for the S\&P 500 EWI, we skip the year 2005 and test yearly from 01/03/2006 to 12/31/2021 since the index price data are only available from 02/02/2004. 



\begin{table}[htbp]
	\centering
	\caption
	{\textbf{Time periods and number of stocks in the tracking portfolio for each training and testing windows for the S\&P 500 index, the S\&P 500 EWI, and the DJIA index}. 
		$N$ refers to the number of stocks in the tracking portfolio, which consists of all the stocks that were in the index during the full time period of the training window. }
	\resizebox{0.64\linewidth}{!}{ 
	\begin{tabular}{crrr}
			\toprule
			\multirow{1}{*}{Index} & \multicolumn{1}{c}{Training window} & \multicolumn{1}{c}{Testing window} & \multicolumn{1}{c}{$N$} \\
			\midrule
			\multirow{17}{*}{S\&P 500} & 01/02/1990 - 01/03/2005 & 01/03/2005 - 01/03/2006 & 261 \\
			& 01/02/1990 - 01/03/2006 & 01/03/2006 - 01/03/2007 & 246 \\
			& 01/02/1990 - 01/03/2007 & 01/03/2007 - 01/02/2008 & 228 \\
			& 01/02/1990 - 01/02/2008 & 01/02/2008 - 01/02/2009 & 223 \\
			& 01/02/1990 - 01/02/2009 & 01/02/2009 - 01/04/2010 & 218 \\
			& 01/02/1990 - 01/04/2010 & 01/04/2010 - 01/03/2011 & 213 \\
			& 01/02/1991 - 01/03/2011 & 01/03/2011 - 01/03/2012 & 219 \\
			& 01/02/1992 - 01/03/2012 & 01/03/2012 - 01/02/2013 & 228 \\
			& 01/04/1993 - 01/02/2013 & 01/02/2013 - 01/02/2014 & 239 \\
			& 01/03/1994 - 01/02/2014 & 01/02/2014 - 01/02/2015 & 249 \\
			& 01/03/1995 - 01/02/2015 & 01/02/2015 - 01/04/2016 & 244 \\
			& 01/02/1996 - 01/04/2016 & 01/04/2016 - 01/03/2017 & 257 \\
			& 01/02/1997 - 01/03/2017 & 01/03/2017 - 01/02/2018 & 262 \\
			& 01/02/1998 - 01/02/2018 & 01/02/2018 - 01/02/2019 & 286 \\
			& 01/04/1999 - 01/02/2019 & 01/02/2019 - 01/02/2020 & 295 \\
			& 01/03/2000 - 01/02/2020 & 01/02/2020 - 01/04/2021 & 308 \\
			& 01/03/2000 - 01/04/2021 & 01/04/2021 - 12/31/2021 & 318 \\
			\midrule
			\multirow{16}{*}{S\&P 500 EWI} & 02/02/2004 - 01/03/2006 & 01/03/2006 - 01/03/2007 & 246 \\
			& 02/02/2004 - 01/03/2007 & 01/03/2007 - 01/02/2008 & 228 \\
			& 02/02/2004 - 01/02/2008 & 01/02/2008 - 01/02/2009 & 223 \\
			& 02/02/2004 - 01/02/2009 & 01/02/2009 - 01/04/2010 & 218 \\
			& 02/02/2004 - 01/04/2010 & 01/04/2010 - 01/03/2011 & 213 \\
			& 02/02/2004 - 01/03/2011 & 01/03/2011 - 01/03/2012 & 219 \\
			& 02/02/2004 - 01/03/2012 & 01/03/2012 - 01/02/2013 & 228 \\
			& 02/02/2004 - 01/02/2013 & 01/02/2013 - 01/02/2014 & 239 \\
			& 02/02/2004 - 01/02/2014 & 01/02/2014 - 01/02/2015 & 249 \\
			& 02/02/2004 - 01/02/2015 & 01/02/2015 - 01/04/2016 & 244 \\
			& 02/02/2004 - 01/04/2016 & 01/04/2016 - 01/03/2017 & 257 \\
			& 02/02/2004 - 01/03/2017 & 01/03/2017 - 01/02/2018 & 262 \\
			& 02/02/2004 - 01/02/2018 & 01/02/2018 - 01/02/2019 & 286 \\
			& 02/02/2004 - 01/02/2019 & 01/02/2019 - 01/02/2020 & 295 \\
			& 02/02/2004 - 01/02/2020 & 01/02/2020 - 01/04/2021 & 308 \\
			& 02/02/2004 - 01/04/2021 & 01/04/2021 - 12/31/2021 & 318 \\
			\midrule
			\multirow{17}{*}{DJIA} & 01/02/1990 - 01/03/2005 & 01/03/2005 - 01/03/2006 & 23 \\
			& 01/02/1990 - 01/03/2006 & 01/03/2006 - 01/03/2007 & 23 \\
			& 01/02/1990 - 01/03/2007 & 01/03/2007 - 01/02/2008 & 23 \\
			& 01/02/1990 - 01/02/2008 & 01/02/2008 - 01/02/2009 & 23 \\
			& 01/02/1990 - 01/02/2009 & 01/02/2009 - 01/04/2010 & 21 \\
			& 01/02/1990 - 01/04/2010 & 01/04/2010 - 01/03/2011 & 21 \\
			& 01/02/1991 - 01/03/2011 & 01/03/2011 - 01/03/2012 & 22 \\
			& 01/02/1992 - 01/03/2012 & 01/03/2012 - 01/02/2013 & 22 \\
			& 01/04/1993 - 01/02/2013 & 01/02/2013 - 01/02/2014 & 22 \\
			& 01/03/1994 - 01/02/2014 & 01/02/2014 - 01/02/2015 & 21 \\
			& 01/03/1995 - 01/02/2015 & 01/02/2015 - 01/04/2016 & 21 \\
			& 01/02/1996 - 01/04/2016 & 01/04/2016 - 01/03/2017 & 22 \\
			& 01/02/1997 - 01/03/2017 & 01/03/2017 - 01/02/2018 & 21 \\
			& 01/02/1998 - 01/02/2018 & 01/02/2018 - 01/02/2019 & 22 \\
			& 01/04/1999 - 01/02/2019 & 01/02/2019 - 01/02/2020 & 22 \\
			& 01/03/2000 - 01/02/2020 & 01/02/2020 - 01/04/2021 & 23 \\
			& 01/03/2000 - 01/04/2021 & 01/04/2021 - 12/31/2021 & 24 \\
			\bottomrule
	\end{tabular}}
    \label{table:data_info}
	
\end{table}

\begin{table}[htbp]
	\centering 
	\caption
	{\textbf{Time periods and number of stocks in the tracking portfolio for each training and testing windows for the return-based tracking of S\&P 500 index}. $N$ refers to the number of stocks in the tracking portfolio, which consists of all the stocks that were in the index during the full time period of the training window.}
	\begin{tabular}{crrr}
			\toprule
			\multirow{1}{*}{Index} & \multicolumn{1}{c}{Training window} & \multicolumn{1}{c}{Testing window} & \multicolumn{1}{c}{$N$} \\
			\midrule
			\multirow{17}{*}{S\&P 500} & 01/02/1975 - 01/03/2005 & 01/03/2005 - 01/03/2006 & 90 \\
			& 01/02/1976 - 01/03/2006 & 01/03/2006 - 01/03/2007 & 92 \\
			& 01/03/1977 - 01/03/2007 & 01/03/2007 - 01/02/2008 & 85 \\
			& 01/03/1978 - 01/02/2008 & 01/02/2008 - 01/02/2009 & 88 \\
			& 01/02/1979 - 01/02/2009 & 01/02/2009 - 01/04/2010 & 100 \\
			& 01/02/1980 - 01/04/2010 & 01/04/2010 - 01/03/2011 & 105 \\
			& 01/02/1981 - 01/03/2011 & 01/03/2011 - 01/03/2012 & 110 \\
			& 01/04/1982 - 01/03/2012 & 01/03/2012 - 01/02/2013 & 114 \\
			& 01/03/1983 - 01/02/2013 & 01/02/2013 - 01/02/2014 & 133 \\
			& 01/03/1984 - 01/02/2014 & 01/02/2014 - 01/02/2015 & 136 \\
			& 01/02/1985 - 01/02/2015 & 01/02/2015 - 01/04/2016 & 140 \\
			& 01/02/1986 - 01/04/2016 & 01/04/2016 - 01/03/2017 & 142 \\
			& 01/02/1987 - 01/03/2017 & 01/03/2017 - 01/02/2018 & 152 \\
			& 01/04/1988 - 01/02/2018 & 01/02/2018 - 01/02/2019 & 157 \\
			& 01/03/1989 - 01/02/2019 & 01/02/2019 - 01/02/2020 & 159 \\
			& 01/02/1990 - 01/02/2020 & 01/02/2020 - 01/04/2021 & 155 \\
			& 01/02/1991 - 01/04/2021 & 01/04/2021 - 12/31/2021 & 161 \\
			\bottomrule
	\end{tabular}
	\label{table:data_info_sp500_rb}
\end{table}

\begin{figure}[htbp]
	{\includegraphics[width=1.0\textwidth]{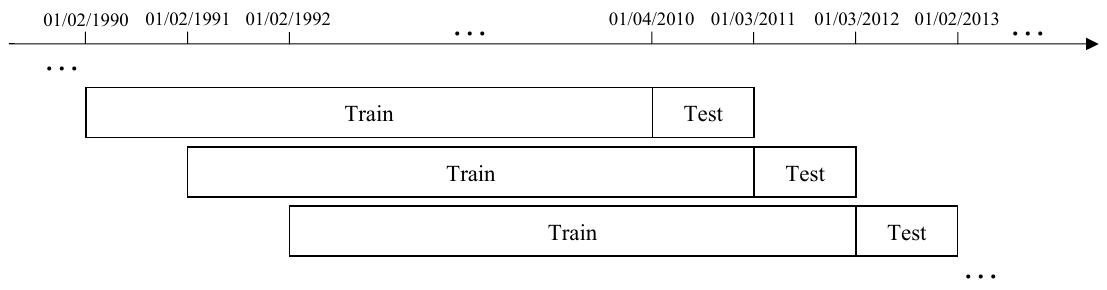}}
	\caption{\textbf{Rolling windows for training and testing}. \label{fig: roll_windows}}
\end{figure}

In all cases, the MM model is trained on a one-year training window right before the testing window. Empirically, the longer the training window is, the worse performance the MM model achieves. Hence, we train the MM model on the training window with a length equal to that of the testing window. For a fair comparison with the RL method, the tracking portfolio under the MM method is also rebalanced daily in order to keep the portfolio weights be identical  to the weights determined at the beginning of the testing window, and the rebalancing is carried out in the same way as that in the RL method. We use the original code provided by \citet{benidis2017sparse}\footnote{The code can be downloaded at \url{https://github.com/dppalomar/sparseIndexTracking}.} to obtain portfolio weights under the MM method.

On each testing window, we use the deterministic actions predicted by the RL method as defined in Section \ref{subsubsec:evaluate_trading_strategy} to obtain the out-of-sample tracking error. All the tests were performed on a high performance computing clusters.

\subsection{Parameters}

We specify the time unit for the  experiments as follows.  
$(i)$ The time unit for the return-based tracking of the S\&P 500 index is half a year, so $M = 126$. $(ii)$ The time unit for the value-based tracking of the DJIA index is one month, so $M = 21$. 
$(iii)$ The time unit for all other cases is a quarter, so $M=63$. Then, we specify parameter $n_s$, the number of time periods included in each state variable, and the episode length $n$ to be $n_s = n = \frac{252}{M}$. Then, the state $s_t$ contains daily information of the past $n_s\times M=252$ trading days.


The MM method uses a parameter $\lambda_1$ in the objective function to control the sparsity of the portfolio, i.e., to limit the number of stocks in the portfolio. Intuitively, with higher sparsity one could achieve lower transaction costs but reduce tracking accuracy. 
To ensure a fair comparison between the two methods, we use an extremely small value 1.000E-20 of the $\lambda_1$ in the MM method so that it achieves the smallest tracking error. Besides, for practical purpose, we adopt a threshold parameter with value 1.000E-09 (the default value of the MM method); stock weights below or equal to the threshold are set to be zero.

Table \ref{table:parameters} lists the main parameters used in the empirical evaluation. 

\begin{table}[htbp]
	\centering 
	\caption
	{\textbf{Parameters for the proposed RL method and the MM method}. \label{table:parameters}}
	\resizebox{\linewidth}{!}{ 	
	\begin{tabular}{clll}
			\toprule
			\multicolumn{1}{c}{Method} & \multicolumn{1}{c}{Parameter} & \multicolumn{1}{c}{Value} & \multicolumn{1}{c}{Usage} \\
			\midrule
			\multirow{15}{*}{RL} & $V_{0-}$  & 2.000E+10 & Initial portfolio wealth \\
			& $x_{0-}$  & $\mathbf{0}$ & Initial shares of stocks \\			
			& $\epsilon$ & 0.2 & Clipping parameter in Equation \eqref{eq:ppo_obj} \\
			& $q$ & 2 & Power in Equation \eqref{eq:te} and \eqref{eq:te_value_based} \\
			& $e_1$ & 0.5 & Weight of value loss in Equation \eqref{eq:total_loss}\\
			& $e_2$ & 0 & Weight of entropy loss in Equation \eqref{eq:total_loss}\\			
			& $b_f$ & 0.5 & Constant in Equation \eqref{eq:cash_fn} \\
			& $b$ & 1.0 & Upper bound of actions in Equation \eqref{eq:sto_action} \\
			& $\zeta$ & 0.25 & Probability parameter in Equation \eqref{equ:t_0_dist}\\
			& $T_0$ & $\frac{1}{M}\lceil \frac{3}{4} (M\times T_{\text{train}}-1) \rceil$ & Parameter in Equation \eqref{equ:t_0_dist}\\            
			& $\tilde M$ & 8 & Number of paralell agents in RL training Algorithm \ref{code:ppo_ts}\\
			& $H$ & 200,000 & Number of epochs in RL training Algorithm \ref{code:ppo_ts}\\
			& $K$ & 50 & Number of episodes per epoch in RL training Algorithm \ref{code:ppo_ts} \\
			& $\tilde n$ & 64 & Minibatch size in RL training Algorithm \ref{code:ppo_ts}\\
			& \multirow{2}{*}{$\beta$} & 1000 & Parameter to rescale rewards in Equation \eqref{eq:reward_rb} \\ & & 0.001 & Parameter to rescale rewards in Equation \eqref{eq:reward_vb} \\
			& learning rate & 1.000E-05 & Learning rate for training \\
			& $\gamma$ & 0.99 & Parameter in Equation \eqref{eq:advantage_estimator} \\
			& $\lambda$ & 0.95 & Parameter in Equation \eqref{eq:advantage_estimator} \\
			& threshold & 1.000E-09 & Stock weights below or equal to the threshold are set to 0 \\
			\midrule
			\multirow{4}{*}{MM} & $\lambda_{1}$ & 1.000E-20 & Sparsity control parameter \\
			& $u$ & 0.5 & Upper bound of the weights \\
			& measure & mean square error & Tracking error measure used in training\\
			& threshold & 1.000E-09 & Stock weights below or equal to the threshold are set to 0 \\
			\bottomrule
	\end{tabular}}
\end{table}

\subsection{Out-of-Sample Performance Metrics}
For each testing window $(t_1, t_2]$ in Table \ref{table:data_info}, 
the out-of-sample return-based tracking error and value-based tracking error are respectively defined in \eqref{eq:te} and \eqref{eq:te_value_based}.

%

For the value-based tracking problem, we are also interested in the total cash injection over the testing window $(t_1, t_2]$:
\begin{equation}
	\text{CF} = \sum_{k=1}^{M (t_2 - t_1)-1} h_{t_1+\frac{k}{M}}, \notag
\end{equation}
where $h_{t_1+\frac{k}{M}}$ is the cash inflow defined in Equation \eqref{eq:cash_fn}. The cash injection with interest rate adjustment is $\text{CF-adj}=\sum_{k=1}^{M (t_2 - t_1)-1} \left[\prod_{j=k}^{M (t_2 - t_1)-1}(1+r_{t_1+\frac{j}{M}})\right] h_{t_1+\frac{k}{M}}$, where $r_{t_1+\frac{j}{M}}$ is the simple interest rate from $t_1+\frac{j}{M}$ to $t_1+\frac{j+1}{M}$. Negative value of $\text{CF}$ or $\text{CF-adj}$ means the fund manager can make extra profit by withdrawing cash from the tracking portfolio over the testing window. 

\subsection{Empirical Performance of the RL Approach for the Return-based Problem}

First, the training process of the RL method for the return-based problem is very stable. For example, 
Figure \ref{fig:rb_sp500_loss_reward} shows the learning curves with respect to training losses and cumulative rewards on  the training window 01/02/1998-01/02/2018 for the return-based tracking of S\&P 500 index. It is evident that training losses and cumulative rewards converge stably. Similar results for the S\&P 500 EWI and the DJIA index can be found in Appendix \ref{appendix:learning_curves}.

\begin{figure}[htbp]
	\centering 
	{\includegraphics[width=0.5\linewidth]{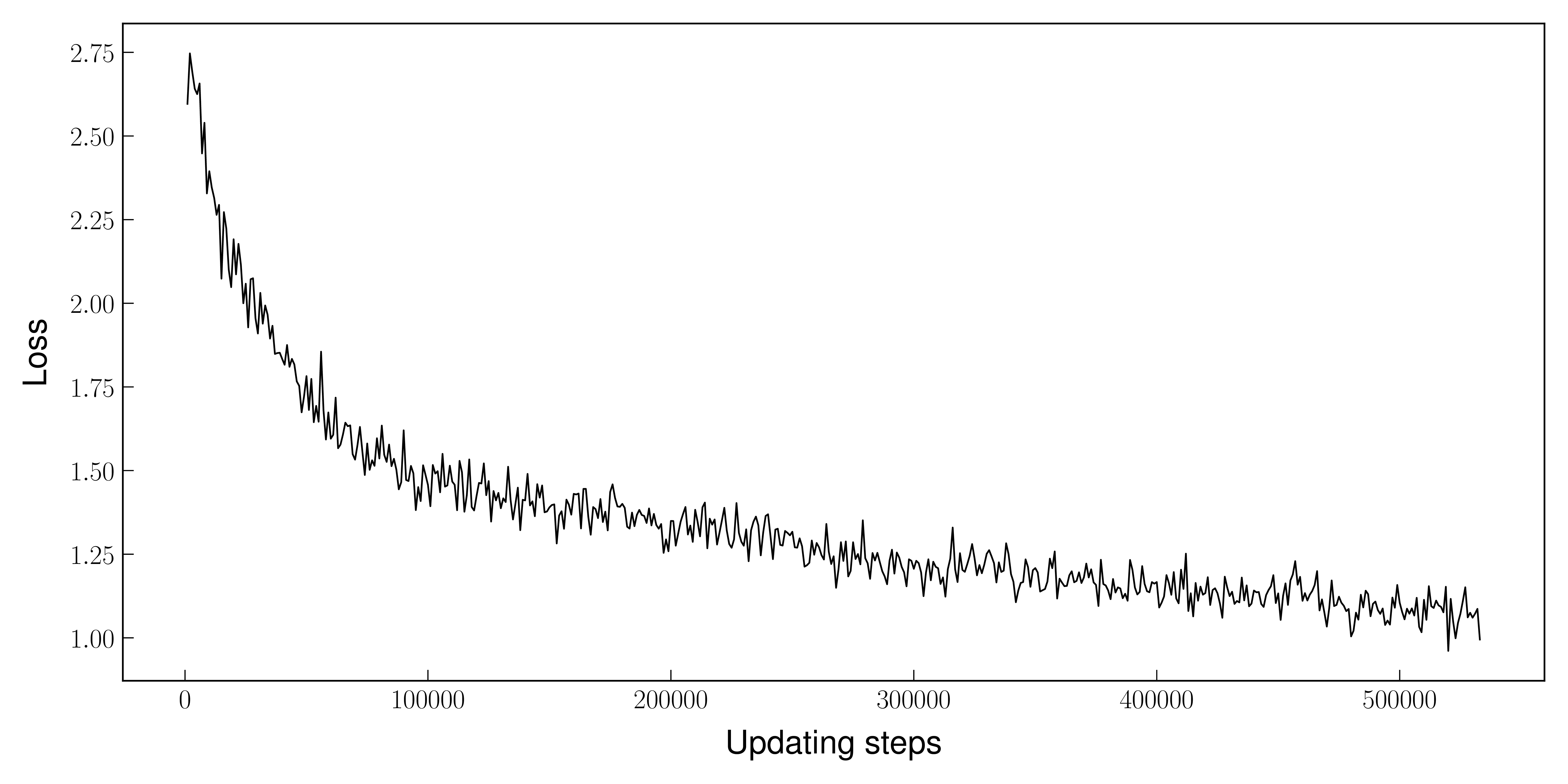}\includegraphics[width=0.5\linewidth]{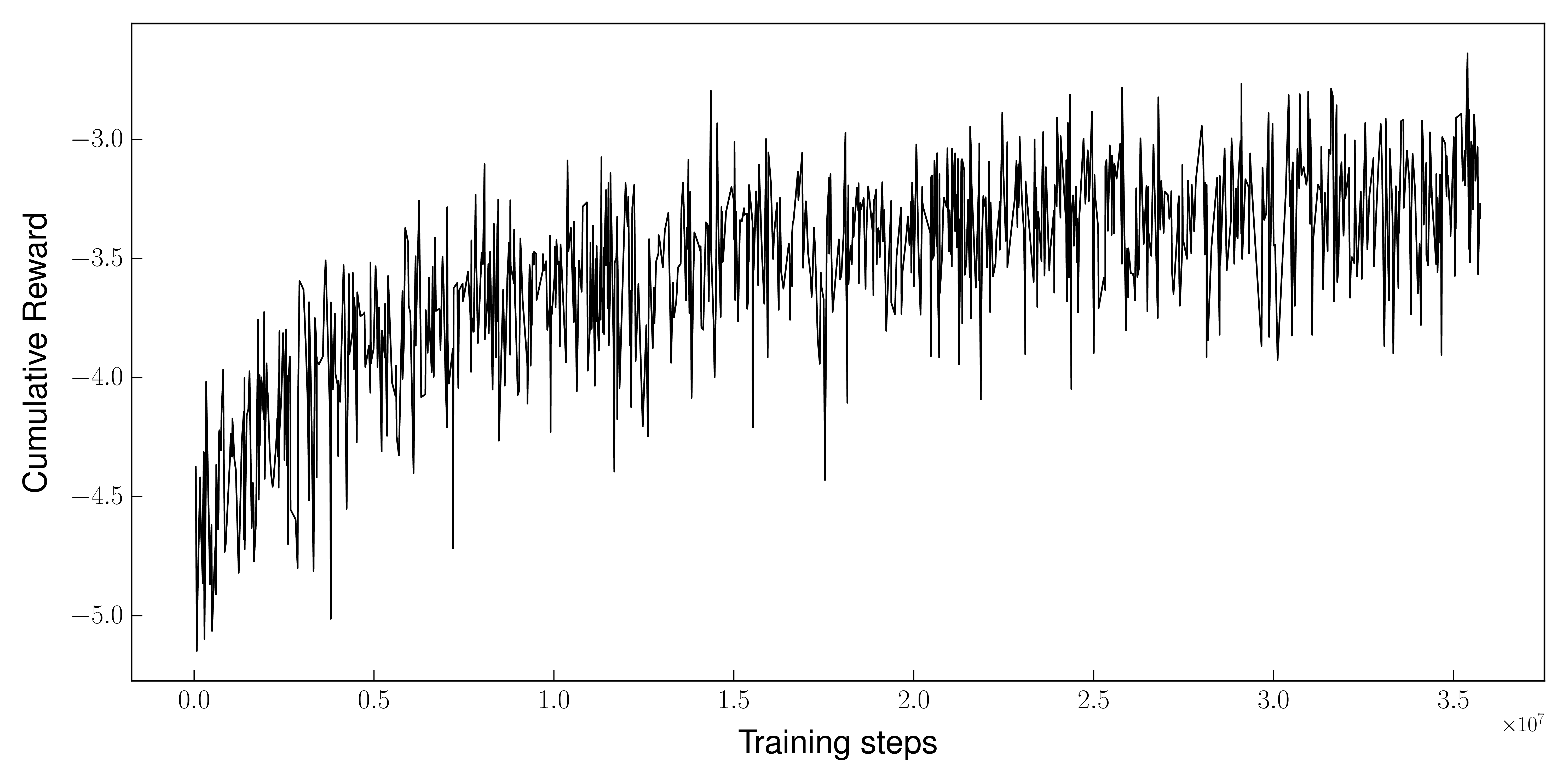}}
	\caption{\textbf{Learning curves for the return-based tracking of S\&P 500 index on the training window 01/02/1998-01/02/2018}. The $x$-axis and the $y$-axis respectively represent the updating step and the loss defined in Equation \eqref{eq:total_loss} in the left subfigure, and the training step and the cumulative reward in the right subfigure.}
	\label{fig:rb_sp500_loss_reward}
\end{figure}

Table \ref{tab:rb_te} shows the out-of-sample return-based tracking error (R-TE) of the proposed RL method and the MM method for each testing window and 
for each of the S\&P 500 index, S\&P 500 EWI, and DJIA index. For all indices, the RL method achieves a lower mean R-TE across the testing years than the MM method. The RL method also achieves a lower standard error of R-TE across the testing years than the MM method for the S\&P 500 EWI and the DJIA index. 

Figure \ref{fig:rb_sp500}, \ref{fig:rb_ewi}, and \ref{fig:rb_djia} show the out-of-sample return-based tracking errors in each testing year of the proposed RL method and the MM method respectively for the three indices. 
The three figures show that the proposed RL method achieves lower out-of-sample return-based tracking error than the MM method in 2008 and 2009, the financial crisis time period, for all three indices.
In particular, the market condition in 2008 was dramatically worse than that in 2007, with VIX index reached its peak value 80.86 on 20 November 2008. The reason that the MM method did not perform well in 2008 may be because the portfolio weights were obtained only based on the data in 2007. In contrast, the RL model for the testing year 2008 was trained based on the data during the training window from 1990 to 2007, which include two other recessions in 1990 and 2002. The RL model may better learn the market dynamics from a much longer time period of data so that it can perform better than the MM method for the testing year of 2008. 


\begin{table}[htbp]
	\centering 
	\caption
	{\textbf{Out-of-sample return-based tracking error (R-TE) of the RL method and the MM method for the S\&P 500 index, S\&P 500 EWI, and DJIA index}. In the last two rows, the mean and stderr refer to the mean and standard error of the R-TE across the testing years from 2005 to 2021 in each column, respectively.	
	}
	{\begin{tabular}{crrrrrr}
			\toprule
			\multirow{2}{*}{Testing year} & \multicolumn{2}{c}{S\&P 500} & \multicolumn{2}{c}{S\&P 500 EWI} & \multicolumn{2}{c}{DJIA} \\
			\cmidrule(lr){2-3}\cmidrule(lr){4-5}\cmidrule(lr){6-7}
			& \multicolumn{1}{c}{RL} & \multicolumn{1}{c}{MM} & \multicolumn{1}{c}{RL} & \multicolumn{1}{c}{MM} & \multicolumn{1}{c}{RL} & \multicolumn{1}{c}{MM}\\
			\midrule
			2005  & 1.483E-03 & 1.885E-03 &   N.A.   &   N.A.   & 2.834E-03 & 3.081E-03 \\
			2006  & 1.672E-03 & 2.894E-03 & 1.800E-03 & 3.409E-03 & 1.356E-03 & 1.385E-03 \\
			2007  & 2.060E-03 & 1.891E-03 & 1.363E-03 & 2.262E-03 & 1.020E-03 & 1.038E-03 \\
			2008  & 4.175E-03 & 5.452E-03 & 2.404E-03 & 2.999E-03 & 4.398E-03 & 4.573E-03 \\
			2009  & 3.800E-03 & 4.053E-03 & 5.119E-03 & 5.753E-03 & 3.220E-03 & 3.986E-03 \\
			2010  & 1.607E-03 & 1.865E-03 & 1.153E-03 & 1.434E-03 & 1.041E-03 & 1.113E-03 \\
			2011  & 1.693E-03 & 1.896E-03 & 1.106E-03 & 1.374E-03 & 1.310E-03 & 1.315E-03 \\
			2012  & 1.895E-03 & 2.185E-03 & 1.033E-03 & 1.123E-03 & 2.397E-03 & 2.556E-03 \\
			2013  & 1.240E-03 & 1.479E-03 & 8.700E-04 & 1.249E-03 & 1.139E-03 & 1.144E-03 \\
			2014  & 1.447E-03 & 2.172E-03 & 8.989E-04 & 1.053E-03 & 1.644E-03 & 1.579E-03 \\
			2015  & 1.876E-03 & 1.785E-03 & 8.992E-04 & 1.078E-03 & 1.900E-03 & 2.106E-03 \\
			2016  & 1.659E-03 & 1.483E-03 & 8.584E-04 & 1.127E-03 & 1.432E-03 & 1.319E-03 \\
			2017  & 1.569E-03 & 1.403E-03 & 9.164E-04 & 1.076E-03 & 1.433E-03 & 1.137E-03 \\
			2018  & 2.410E-03 & 1.769E-03 & 8.799E-04 & 9.845E-04 & 1.624E-03 & 1.304E-03 \\
			2019  & 2.597E-03 & 1.502E-03 & 1.344E-03 & 9.148E-04 & 1.248E-03 & 1.148E-03 \\
			2020  & 5.975E-03 & 5.001E-03 & 2.426E-03 & 2.060E-03 & 3.551E-03 & 4.309E-03 \\
			2021  & 3.260E-03 & 2.283E-03 & 1.318E-03 & 1.399E-03 & 1.320E-03 & 1.346E-03 \\
			\midrule
			mean  & 2.378E-03 & 2.412E-03 & 1.524E-03 & 1.831E-03 & 1.933E-03 & 2.026E-03 \\
			stderr & 3.050E-04 & 3.001E-04 & 2.714E-04 & 3.206E-04 & 2.417E-04 & 2.951E-04 \\
			\bottomrule
	\end{tabular}}
	\label{tab:rb_te}
\end{table}

\begin{figure}[htbp]
	\centering 
	{\includegraphics[width=1\linewidth]{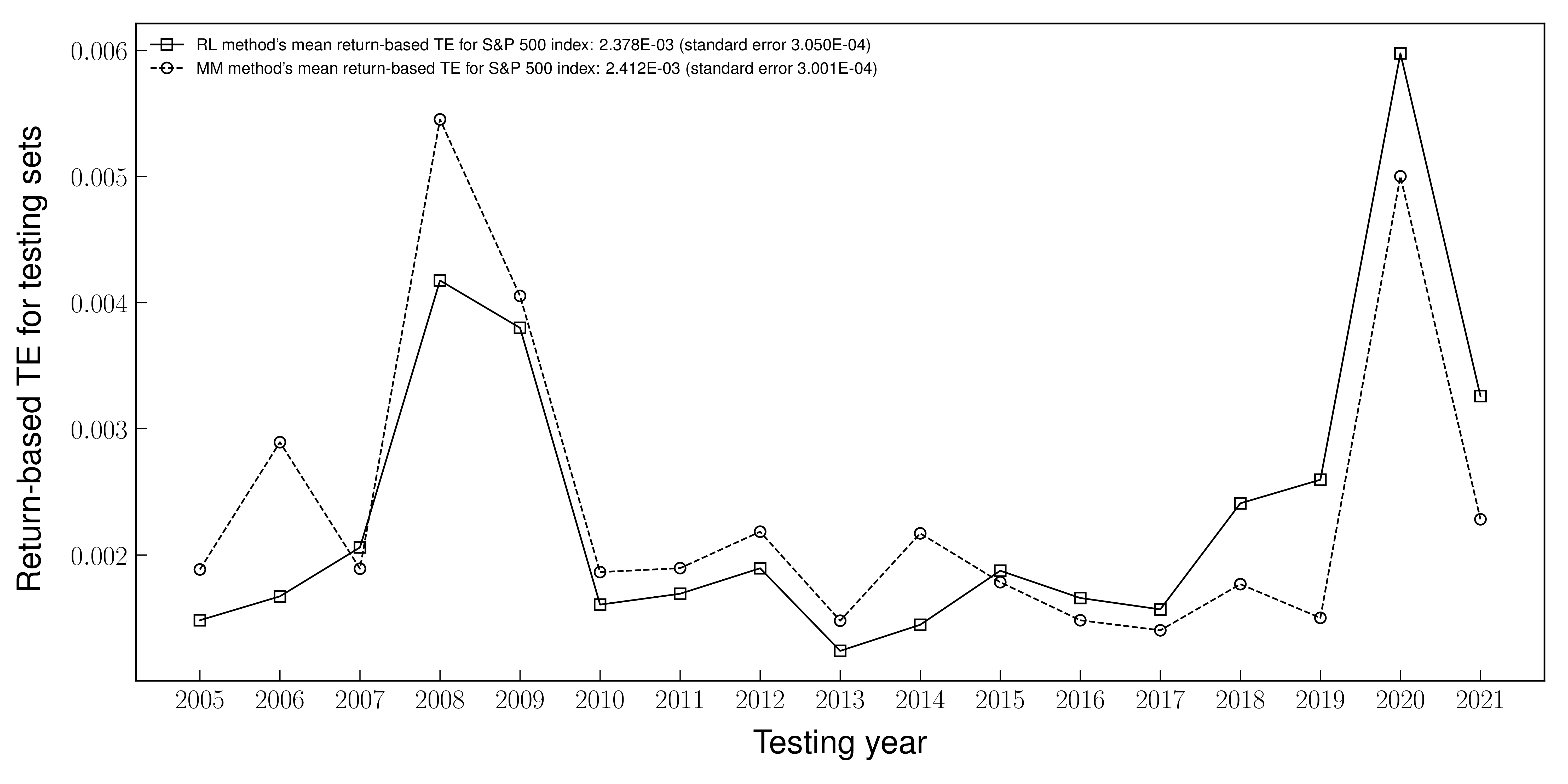}}
	\caption{\textbf{Out-of-sample return-based tracking error (R-TE) of the proposed RL method and the MM method from 2005 to 2021 for the S\&P 500 index}.
	The mean and standard error of the proposed RL method's R-TE across the testing years are 2.378E-03 and 3.050E-04, respectively. The mean and standard error of the MM method's R-TE across the testing years are 2.412E-03 and 3.001E-04, respectively.}
	\label{fig:rb_sp500}
\end{figure}

\begin{figure}[htbp]
	{\includegraphics[width=1\linewidth]{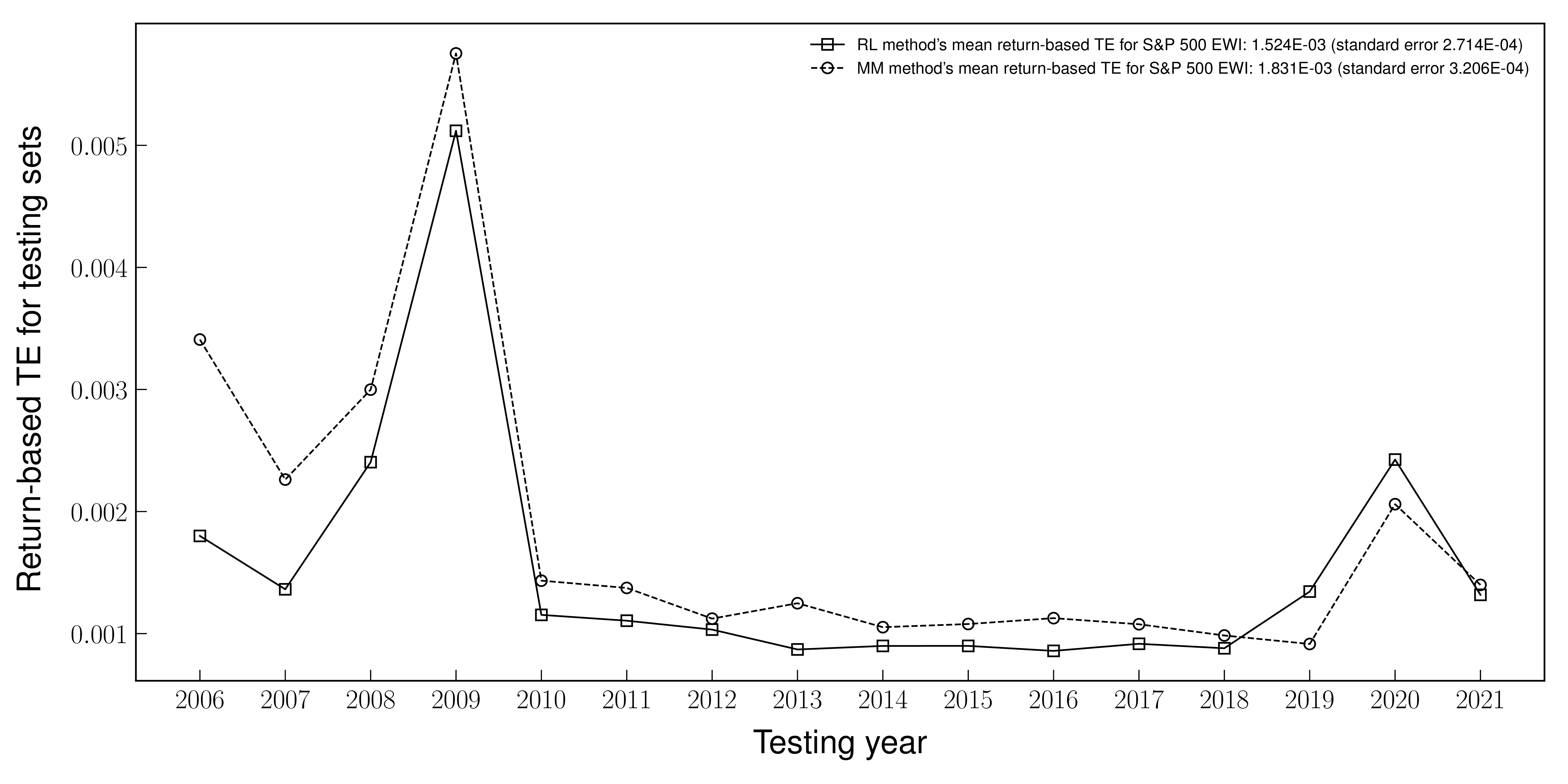}}
	\caption{\textbf{Out-of-sample return-based tracking error (R-TE) of the proposed RL method and the MM method from 2006 to 2021 for the S\&P 500 EWI}. 
	The mean and standard error of the proposed RL method's R-TE across the testing years are 1.524E-03 and 2.714E-04, respectively. The mean and standard error of the MM method's R-TE across the testing years are 1.831E-03 and 3.206E-04, respectively.}
	\label{fig:rb_ewi}
\end{figure}

\begin{figure}[htbp]
	{\includegraphics[width=1\linewidth]{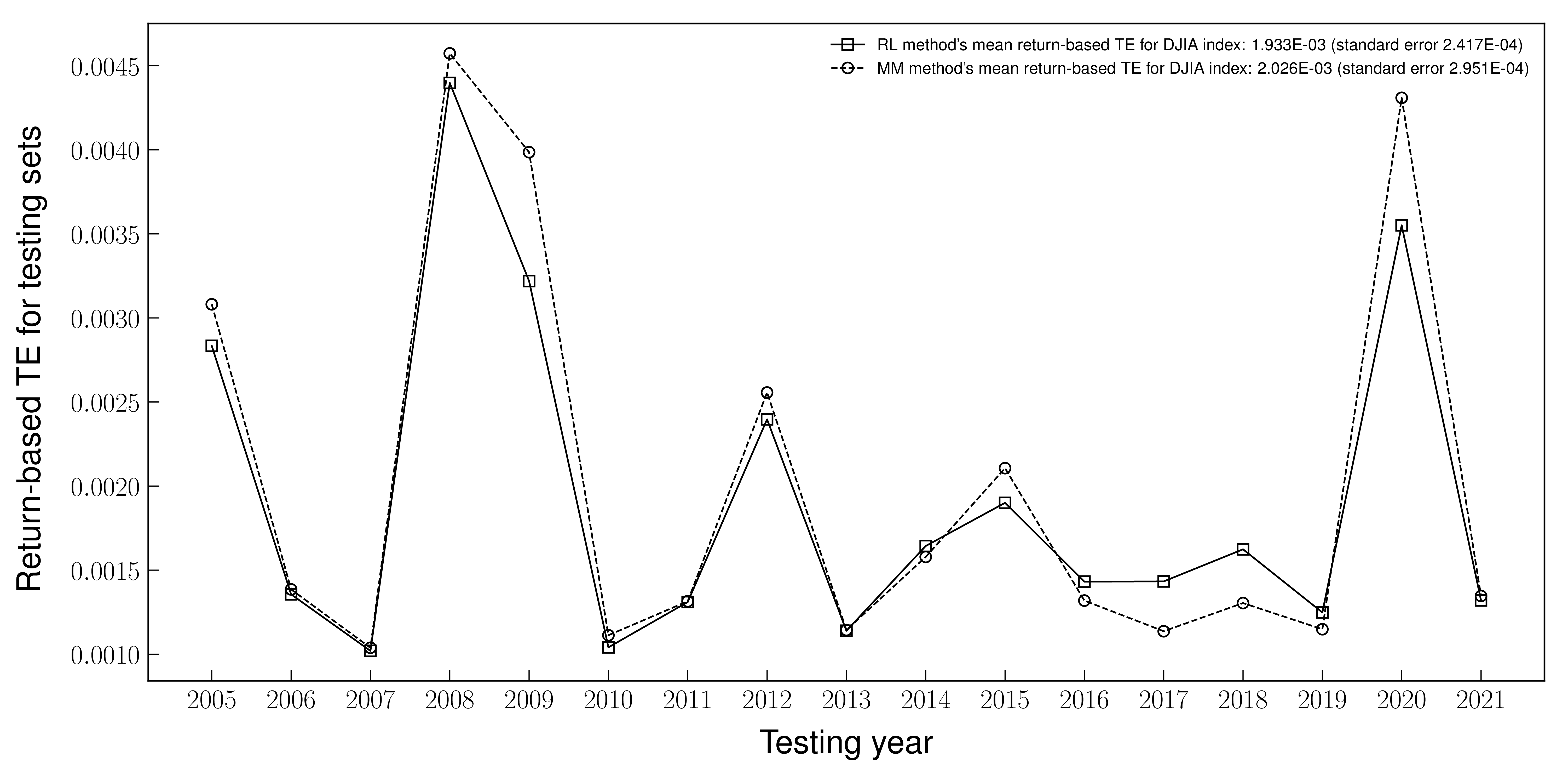}}
	\caption{\textbf{Out-of-sample return-based tracking error (R-TE) of the proposed RL method and the MM method from 2005 to 2021 for the DJIA Index}. 
	The mean and standard error of the proposed RL method's R-TE across the testing years are 1.933E-03 and 2.417E-04, respectively. The mean and standard error of the MM method's R-TE across the testing years are 2.026E-03 and 2.951E-04, respectively.}
	\label{fig:rb_djia}
\end{figure}

\subsection{Empirical Performance of the RL Approach for the Value-based Problem}

First, similar to the return-based problem, the training process of the RL method for the value-based problem is very stable. For example, Figure \ref{fig:vb_sp500_loss_reward} shows the learning curves with respect to training losses and cumulative rewards on the training window 01/02/1998-01/02/2018 for the value-based tracking of S\&P 500 index. It is evident that training losses and cumulative rewards converge stably. Similar results for the S\&P 500 EWI and DJIA index can be found in Appendix \ref{appendix:learning_curves}.

\begin{figure}[htbp]
	{\includegraphics[width=0.5\linewidth]{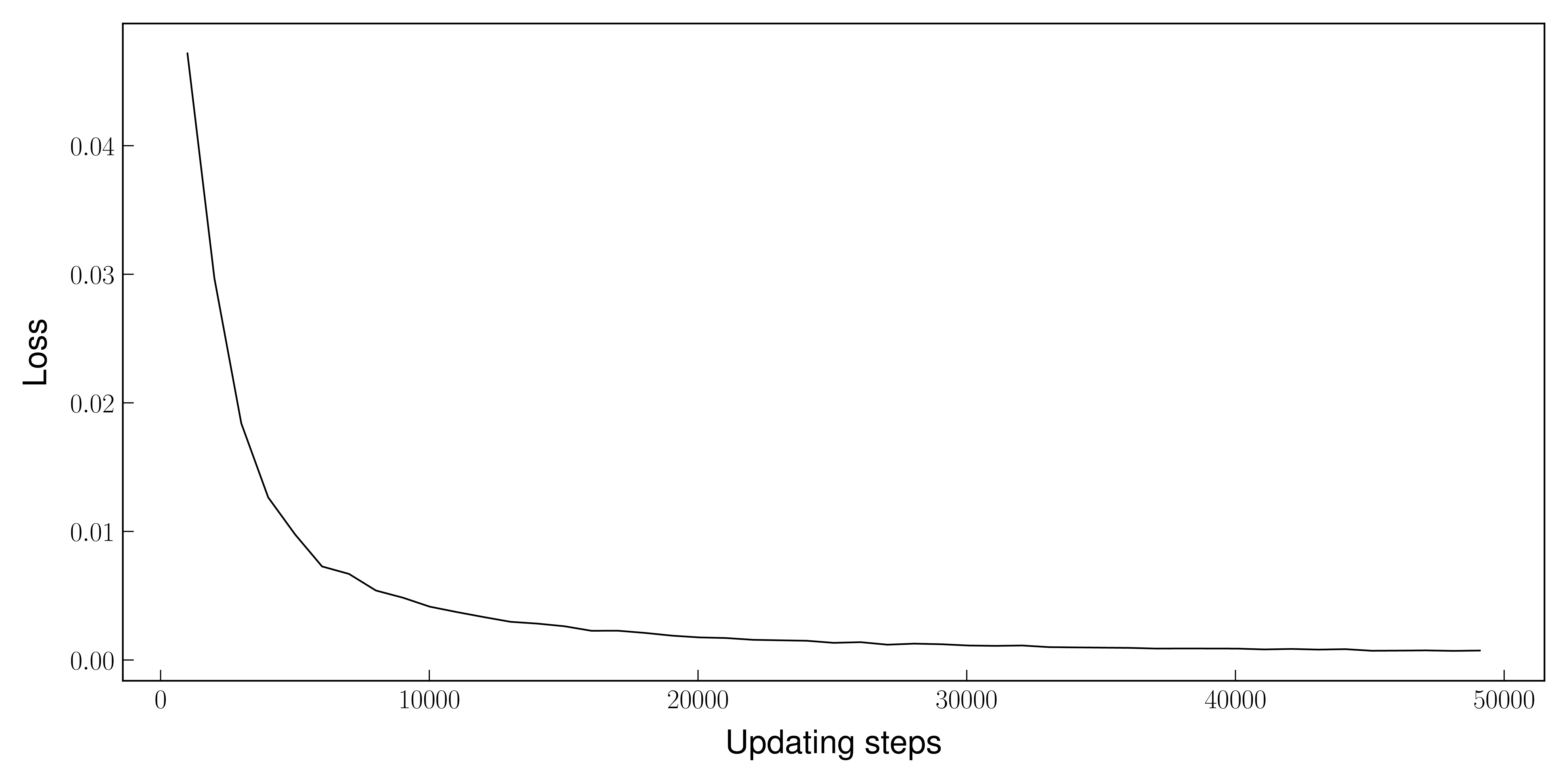}\includegraphics[width=0.5\linewidth]{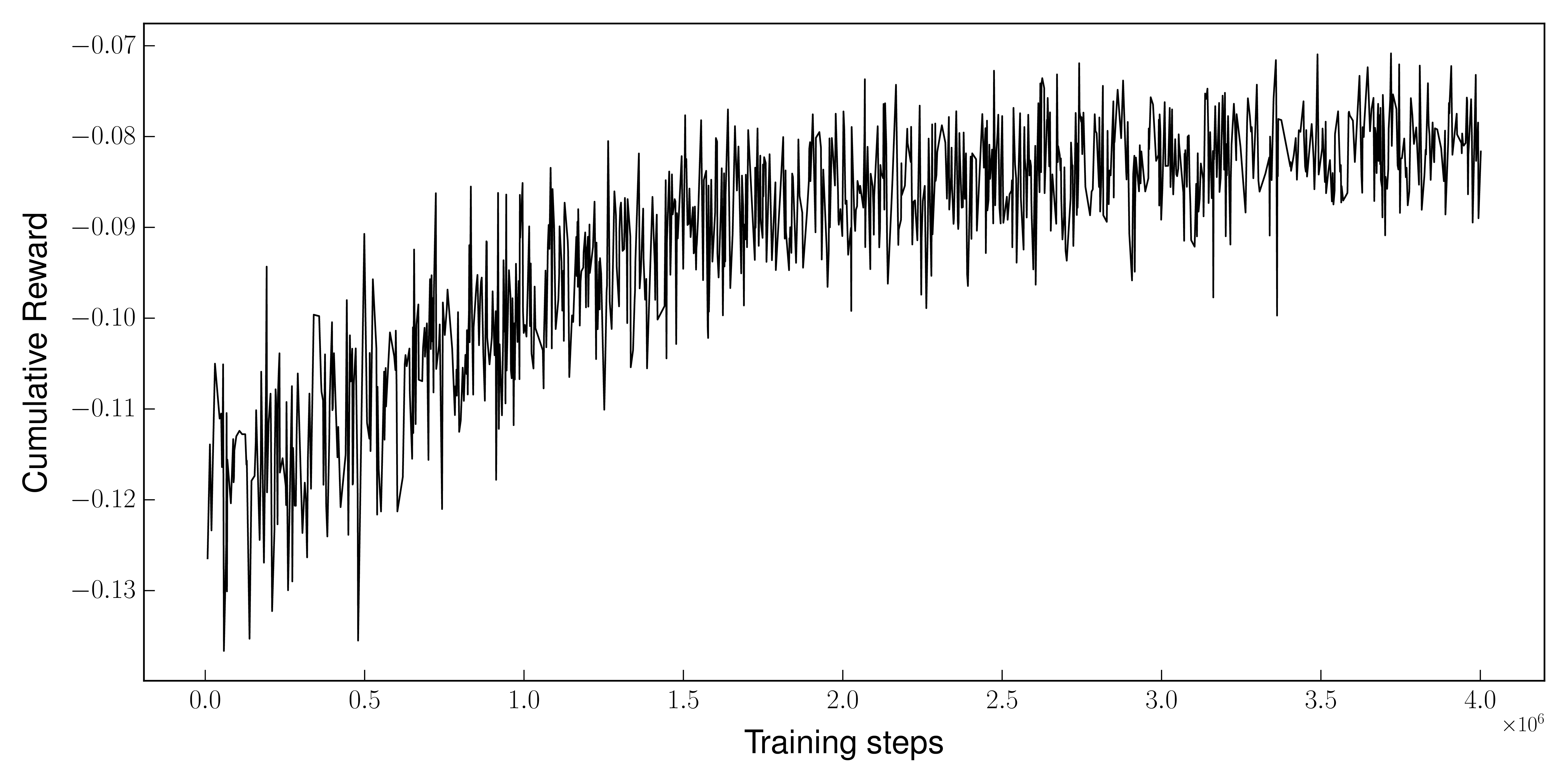}}
	\caption{\textbf{Learning curves of the proposed RL method on the training window 01/02/1998-01/02/2018 for the value-based tracking of S\&P 500 index}.
	The $x$-axis and the $y$-axis respectively represent the updating step and the loss defined in Equation \eqref{eq:total_loss} in the left subfigure, and the training step and the cumulative reward in the right subfigure.}
	\label{fig:vb_sp500_loss_reward}
\end{figure}

Table \ref{tab:vb_sp500_te} and Figure \ref{fig:vb_sp500} 
show the out-of-sample value-based tracking error (V-TE) of the proposed RL method and the MM method for the S\&P 500 index. 
Table \ref{tab:vb_sp500_te} also shows the cash flow (CF), its ratio to the initial wealth (CF/$V_{0-}$), the cash flow with interest rate adjustment (CF-adj), and its ratio to the initial wealth (CF-adj/$V_{0-}$) of the RL method for the S\&P 500 index. 
It is evident that: $(i)$ The RL method achieves about 82\% lower mean value-based tracking error across the testing years than the MM method for the S\&P 500 index. $(ii)$ The RL method achieves lower value-based tracking error than the MM method for each testing year. $(iii)$ The RL method performs stably across the testing years including the 2007 financial crisis. It obtains about 82\% lower standard error of value-based tracking error across the testing years than the MM method. $(iv)$ The mean cash flow without and with interest rate adjustment across the testing years is negative, indicating that the proposed RL method may earn extra profit by withdrawing cash from the tracking portfolio. 


\begin{table}[htbp]
	\caption
	{\textbf{Out-of-sample value-based tracking error (V-TE) and cash flow (CF) of the proposed RL method and the MM method for the S\&P 500 index}. The initial wealth $V_{0-}$ of each testing year is USD 20 billion. 
	CF-adj is the cash flow with interest rate adjustment. The cash flow for the MM method is zero. 		
	In the last two rows, mean and stderr respectively refer to the mean and standard error of the values across the testing years from 2005 to 2021 in each column.	
	 }
	\resizebox{\linewidth}{!}{
	\begin{tabular}{crrrrrr}
			\toprule
			\multirow{2}{*}{Testing year} & \multicolumn{2}{c}{V-TE} & \multicolumn{1}{c}{CF} & \multicolumn{1}{c}{CF-adj} & \multicolumn{1}{c}{CF/$V_{0-}$} & \multicolumn{1}{c}{CF-adj/$V_{0-}$} \\
			\cmidrule(lr){2-3}\cmidrule(lr){4-4}\cmidrule(lr){5-5}\cmidrule(lr){6-6}\cmidrule(lr){7-7}
			& \multicolumn{1}{c}{RL} & \multicolumn{1}{c}{MM} & \multicolumn{1}{c}{RL} & \multicolumn{1}{c}{RL} & \multicolumn{1}{c}{RL} & \multicolumn{1}{c}{RL}\\
			\midrule
			2005  & 2.855E+00 & 2.629E+01 & -4.605E+08 & -4.663E+08 & -2.303E-02 & -2.332E-02 \\
			2006  & 4.647E+00 & 7.175E+01 & 1.517E+08 & 1.612E+08 & 7.587E-03 & 8.061E-03 \\
			2007  & 4.009E+00 & 4.934E+01 & 1.286E+09 & 1.289E+09 & 6.428E-02 & 6.447E-02 \\
			2008  & 5.875E+00 & 1.964E+01 & -1.197E+09 & -1.209E+09 & -5.984E-02 & -6.045E-02 \\
			2009  & 9.996E+00 & 1.806E+02 & -4.273E+09 & -4.276E+09 & -2.136E-01 & -2.138E-01 \\
			2010  & 2.494E+00 & 2.357E+01 & -7.825E+08 & -7.837E+08 & -3.913E-02 & -3.918E-02 \\
			2011  & 2.944E+00 & 1.460E+01 & 3.314E+08 & 3.317E+08 & 1.657E-02 & 1.659E-02 \\
			2012  & 3.611E+00 & 9.055E+00 & -1.242E+07 & -1.211E+07 & -6.209E-04 & -6.055E-04 \\
			2013  & 3.052E+00 & 1.269E+01 & -4.921E+08 & -4.926E+08 & -2.461E-02 & -2.463E-02 \\
			2014  & 3.746E+00 & 3.168E+01 & 4.301E+08 & 4.302E+08 & 2.150E-02 & 2.151E-02 \\
			2015  & 6.225E+00 & 3.524E+01 & 1.374E+09 & 1.374E+09 & 6.868E-02 & 6.871E-02 \\
			2016  & 7.365E+00 & 1.067E+01 & -6.939E+08 & -6.953E+08 & -3.469E-02 & -3.476E-02 \\
			2017  & 6.470E+00 & 7.372E+01 & 7.118E+08 & 7.155E+08 & 3.559E-02 & 3.578E-02 \\
			2018  & 8.676E+00 & 1.816E+01 & 1.267E+09 & 1.277E+09 & 6.337E-02 & 6.385E-02 \\
			2019  & 1.158E+01 & 4.369E+01 & -5.858E+08 & -5.943E+08 & -2.929E-02 & -2.971E-02 \\
			2020  & 3.039E+01 & 5.703E+01 & 4.522E+08 & 4.531E+08 & 2.261E-02 & 2.266E-02 \\
			2021  & 2.589E+01 & 3.130E+01 & -1.356E+09 & -1.357E+09 & -6.781E-02 & -6.785E-02 \\
			\midrule
			mean  & 8.225E+00 & 4.171E+01 & -2.264E+08 & -2.267E+08 & -1.132E-02 & -1.133E-02 \\
			stderr & 1.936E+00 & 9.960E+00 & 3.250E+08 & 3.256E+08 & 1.625E-02 & 1.628E-02 \\
			\bottomrule
	\end{tabular}}
	\label{tab:vb_sp500_te}
\end{table}

\begin{figure}[htbp]
	{\includegraphics[width=1\linewidth]{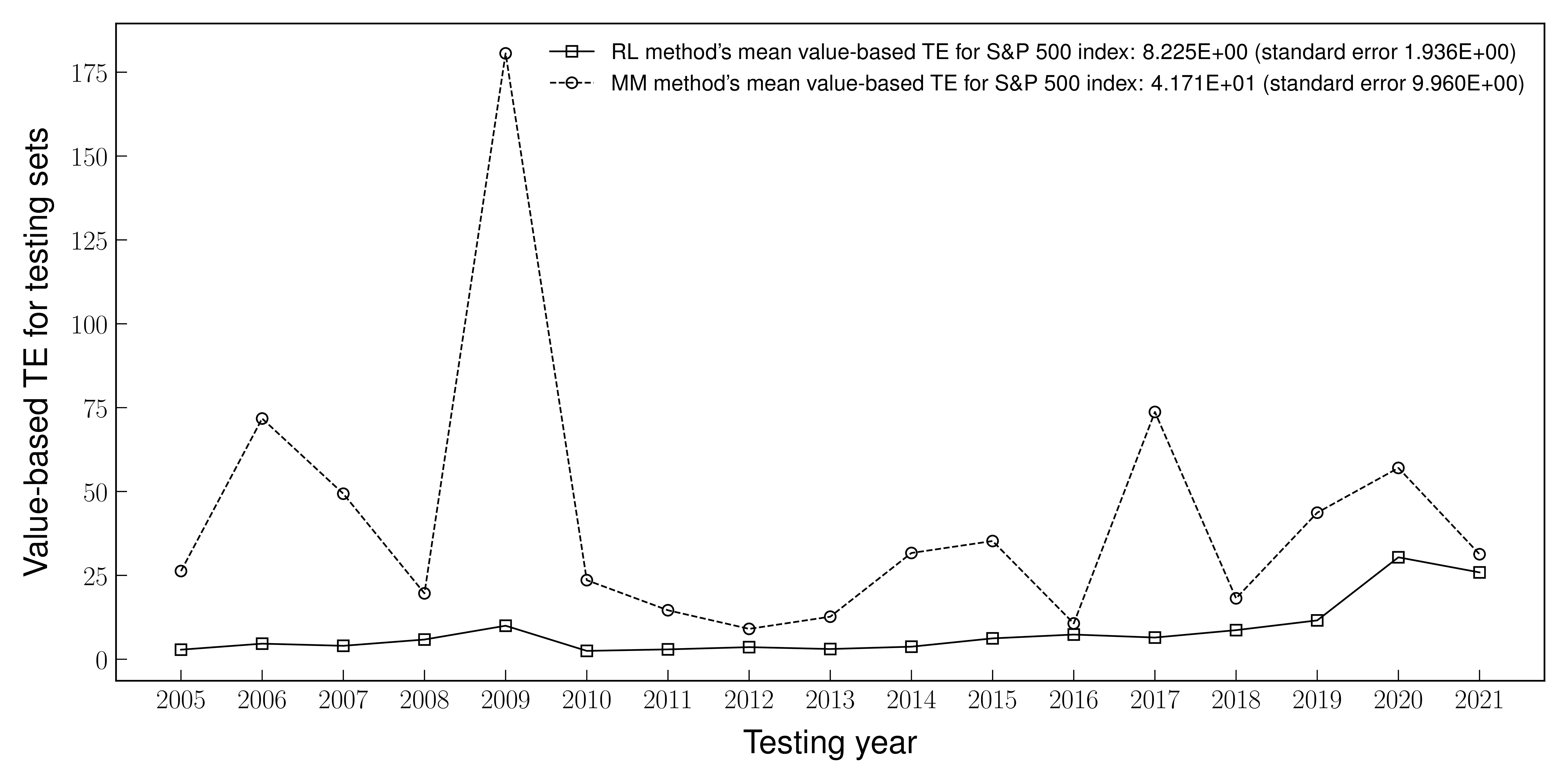}}
	\caption{\textbf{Out-of-sample value-based tracking error (V-TE) of the proposed RL method and the MM method from 2005 to 2021 for the S\&P 500 index}. 
	The mean and standard error of the proposed RL method's R-TE  across the testing years  are 7.311E+00 and 1.756E+00, respectively. The mean and standard error of the MM method's R-TE across the testing years  are 4.171E+01 and 9.960E+00, respectively.}
	\label{fig:vb_sp500}
\end{figure}

Table \ref{tab:vb_ewi_te} and Figure \ref{fig:vb_ewi} show the out-of-sample value-base tracking error (V-TE)
of the proposed RL method and the MM method for the S\&P 500 EWI. Table \ref{tab:vb_ewi_te} also shows the cash flow (CF), its ratio to the initial wealth (CF/$V_{0-}$), the cash flow with interest rate adjustment (CF-adj), and its ratio to the initial wealth (CF-adj/$V_{0-}$). The table and figure show similar results as those for the S\&P 500 index; in particular, the RL method achieves more than 85\% reduction in the mean and standard error of value-based tracking error across the testing years for the S\&P 500 EWI.

\begin{table}[htbp]
	\caption
	{\textbf{Out-of-sample value-based tracking error (V-TE) and cash flow (CF) of the proposed RL method and the MM method for the S\&P 500 EWI}. The initial wealth $V_{0-}$ of each testing year is USD 20 billion. 
		CF-adj is the cash flow with interest rate adjustment. The cash flow for the MM method is zero. 		
		In the last two rows, mean and stderr respectively refer to the mean and standard error of the values across the testing years from 2006 to 2021 in each column.	}
	\resizebox{\linewidth}{!}{
	\begin{tabular}{crrrrrr}
			\toprule
			\multirow{2}{*}{Testing year} & \multicolumn{2}{c}{V-TE} & \multicolumn{1}{c}{CF} & \multicolumn{1}{c}{CF-adj} & \multicolumn{1}{c}{CF/$V_{0-}$} & \multicolumn{1}{c}{CF-adj/$V_{0-}$} \\
			\cmidrule(lr){2-3}\cmidrule(lr){4-4}\cmidrule(lr){5-5}\cmidrule(lr){6-6}\cmidrule(lr){7-7}
			& \multicolumn{1}{c}{RL} & \multicolumn{1}{c}{MM} & \multicolumn{1}{c}{RL} & \multicolumn{1}{c}{RL} & \multicolumn{1}{c}{RL} & \multicolumn{1}{c}{RL}\\
			\midrule
			2006  & 4.976E+00 & 7.175E+01 & 1.107E+07 & 1.955E+07 & 5.534E-04 & 9.776E-04 \\
			2007  & 3.198E+00 & 2.013E+01 & 4.757E+08 & 4.806E+08 & 2.378E-02 & 2.403E-02 \\
			2008  & 4.136E+00 & 4.507E+01 & -1.538E+09 & -1.545E+09 & -7.688E-02 & -7.724E-02 \\
			2009  & 1.575E+01 & 1.195E+02 & -1.655E+09 & -1.656E+09 & -8.275E-02 & -8.280E-02 \\
			2010  & 2.155E+00 & 3.253E+01 & 4.799E+08 & 4.800E+08 & 2.400E-02 & 2.400E-02 \\
			2011  & 3.397E+00 & 3.489E+01 & 2.054E+07 & 2.093E+07 & 1.027E-03 & 1.047E-03 \\
			2012  & 2.893E+00 & 3.365E+01 & 3.789E+08 & 3.791E+08 & 1.894E-02 & 1.896E-02 \\
			2013  & 2.841E+00 & 1.938E+01 & 2.006E+08 & 2.006E+08 & 1.003E-02 & 1.003E-02 \\
			2014  & 3.636E+00 & 5.086E+01 & 5.943E+08 & 5.945E+08 & 2.971E-02 & 2.973E-02 \\
			2015  & 4.297E+00 & 1.500E+01 & 6.347E+08 & 6.351E+08 & 3.174E-02 & 3.176E-02 \\
			2016  & 4.670E+00 & 1.573E+01 & -2.360E+08 & -2.365E+08 & -1.180E-02 & -1.182E-02 \\
			2017  & 4.245E+00 & 9.234E+01 & 2.273E+08 & 2.285E+08 & 1.137E-02 & 1.143E-02 \\
			2018  & 5.044E+00 & 3.838E+01 & 5.611E+08 & 5.658E+08 & 2.805E-02 & 2.829E-02 \\
			2019  & 1.492E+01 & 3.224E+01 & -1.153E+09 & -1.162E+09 & -5.766E-02 & -5.811E-02 \\
			2020  & 8.885E+00 & 4.198E+01 & 2.577E+07 & 2.567E+07 & 1.288E-03 & 1.283E-03 \\
			2021  & 2.085E+01 & 5.205E+01 & -1.466E+09 & -1.467E+09 & -7.331E-02 & -7.333E-02 \\
			\midrule
			mean  & 6.618E+00 & 4.471E+01 & -1.524E+08 & -1.522E+08 & -7.619E-03 & -7.611E-03 \\
			stderr & 1.392E+00 & 7.135E+00 & 2.044E+08 & 2.051E+08 & 1.022E-02 & 1.025E-02 \\
			\bottomrule
	\end{tabular}}
	\label{tab:vb_ewi_te}
\end{table}

\begin{figure}[htbp]
	{\includegraphics[width=1\linewidth]{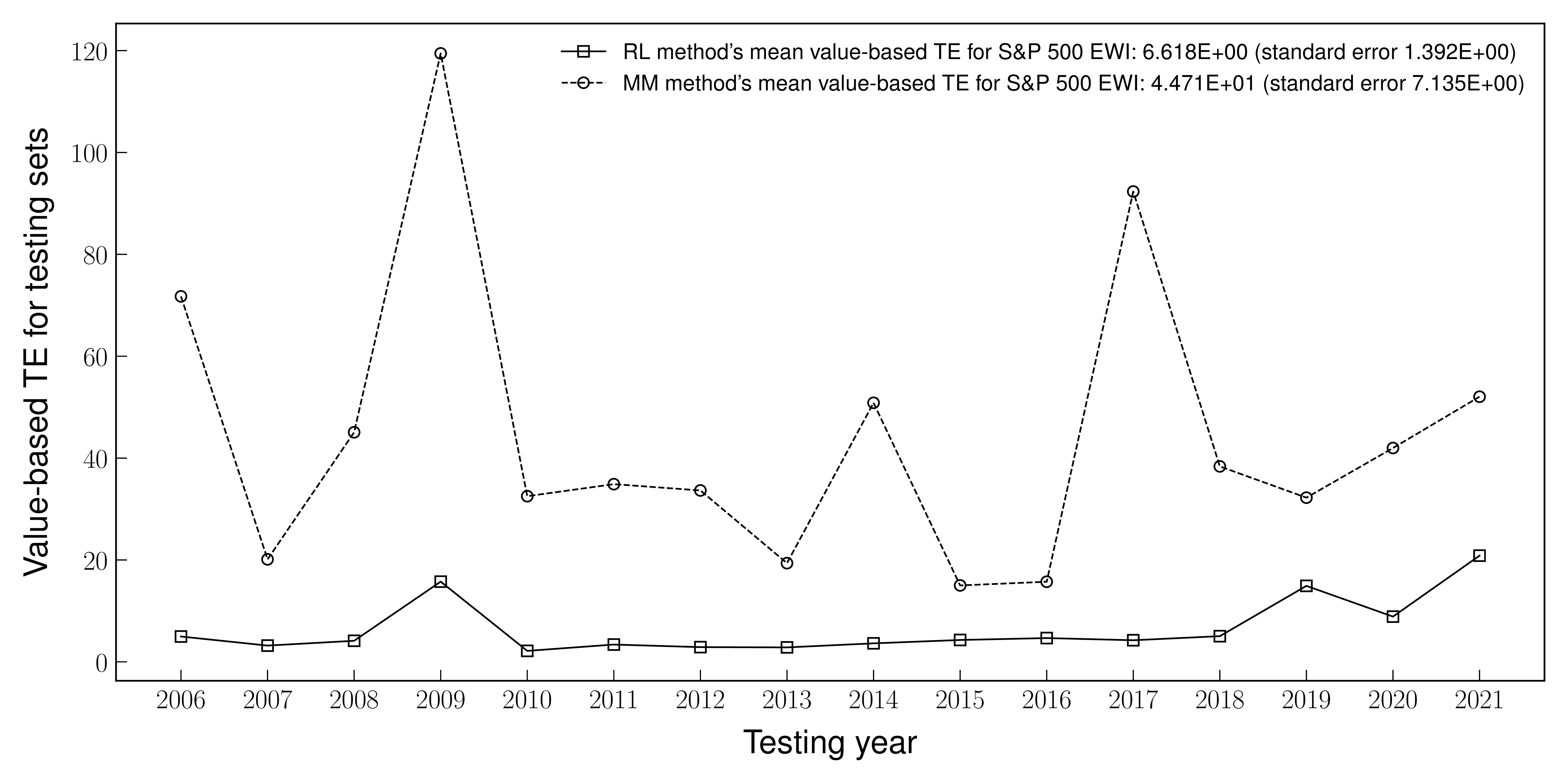}}
	\caption{\textbf{Out-of-sample value-based tracking error (V-TE) of the proposed RL method and the MM method from 2006 to 2021 for the S\&P 500 EWI}. The mean and standard error of the proposed RL method's R-TE across the testing years are 5.524E+00 and 1.051E+00, respectively. The mean and standard error of the MM method's R-TE  across the testing years are 4.471E+01 and 7.135E+00, respectively.}
	\label{fig:vb_ewi}
\end{figure}

Table \ref{tab:vb_djia_te} and Figure \ref{fig:vb_djia} show the out-of-sample value-base tracking error (V-TE) of the proposed RL method and the MM method for the DJIA index. Table \ref{tab:vb_djia_te} also shows the cash flow (CF), its ratio to the initial wealth (CF/$V_{0-}$), the cash flow with interest rate adjustment (CF-adj), and its ratio to the initial wealth (CF-adj/$V_{0-}$) of the RL method for the DJIA index. The table and figure show similar results as those for the S\&P 500 index; in particular, the RL method achieves more than 87\% reduction in the mean and standard error of value-based tracking error across the testing years for the DJIA index. 


\begin{table}[htbp]
	\caption
	{\textbf{Out-of-sample value-based tracking error (V-TE) and cash flow (CF) of the proposed RL method and the MM method for the DJIA Index}. The initial wealth $V_{0-}$ of each testing year is USD 20 billion. 
		CF-adj is the cash flow with interest rate adjustment. The cash flow for the MM method is zero. 		
		In the last two rows, mean and stderr respectively refer to the mean and standard error of the values across testing years from 2005 to 2021 in each column.	}
	\resizebox{\linewidth}{!}{		
	\begin{tabular}{crrrrrr}
			\toprule
			\multirow{2}{*}{Testing year} & \multicolumn{2}{c}{V-TE} & \multicolumn{1}{c}{CF} & \multicolumn{1}{c}{CF-adj} & \multicolumn{1}{c}{CF/$V_{0-}$} & \multicolumn{1}{c}{CF-adj/$V_{0-}$} \\
			\cmidrule(lr){2-3}\cmidrule(lr){4-4}\cmidrule(lr){5-5}\cmidrule(lr){6-6}\cmidrule(lr){7-7}
			& \multicolumn{1}{c}{RL} & \multicolumn{1}{c}{MM} & \multicolumn{1}{c}{RL} & \multicolumn{1}{c}{RL} & \multicolumn{1}{c}{RL} & \multicolumn{1}{c}{RL}\\
			\midrule
			2005  & 3.745E+01 & 6.506E+02 & 1.609E+09 & 1.630E+09 & 8.047E-02 & 8.148E-02 \\
			2006  & 2.104E+01 & 2.084E+02 & -1.280E+08 & -1.321E+08 & -6.399E-03 & -6.604E-03 \\
			2007  & 2.458E+01 & 5.647E+01 & -1.752E+08 & -1.785E+08 & -8.761E-03 & -8.927E-03 \\
			2008  & 8.220E+01 & 3.451E+02 & 1.001E+09 & 1.001E+09 & 5.004E-02 & 5.003E-02 \\
			2009  & 5.290E+01 & 7.099E+02 & -2.286E+09 & -2.287E+09 & -1.143E-01 & -1.144E-01 \\
			2010  & 1.913E+01 & 1.991E+02 & -5.960E+07 & -5.977E+07 & -2.980E-03 & -2.989E-03 \\
			2011  & 3.198E+01 & 1.235E+02 & 8.006E+08 & 8.012E+08 & 4.003E-02 & 4.006E-02 \\
			2012  & 3.685E+01 & 2.631E+02 & -5.326E+08 & -5.332E+08 & -2.663E-02 & -2.666E-02 \\
			2013  & 3.240E+01 & 1.431E+02 & -2.287E+08 & -2.287E+08 & -1.143E-02 & -1.144E-02 \\
			2014  & 3.384E+01 & 2.653E+02 & -3.250E+08 & -3.252E+08 & -1.625E-02 & -1.626E-02 \\
			2015  & 4.186E+01 & 2.397E+02 & -6.255E+07 & -6.242E+07 & -3.127E-03 & -3.121E-03 \\
			2016  & 3.740E+01 & 1.182E+02 & 4.702E+08 & 4.700E+08 & 2.351E-02 & 2.350E-02 \\
			2017  & 3.988E+01 & 3.959E+02 & 6.719E+08 & 6.726E+08 & 3.359E-02 & 3.363E-02 \\
			2018  & 7.724E+01 & 1.269E+02 & -5.753E+08 & -5.773E+08 & -2.876E-02 & -2.887E-02 \\
			2019  & 5.438E+01 & 4.661E+02 & -7.726E+08 & -7.765E+08 & -3.863E-02 & -3.882E-02 \\
			2020  & 1.240E+02 & 7.189E+02 & 4.854E+08 & 4.859E+08 & 2.427E-02 & 2.429E-02 \\
			2021  & 7.041E+01 & 6.728E+02 & 3.180E+08 & 3.181E+08 & 1.590E-02 & 1.591E-02 \\
			\midrule
			mean  & 4.809E+01 & 3.355E+02 & 1.243E+07 & 1.276E+07 & 6.214E-04 & 6.380E-04 \\
			stderr & 6.540E+00 & 5.510E+01 & 2.091E+08 & 2.098E+08 & 1.045E-02 & 1.049E-02 \\
			\bottomrule
	\end{tabular}}
	\label{tab:vb_djia_te}
\end{table}

\begin{figure}[htbp]
	{\includegraphics[width=1\linewidth]{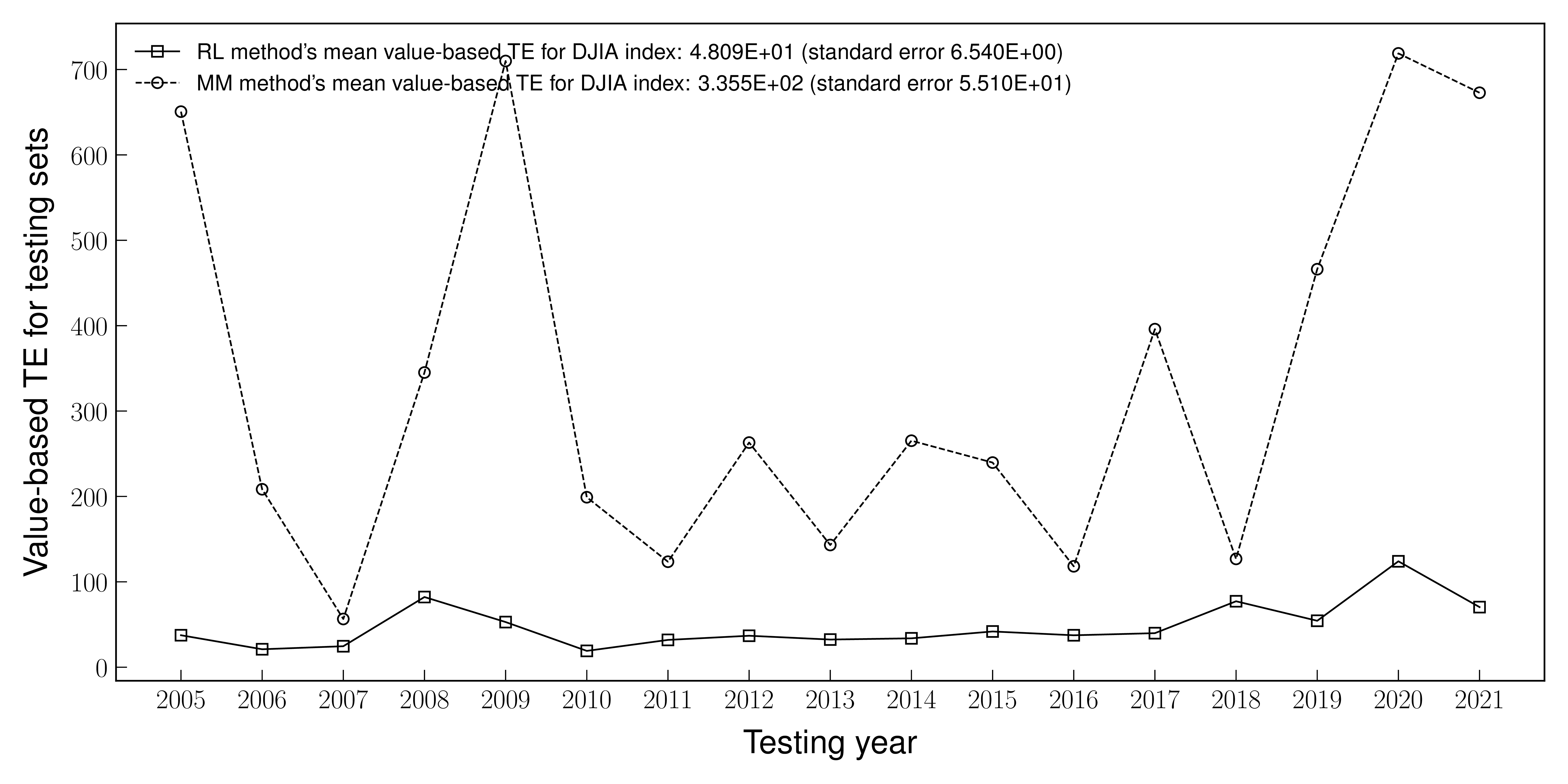}}
	\caption{\textbf{Out-of-sample value-based tracking error (V-TE) of the proposed RL method and the MM method from 2005 to 2021  for the DJIA index}. The mean and standard error of the proposed RL method's R-TE across the testing years  are 4.129E+01 and 5.692E+00, respectively. The mean and standard error of the MM method's R-TE across the testing years  are 3.355E+02 and 5.510E+01, respectively.}
	\label{fig:vb_djia}
\end{figure}

\subsection{Transaction Costs}

Transaction costs have already been deducted from the portfolio values when we rebalance the portfolio, so there is no need to directly compare transaction costs of the RL method with those of other methods. Nonetheless, the transaction costs are informative for better understanding of the proposed RL method. The total transaction costs over the testing window $(t_1, t_2]$ is
\begin{equation}
	\text{TC} = \sum_{k=1}^{M (t_2 - t_1)} c_{t_1+\frac{k}{M}}, \notag
\end{equation}
where $c_{t_1+\frac{k}{M}}$ is the transaction cost defined in Equation \eqref{eq:tc}. Table \ref{tab:rb_sp500_tc} and \ref{tab:vb_sp500_tc} present the out-of-sample trading volumes and transaction costs of the RL method and the MM method respectively for the return-based tracking and the value-based tracking of the S\&P 500 index. The tables show that the trading volumes and transaction costs of the RL method are about 11.28\% (resp., 43.92\%) and 10.34\% (resp., 42.73\%) higher than those of the MM method for the return-based (resp., value-based) tracking of S\&P 500 index, respectively.
The tables also show that the transaction costs incurred in the RL method account for about 4.87 (resp., 7.06) basis point of the initial portfolio value for the return-based (resp., value-based) tracking of S\&P 500 index. 

The results for the S\&P 500 EWI and DJIA index are similar; see Table \ref{tab:rb_ewi_tc} to \ref{tab:vb_djia_tc} in Appendix \ref{appendix:tc_ewi_djia} for details.


\begin{table}[htbp]
	\caption
	{\textbf{Out-of-sample trading volume (Vol) and transaction costs (TC) of the proposed RL method and the MM method  for the return-based tracking of S\&P 500 index}. The trading volume is in terms of number of shares. The initial wealth $V_{0-}$ of each testing year is USD 20 billion. In the last two rows, mean and stderr respectively refer to the mean and standard error of the values across the testing years from 2005 to 2021 in each column.	
	}
	{\begin{tabular}{crrrrrr}
			\toprule
			\multirow{2}{*}{Testing year} & \multicolumn{2}{c}{Vol} & \multicolumn{2}{c}{TC} & \multicolumn{2}{c}{TC/$V_{0-}$} \\
			\cmidrule(lr){2-3}\cmidrule(lr){4-5}\cmidrule(lr){6-7}
			& \multicolumn{1}{c}{RL} & \multicolumn{1}{c}{MM} & \multicolumn{1}{c}{RL} & \multicolumn{1}{c}{MM} & \multicolumn{1}{c}{RL} & \multicolumn{1}{c}{MM} \\
			\midrule
			2005  & 1.726E+09 & 1.493E+09 & 8.630E+06 & 7.466E+06 & 4.315E-04 & 3.733E-04 \\
			2006  & 1.740E+09 & 1.777E+09 & 8.700E+06 & 8.883E+06 & 4.350E-04 & 4.441E-04 \\
			2007  & 1.676E+09 & 1.433E+09 & 8.381E+06 & 7.165E+06 & 4.191E-04 & 3.583E-04 \\
			2008  & 3.487E+09 & 3.469E+09 & 1.743E+07 & 1.735E+07 & 8.717E-04 & 8.674E-04 \\
			2009  & 4.725E+09 & 4.948E+09 & 2.362E+07 & 2.474E+07 & 1.181E-03 & 1.237E-03 \\
			2010  & 2.202E+09 & 1.867E+09 & 1.101E+07 & 9.337E+06 & 5.506E-04 & 4.669E-04 \\
			2011  & 2.093E+09 & 1.844E+09 & 1.047E+07 & 9.221E+06 & 5.233E-04 & 4.611E-04 \\
			2012  & 2.457E+09 & 2.211E+09 & 1.229E+07 & 1.105E+07 & 6.143E-04 & 5.527E-04 \\
			2013  & 1.703E+09 & 1.265E+09 & 8.513E+06 & 6.324E+06 & 4.257E-04 & 3.162E-04 \\
			2014  & 1.233E+09 & 1.090E+09 & 6.163E+06 & 5.453E+06 & 3.081E-04 & 2.727E-04 \\
			2015  & 1.296E+09 & 1.206E+09 & 6.482E+06 & 6.029E+06 & 3.241E-04 & 3.014E-04 \\
			2016  & 1.431E+09 & 1.231E+09 & 7.157E+06 & 6.155E+06 & 3.578E-04 & 3.078E-04 \\
			2017  & 1.149E+09 & 1.073E+09 & 5.744E+06 & 5.364E+06 & 2.872E-04 & 2.682E-04 \\
			2018  & 1.420E+09 & 1.115E+09 & 6.707E+06 & 5.575E+06 & 3.354E-04 & 2.787E-04 \\
			2019  & 2.224E+09 & 1.307E+09 & 1.008E+07 & 6.536E+06 & 5.039E-04 & 3.268E-04 \\
			2020  & 1.872E+09 & 1.807E+09 & 9.362E+06 & 9.038E+06 & 4.681E-04 & 4.519E-04 \\
			2021  & 9.367E+08 & 8.476E+08 & 4.684E+06 & 4.239E+06 & 2.342E-04 & 2.120E-04 \\
			\midrule
			mean  & 1.963E+09 & 1.764E+09 & 9.731E+06 & 8.819E+06 & 4.865E-04 & 4.410E-04 \\
			stderr & 2.258E+08 & 2.477E+08 & 1.130E+06 & 1.239E+06 & 5.650E-05 & 6.194E-05 \\
			\bottomrule
	\end{tabular}}
	\label{tab:rb_sp500_tc}
\end{table}

\begin{table}[htbp]
	\caption
	{\textbf{Out-of-sample trading volume (Vol) and transaction costs (TC) of the proposed RL method and the MM method for the value-based tracking of S\&P 500 index}.
	The trading volume is in terms of number of shares. The initial wealth $V_{0-}$ of each testing year is USD 20 billion. In the last two rows, mean and stderr respectively refer to the mean and standard error of the values across the testing years from 2005 to 2021 in each column.	
	  }
	{\begin{tabular}{crrrrrr}
			\toprule
			\multirow{2}{*}{Testing year} & \multicolumn{2}{c}{Vol} & \multicolumn{2}{c}{TC} & \multicolumn{2}{c}{TC/$V_{0-}$} \\
			\cmidrule(lr){2-3}\cmidrule(lr){4-5}\cmidrule(lr){6-7}
			& \multicolumn{1}{c}{RL} & \multicolumn{1}{c}{MM} & \multicolumn{1}{c}{RL} & \multicolumn{1}{c}{MM} & \multicolumn{1}{c}{RL} & \multicolumn{1}{c}{MM} \\
			\midrule
			2005  & 2.633E+09 & 1.794E+09 & 1.315E+07 & 8.972E+06 & 6.574E-04 & 4.486E-04 \\
			2006  & 2.678E+09 & 2.117E+09 & 1.337E+07 & 1.058E+07 & 6.687E-04 & 5.292E-04 \\
			2007  & 2.329E+09 & 1.734E+09 & 1.166E+07 & 8.671E+06 & 5.829E-04 & 4.336E-04 \\
			2008  & 5.538E+09 & 4.772E+09 & 2.599E+07 & 2.356E+07 & 1.299E-03 & 1.178E-03 \\
			2009  & 7.714E+09 & 7.046E+09 & 3.768E+07 & 3.382E+07 & 1.884E-03 & 1.691E-03 \\
			2010  & 3.057E+09 & 1.931E+09 & 1.531E+07 & 9.656E+06 & 7.653E-04 & 4.828E-04 \\
			2011  & 2.825E+09 & 2.117E+09 & 1.417E+07 & 1.058E+07 & 7.086E-04 & 5.292E-04 \\
			2012  & 2.986E+09 & 1.787E+09 & 1.496E+07 & 8.936E+06 & 7.482E-04 & 4.468E-04 \\
			2013  & 2.323E+09 & 1.442E+09 & 1.162E+07 & 7.211E+06 & 5.812E-04 & 3.605E-04 \\
			2014  & 1.687E+09 & 1.124E+09 & 8.441E+06 & 5.621E+06 & 4.220E-04 & 2.811E-04 \\
			2015  & 1.787E+09 & 1.421E+09 & 8.957E+06 & 7.107E+06 & 4.479E-04 & 3.554E-04 \\
			2016  & 2.423E+09 & 1.224E+09 & 1.216E+07 & 6.120E+06 & 6.078E-04 & 3.060E-04 \\
			2017  & 2.120E+09 & 1.022E+09 & 1.062E+07 & 5.113E+06 & 5.309E-04 & 2.557E-04 \\
			2018  & 2.086E+09 & 1.044E+09 & 9.997E+06 & 5.154E+06 & 4.998E-04 & 2.577E-04 \\
			2019  & 2.692E+09 & 1.135E+09 & 1.223E+07 & 5.675E+06 & 6.115E-04 & 2.838E-04 \\
			2020  & 2.840E+09 & 1.499E+09 & 1.433E+07 & 7.438E+06 & 7.166E-04 & 3.719E-04 \\
			2021  & 1.505E+09 & 7.689E+08 & 7.586E+06 & 3.847E+06 & 3.793E-04 & 1.923E-04 \\
			\midrule
			mean  & 2.895E+09 & 1.999E+09 & 1.425E+07 & 9.886E+06 & 7.125E-04 & 4.943E-04 \\
			stderr & 3.696E+08 & 3.822E+08 & 1.768E+06 & 1.836E+06 & 8.839E-05 & 9.178E-05 \\
			\bottomrule
	\end{tabular}}
	\label{tab:vb_sp500_tc}
\end{table}

\section{Conclusion}

In this paper, we propose the first discrete-time infinite-horizon dynamic formulation of the index tracking problem under both return-based tracking error and value-based tracking error. The dynamic formulation overcomes several limitations of the models in the existing literature, all of which are based on static formulations expect one continuous time stochastic control model. $(i)$ The dynamic formulation relies on the intertemporal dynamics of the daily returns and other market variables and hence utilizes market data in a much longer historical time period. $(ii)$ The dynamic formulation learns the optimal functional relation between the market information and the portfolio weights at the rebalancing time, so the portfolio weights are responsive to the current market situation.  
$(iii)$ The formulation directly incorporates and accurately computes transaction costs under various specifications, particularly those practically specified as nonlinear functions of trading volume. 
$(iv)$ The formulation exactly incorporates the overall tradeoff between tracking error and transaction costs over an infinite time horizon. $(v)$ The formulation avoids cardinality constraints or penalty terms resulting in NP-hardness because it directly incorporates the exact transaction costs. $(vi)$ The formulation effectively uses rich market information other than prices or returns by including them into state variables. 

For the value-based tracking problem, we allows the novel inclusion of cash injection or withdraw decision variables in the formulation, which leads to flexible index tracking strategies. 

To accurately incorporate the transaction cost at each rebalancing time, we propose a simple but efficient Banach fixed point iteration algorithm to solve the rebalancing equation, leading to accurate calculation of the transaction cost. The algorithm can be applied in general portfolio management problems to incorporate rebalancing cost. 

The dynamic formulation of index tracking problem is difficult to solve due to high dimensional state variables and decision variables (i.e., the portfolio weights). We propose an extension of deep reinforcement learning (RL) method that allows stochastic control policies specifying the distribution of portfolio weights conditional on state variables.
Our RL method extends the existing RL methods used in computer games or robotics in two aspects. $(i)$ To overcome the issue of data limitation resulting from a single sample path of daily financial data, we propose a new training scheme that generates enough quarterly or semiyearly data from daily data by randomly drawing the starting date of the training episode from the date range of training data set. $(ii)$ For computer game or robotic problems, the value function 
at the terminal state of an training episode is typically zero. In contrast, 
our RL method needs to estimate the value function at the terminal state as the index tracking problem is an infinite-horizon problem.

We carry out comprehensive empirical study of the out-of-sample performance of our RL method by using 
a 17-year-long testing set. The index tracking results for the value-weighted S\&P 500 index, the equally-weighted S\&P 500 Equal Weight Index, and the price-weighted Dow Jones Industrial Average index show that the proposed RL method not only achieves smaller mean return tracking error and one magnitude smaller mean value tracking error compared with a benchmark method, but also produces a positive mean cash withdraw, which implies that the strategy of cash injection and withdraw may earn extra profit for the fund manager.

%
%
%
%
%
%
%
%
%
%
\begin{appendices}
	
\section{Proofs}

\subsection{Proof of Proposition \ref{prop:achieve_target_weight}}\label{appendix:proof_prop_achieve_target_weight}

To prepare, we first give the following four lemmas.

\begin{lemma}\label{lemma:min_inequality}
	For any $a_1, a_2, b_1, b_2 \in \mathbb{R}$, $| \mathop{\min} \left\{a_1, a_2\right\} - \mathop{\min} \left\{b_1, b_2\right\} | \le \left|a_1 - b_1 \right| + \left|a_2 - b_2 \right|$.
\end{lemma}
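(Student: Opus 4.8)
The plan is to treat this as the standard $1$-Lipschitz property of the $\min$ function with respect to the two coordinates, and to reduce to a one-line case analysis. The key observation is that, for any reals, $\min\{x_1,x_2\} \le x_j$ for each $j\in\{1,2\}$, so the minimum can always be bounded above by whichever argument is selected by the \emph{other} pair.

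Concretely, I would first note that the inequality is symmetric in the two pairs $(a_1,a_2)$ and $(b_1,b_2)$, so without loss of generality I may assume $\min\{a_1,a_2\} \ge \min\{b_1,b_2\}$; then the left-hand side equals $\min\{a_1,a_2\} - \min\{b_1,b_2\}$ without the absolute value. Next I would let $j\in\{1,2\}$ be an index attaining $\min\{b_1,b_2\}=b_j$. Since $\min\{a_1,a_2\}\le a_j$, I get
\begin{equation}
	\min\{a_1,a_2\} - \min\{b_1,b_2\} \le a_j - b_j \le |a_j - b_j| \le |a_1-b_1| + |a_2-b_2|,\notag
\end{equation}
which is exactly the claim. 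The only content is the single step $\min\{a_1,a_2\}\le a_j$ combined with the trivial bound $|a_j-b_j|\le|a_1-b_1|+|a_2-b_2|$.

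An equally short alternative I could present instead uses the identity $\min\{x_1,x_2\}=\tfrac12\bigl(x_1+x_2-|x_1-x_2|\bigr)$, so that the difference of the two minima splits into $\tfrac12\bigl[(a_1-b_1)+(a_2-b_2)\bigr]$ plus a term controlled by the reverse triangle inequality $\bigl||a_1-a_2|-|b_1-b_2|\bigr|\le|(a_1-b_1)-(a_2-b_2)|\le|a_1-b_1|+|a_2-b_2|$; collecting terms and taking absolute values yields the same bound.

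I do not expect any genuine obstacle here: the statement is an elementary Lipschitz estimate and the proof is a two- or three-line argument. The only mild care needed is to handle the absolute value cleanly, which the ``without loss of generality'' reduction accomplishes, and to make sure the chosen index $j$ refers to the correct (here, the $b$) pair so that the bound $\min\{a_1,a_2\}\le a_j$ applies. This lemma is evidently a preparatory estimate, to be used later to show that the perturbation of a sum of $\min$/$\max$ terms in the transaction-cost function is Lipschitz, which in turn drives the contraction property asserted in Proposition~\ref{prop:achieve_target_weight}.
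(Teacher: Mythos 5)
Your primary argument is correct, but it is a genuinely different route from the paper's. The paper proves the lemma purely algebraically: it writes $\min\{x_1,x_2\}=\tfrac12\left(x_1+x_2-\left|x_1-x_2\right|\right)$, splits the difference of the two minima into $\tfrac12(a_1-b_1)+\tfrac12(a_2-b_2)+\tfrac12\left(\left|b_1-b_2\right|-\left|a_1-a_2\right|\right)$, and then controls the last term with the reverse triangle inequality followed by one more triangle inequality --- exactly the ``alternative'' proof you sketch in your second paragraph. Your main proof (symmetry reduction plus choosing the index $j$ attaining $\min\{b_1,b_2\}$ and using $\min\{a_1,a_2\}\le a_j$) is shorter, avoids the identity entirely, and in fact establishes the strictly stronger sup-norm Lipschitz bound $\left|\min\{a_1,a_2\}-\min\{b_1,b_2\}\right|\le\max\left\{\left|a_1-b_1\right|,\left|a_2-b_2\right|\right\}$, of which the paper's sum bound is an immediate corollary. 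It is worth noting that your technique mirrors what the paper itself does for the companion result (Lemma \ref{lemma:max_inequality}, the max version), which is proved by the one-sided bound $\left|a_i\right|\le\left|a_i-b_i\right|+\left|b_i\right|$ plus symmetry and does yield a max-type bound; so your approach unifies the two lemmas under one argument and with the sharper constant, while the paper's identity-based computation has the mild virtue of requiring no case selection or without-loss-of-generality step. Either version suffices for the downstream use in Lemma \ref{lemma:contraction}, since the max bound implies the sum bound used there.
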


\begin{proof}
For any $a, b \in \mathbb{R}$, $\mathop{\min} \left\{ a, b \right\} = \frac{1}{2} \left(a + b - \left| a - b \right|\right)$. Then, for any $a_1, a_2, b_1, b_2 \in \mathbb{R}$, we have
\begin{align}
	&\ \ \ \ \left| \mathop{\min} \left\{a_1, a_2\right\} - \mathop{\min} \left\{b_1, b_2\right\} \right| \notag \\
	&= \left| \frac{1}{2} \left(a_1 + a_2 - \left| a_1 - a_2 \right|\right) -  \frac{1}{2} \left(b_1 + b_2 - \left| b_1 - b_2 \right|\right) \right| \notag \\
	&= \left| \frac{1}{2} \left(a_1 - b_1\right) + \frac{1}{2} \left(a_2 - b_2\right)  + \frac{1}{2} \left(\left| b_1 - b_2 \right| - \left| a_1 - a_2 \right|\right) \right| \notag \\
	&\le  \frac{1}{2} \left| a_1 - b_1 \right| + \frac{1}{2} \left| a_2 - b_2 \right|  + \frac{1}{2} \left| \left| b_1 - b_2 \right| - \left| a_1 - a_2 \right| \right| \notag \\
	&\le \frac{1}{2} \left| a_1 - b_1 \right| + \frac{1}{2} \left| a_2 - b_2 \right|  + \frac{1}{2} \left| \left(b_1 - b_2 \right) - \left( a_1 - a_2 \right) \right| \notag \\
	&= \frac{1}{2} \left| a_1 - b_1 \right| + \frac{1}{2} \left| a_2 - b_2\right|  + \frac{1}{2} \left| \left(b_1 - a_1 \right) + \left( a_2 - b_2 \right) \right| \notag \\
	&\le \frac{1}{2} \left| a_1 - b_1 \right| + \frac{1}{2} \left| a_2 - b_2\right|  + \frac{1}{2} \left| b_1 - a_1 \right| + \frac{1}{2} \left| a_2 - b_2 \right| \notag \\
	&= \left|a_1 - b_1 \right| + \left|a_2 - b_2 \right|. \notag
\end{align}
\end{proof}

\begin{lemma}\label{lemma:max_inequality}
	For any $a_1, a_2, b_1, b_2 \in \mathbb{R}$, $\left| \mathop{\max} \left\{ \left| a_1 \right|, \left| a_2 \right| \right\} - \mathop{\max} \left\{\left| b_1 \right|, \left| b_2 \right| \right\} \right| \le \mathop{\max} \left\{ \left|a_1 - b_1 \right|, \left|a_2 - b_2 \right| \right\}$.
\end{lemma}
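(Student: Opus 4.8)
The plan is to decompose the stated bound into two elementary Lipschitz facts and chain them. First I would isolate the auxiliary claim that the binary maximum is $1$-Lipschitz in the sup-norm: for any $x_1, x_2, y_1, y_2 \in \mathbb{R}$,
\[
	\left| \max\{x_1, x_2\} - \max\{y_1, y_2\} \right| \le \max\{|x_1 - y_1|,\ |x_2 - y_2|\}.
\]
To prove this I would argue by symmetry. Both sides are unchanged if I swap the roles of the pair $(x_1,x_2)$ and the pair $(y_1,y_2)$, so without loss of generality I may assume $\max\{x_1, x_2\} \ge \max\{y_1, y_2\}$, and I may assume the left maximum is attained at $x_1$. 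Then
\[
	\max\{x_1, x_2\} - \max\{y_1, y_2\} = x_1 - \max\{y_1, y_2\} \le x_1 - y_1 \le |x_1 - y_1| \le \max\{|x_1-y_1|,\ |x_2-y_2|\},
\]
and since the left-hand difference is nonnegative under our assumption, taking absolute values yields the claim.

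Next I would apply this claim with $x_i = |a_i|$ and $y_i = |b_i|$ to obtain
\[
	\left| \max\{|a_1|, |a_2|\} - \max\{|b_1|, |b_2|\} \right| \le \max\{\,||a_1| - |b_1||,\ ||a_2| - |b_2||\,\}.
\]
Finally I would control the right-hand side using the reverse triangle inequality $||a_i| - |b_i|| \le |a_i - b_i|$ for each $i$, together with the monotonicity of the maximum (if $s \le s'$ and $t \le t'$ then $\max\{s,t\} \le \max\{s',t'\}$). This gives $\max\{||a_1|-|b_1||,\ ||a_2|-|b_2||\} \le \max\{|a_1-b_1|,\ |a_2-b_2|\}$, which combined with the previous display is exactly the stated inequality.

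I do not anticipate a genuine obstacle here; the only point requiring care is the bookkeeping in the symmetry reduction for the auxiliary Lipschitz claim, namely checking that swapping the two pairs and selecting which argument attains the maximum really covers all cases, so that the single chain of inequalities suffices. As an alternative I could mirror the style of Lemma~\ref{lemma:min_inequality} and expand through the identity $\max\{a,b\} = \tfrac{1}{2}(a + b + |a - b|)$, but the two-step argument above is shorter and avoids manipulating the nested absolute values directly.
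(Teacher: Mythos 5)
Your proof is correct, but it takes a different route from the paper's. The paper argues in the style of its Lemma \ref{lemma:min_inequality}: it applies the forward triangle inequality $|a_i| = |a_i - b_i + b_i| \le |a_i - b_i| + |b_i|$ coordinate-wise, then uses subadditivity of the maximum over coordinate-wise sums to get $\max\{|a_1|,|a_2|\} \le \max\{|a_1-b_1|,|a_2-b_2|\} + \max\{|b_1|,|b_2|\}$, rearranges to obtain one side of the bound, and invokes symmetry in $(a,b)$ for the other side; the reverse triangle inequality never appears. You instead isolate a more general reusable fact --- that the binary max is $1$-Lipschitz with respect to the sup-norm, proved by a WLOG/attainment case analysis --- and then compose it with the reverse triangle inequality $\bigl||a_i|-|b_i|\bigr| \le |a_i - b_i|$ and monotonicity of the max. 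Your intermediate lemma is strictly more general (it holds for arbitrary reals, not just absolute values, and extends verbatim to $n$ arguments), at the cost of the slightly delicate double-WLOG bookkeeping you flag yourself: the two reductions (swapping the pairs, then relabeling indices within both pairs simultaneously) must be checked to be compatible, which they are, since the index relabeling preserves both sides and does not disturb the assumption $\max\{x_1,x_2\} \ge \max\{y_1,y_2\}$. The paper's inline computation avoids any case analysis entirely and is the shorter of the two, but proves only the statement at hand.
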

\begin{proof}
For $i = 1, 2$, $\left| a_i \right| = \left| a_i - b_i + b_i \right| \le \left| a_i - b_i \right| + \left| b_i \right|$. Hence, 
$$\mathop{\max} \left\{ \left| a_1 \right|, \left| a_2 \right| \right\} \le \mathop{\max} \left\{ \left| a_1 - b_1 \right|, \left| a_2 - b_2 \right| \right\} + \mathop{\max} \left\{ \left| b_1 \right|, \left| b_2 \right| \right\},$$ 
which implies 
\begin{align}
	&\mathop{\max} \left\{ \left| a_1 \right|, \left| a_2 \right| \right\} - \mathop{\max} \left\{ \left| b_1 \right|, \left| b_2 \right| \right\} \le \mathop{\max} \left\{ \left| a_1 - b_1 \right|, \left| a_2 - b_2 \right| \right\}. \notag
\end{align}
By symmetry,
\begin{align}
	&\mathop{\max} \left\{ \left| b_1 \right|, \left| b_2 \right| \right\} - \mathop{\max} \left\{ \left| a_1 \right|, \left| a_2 \right| \right\} \le \mathop{\max} \left\{ \left| b_1 - a_1 \right|, \left| b_2 - a_2 \right| \right\}. \notag
\end{align}
Therefore, the conclusion follows. 
\end{proof}

\begin{lemma}\label{lemma:contraction}
	Suppose $\xi_{1i} > 0$, $\xi_{2i} > 0$, $\xi_{3i} > 0$, $p_i > 0$, $V_i \in \mathbb{R}$, $w_i \in \mathbb{R}$, $i=1,\ldots, N$. Define function
	\begin{align}
		\Psi(x) :=  \sum_{i=1}^{N} \mathop{\min} \left \{ \mathop{\max} \left \{ \frac{\xi_{1i}}{p_i} \left | V_i - w_i x \right |, \; \xi_{3i} \right \}, \; \xi_{2i} \left | V_i - w_i x \right | \right \}, x\in \mathbb{R}. \notag
	\end{align}
	If $\sum_{i=1}^{N} \left( \frac{\xi_{1i}}{p_i} + \xi_{2i} \right)\left| w_i \right|  < 1$, then $\Psi$ is a contraction on $\mathbb{R}$ with a contraction coefficient $\sum_{i=1}^{N} \left( \frac{\xi_{1i}}{p_i} + \xi_{2i} \right)\left| w_i \right|$.
\end{lemma}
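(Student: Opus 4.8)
The plan is to show that $\Psi$ is Lipschitz on $\mathbb{R}$ with Lipschitz constant $L := \sum_{i=1}^{N}\left(\frac{\xi_{1i}}{p_i}+\xi_{2i}\right)|w_i|$; since $L<1$ by hypothesis, this is precisely the contraction claim. Because $\Psi$ is a finite sum over $i$, the triangle inequality lets me bound each summand separately. So I would fix $x,y\in\mathbb{R}$, write $g_i(x) := \min\left\{\max\left\{\frac{\xi_{1i}}{p_i}|V_i-w_ix|,\;\xi_{3i}\right\},\;\xi_{2i}|V_i-w_ix|\right\}$, and aim to establish the per-term estimate $|g_i(x)-g_i(y)|\le\left(\frac{\xi_{1i}}{p_i}+\xi_{2i}\right)|w_i|\,|x-y|$, after which summing over $i$ finishes the argument.

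First I would peel off the outer minimum using Lemma \ref{lemma:min_inequality}. Taking $a_1=\max\left\{\frac{\xi_{1i}}{p_i}|V_i-w_ix|,\xi_{3i}\right\}$ and $a_2=\xi_{2i}|V_i-w_ix|$, with $b_1,b_2$ the analogous quantities formed from $y$, Lemma \ref{lemma:min_inequality} gives $|g_i(x)-g_i(y)|\le|a_1-b_1|+|a_2-b_2|$. The second term is immediate from the reverse triangle inequality: $|a_2-b_2|=\xi_{2i}\left|\,|V_i-w_ix|-|V_i-w_iy|\,\right|\le\xi_{2i}|w_i|\,|x-y|$.

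The step requiring care is the $|a_1-b_1|$ term, which carries the inner maximum. The key observation is that, since $\frac{\xi_{1i}}{p_i}>0$ and $\xi_{3i}>0$, I can rewrite $\frac{\xi_{1i}}{p_i}|V_i-w_ix|=\left|\frac{\xi_{1i}}{p_i}(V_i-w_ix)\right|$ and $\xi_{3i}=|\xi_{3i}|$, casting $a_1$ into the form $\max\{|c_1|,|c_2|\}$ demanded by Lemma \ref{lemma:max_inequality}, with $c_1=\frac{\xi_{1i}}{p_i}(V_i-w_ix)$ and $c_2=\xi_{3i}$ (and $d_1,d_2$ analogous, noting $d_2=c_2=\xi_{3i}$). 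Lemma \ref{lemma:max_inequality} then yields $|a_1-b_1|\le\max\{|c_1-d_1|,|c_2-d_2|\}=\max\left\{\frac{\xi_{1i}}{p_i}|w_i|\,|x-y|,\,0\right\}=\frac{\xi_{1i}}{p_i}|w_i|\,|x-y|$. Combining the two pieces gives the per-term bound, and summation delivers $|\Psi(x)-\Psi(y)|\le L|x-y|$.

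The only genuine subtlety — and the reason the two preparatory lemmas are isolated beforehand — is the nested $\min$–$\max$ structure, which introduces kinks and prevents a straightforward derivative estimate; the bound must instead flow through the stability inequalities of Lemmas \ref{lemma:min_inequality} and \ref{lemma:max_inequality}. The one technical point to get right is recasting the inner maximum as a maximum of absolute values (using the positivity of $\xi_{1i}/p_i$ and $\xi_{3i}$) so that Lemma \ref{lemma:max_inequality} applies; once that reformulation is in place, the remaining estimates are routine applications of the triangle and reverse triangle inequalities.
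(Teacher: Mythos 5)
Your proof is correct and follows essentially the same route as the paper's: a termwise bound via the triangle inequality, Lemma \ref{lemma:min_inequality} to split the outer minimum, the reverse triangle inequality for the $\xi_{2i}$ term, and Lemma \ref{lemma:max_inequality} for the term carrying the inner maximum. The only cosmetic difference is that you instantiate Lemma \ref{lemma:max_inequality} with the signed quantities $\frac{\xi_{1i}}{p_i}(V_i - w_i x)$ and $\frac{\xi_{1i}}{p_i}(V_i - w_i y)$, so that term is bounded in one step, whereas the paper inserts the absolute values $\frac{\xi_{1i}}{p_i}\left|V_i - w_i x\right|$ themselves and then finishes with the reverse triangle inequality; this is not a substantive difference.
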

\begin{proof}
For all $x, y \in \mathbb{R}$,
\allowdisplaybreaks
\begin{align}\label{eq:tr_cost_k}
	& | \Psi(x) - \Psi(y) | \notag \\
	\le& \sum_{i=1}^{N} \bigg| \mathop{\min} \left \{ \mathop{\max} \left \{ \frac{\xi_{1i}}{p_i} \left | V_i - w_i y \right |, \; \xi_{3i} \right \}, \; \xi_{2i} \left | V_i - w_i y \right | \right \}\notag\\
	& \quad\quad\quad\quad - \mathop{\min} \left \{ \mathop{\max} \left \{ \frac{\xi_{1i}}{p_i} \left | V_i - w_i x \right |, \; \xi_{3i} \right \}, \; \xi_{2i} \left | V_i - w_i x \right | \right \} \bigg| \notag \\
	\le& \sum_{i=1}^{N} \left| \mathop{\max} \left \{ \frac{\xi_{1i}}{p_i} \left | V_i - w_i y \right |, \; \xi_{3i} \right \} - \mathop{\max} \left \{ \frac{\xi_{1i}}{p_i} \left | V_i - w_i x \right |, \; \xi_{3i} \right \} \right| + | \xi_{2i} \left | V_i - w_i y \right | - \xi_{2i} \left | V_i - w_i x \right | |\notag\\
	& \quad\quad\quad\quad (\text{by Lemma} \ \ref{lemma:min_inequality}) \notag \\
	\le& \sum_{i=1}^{N} \left| \mathop{\max} \left \{ \frac{\xi_{1i}}{p_i} \left | V_i - w_i y \right |, \; \xi_{3i} \right \} - \mathop{\max} \left \{ \frac{\xi_{1i}}{p_i} \left | V_i - w_i x \right |, \; \xi_{3i} \right \} \right|  + \sum_{i=1}^{N} \xi_{2i} \left| w_i \right| \left| x - y \right|  \notag \\
	\le& \sum_{i=1}^{N} \mathop{\max} \left \{ \left| \frac{\xi_{1i}}{p_i} \left| V_i - w_i y \right| - \frac{\xi_{1i}}{p_i} \left| V_i - w_i x \right| \right|, \; \xi_{3i} - \xi_{3i} \right \} + \sum_{i=1}^{N} \xi_{2i} \left| w_i \right| \left| x - y \right| \ (\text{by Lemma} \ \ref{lemma:max_inequality}) \notag \\
	\le& \left| x - y \right| \sum_{i=1}^{N} \left( \frac{\xi_{1i}}{p_i} + \xi_{2i} \right ) \left| w_i \right|, 
\end{align}
which leads to the conclusion.
\end{proof}

\begin{lemma}\label{lemma:portfolio_value_equaility}
	At any rebalancing time $t + \frac{k}{M}$, $k = 0, \dots, M-1$, $t = 0, 1, 2, \dots$, the portfolio value right before and right after the rebalancing satisfy the rebalancing equation
	\begin{equation}\label{eq:lemma_portfolio_value_equaility_eq1}
		V_{t+\frac{k}{M}} = V_{(t+\frac{k}{M})-} + h_{t+\frac{k}{M}} - c_{t+\frac{k}{M}} .
	\end{equation}
\end{lemma}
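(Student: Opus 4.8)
The plan is to prove the rebalancing equation purely by accounting for the flow of cash through the instantaneous rebalancing at time $t+\frac{k}{M}$, exploiting the self-financing structure of the tracking portfolio; no appeal to the specific transaction-cost formula \eqref{eq:tc} or to Proposition~\ref{prop:achieve_target_weight} is needed, since the statement is a conservation-of-value identity that holds for any admissible cost $c_{t+\frac{k}{M}}$. First I would decompose the portfolio value into its stock and cash components both before and after the rebalancing, using the definitions in Section~\ref{subsec:setting_problem}: $V_{(t+\frac{k}{M})-}=\sum_{i=1}^N p_{i,t+\frac{k}{M}}\,x_{i,(t+\frac{k}{M})-}+H_{(t+\frac{k}{M})-}$ and $V_{t+\frac{k}{M}}=\sum_{i=1}^N p_{i,t+\frac{k}{M}}\,x_{i,t+\frac{k}{M}}+H_{t+\frac{k}{M}}$. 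The key observation is that every transaction at this instant is executed at the common time-$(t+\frac{k}{M})$ prices $p_{i,t+\frac{k}{M}}$, so the pre- and post-rebalancing stock positions are both marked at the same price.

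Next I would track the cash account across the rebalancing. The updated cash equals the old cash, plus the injected cash $h_{t+\frac{k}{M}}$, minus the transaction cost $c_{t+\frac{k}{M}}$ (borne by the portfolio itself), minus the net dollar amount spent on adjusting the stock positions, giving the cash-dynamics identity
\[
H_{t+\frac{k}{M}} = H_{(t+\frac{k}{M})-} + h_{t+\frac{k}{M}} - c_{t+\frac{k}{M}} - \sum_{i=1}^N p_{i,t+\frac{k}{M}}\left(x_{i,t+\frac{k}{M}} - x_{i,(t+\frac{k}{M})-}\right).
\]
Substituting this into the expression for $V_{t+\frac{k}{M}}$, the stock-holding term $\sum_{i} p_{i,t+\frac{k}{M}}\,x_{i,t+\frac{k}{M}}$ cancels against the purchase term, leaving $\sum_{i} p_{i,t+\frac{k}{M}}\,x_{i,(t+\frac{k}{M})-}+H_{(t+\frac{k}{M})-}$, which is exactly $V_{(t+\frac{k}{M})-}$. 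The desired identity $V_{t+\frac{k}{M}}=V_{(t+\frac{k}{M})-}+h_{t+\frac{k}{M}}-c_{t+\frac{k}{M}}$ then follows by a one-line cancellation.

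The only genuinely delicate part will be fixing the sign conventions in the cash-flow bookkeeping: $h_{t+\frac{k}{M}}$ is a \emph{signed} injection (negative for a withdrawal), $c_{t+\frac{k}{M}}\ge 0$ is always deducted, and the net share-purchase term may take either sign depending on whether the manager is a net buyer or net seller. Once these are pinned down, the algebra is immediate. Finally I would note that the return-based rebalancing equation \eqref{eq:rebalance_eq} is the special case obtained by setting $h_{t+\frac{k}{M}}=0$ and, by \eqref{equ:weights_sum_1} and its consequence, $H_{(t+\frac{k}{M})-}=H_{t+\frac{k}{M}}=0$, so the single argument above covers both the value-based and return-based formulations.
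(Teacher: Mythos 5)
Your proposal is correct and is essentially the paper's own proof: both arguments write down the self-financing cash-balance identity across the rebalancing instant and then cancel the stock-position terms, with the only cosmetic difference being that the paper splits the trade into buy and sell legs via the positive- and negative-part functions $(x_{i,t+\frac{k}{M}}-x_{i,(t+\frac{k}{M})-})^{\pm}$ before recombining them, whereas you work directly with the signed net purchase term $\sum_{i=1}^N p_{i,t+\frac{k}{M}}\bigl(x_{i,t+\frac{k}{M}}-x_{i,(t+\frac{k}{M})-}\bigr)$. Your closing remark that the return-based equation \eqref{eq:rebalance_eq} is the special case $h_{t+\frac{k}{M}}=0$ also matches the paper's treatment.
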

\begin{proof}
Right before the rebalancing at time $t + \frac{k}{M}$, the fund manager holds stock shares $\left(x_{1, (t+\frac{k}{M})-}, \dots, x_{N, (t+\frac{k}{M})-}\right)$ and cash $H_{(\tk)-}$; right after the rebalancing, the fund manager holds stock shares $\left(x_{1, t+\frac{k}{M}}, \dots, x_{N, t+\frac{k}{M}}\right)$ and cash $H_{\tk}$. 
We denote the negative part and positive part functions as $x^-=\max(0, -x)$ and $x^+=\max(0, x)$ for $x\in\mathbb{R}$, respectively. 
Counting cash inflow and outflow, we have 
%
%
\begin{align}\label{equ:cash_balance_equation}
	H_{(\tk)-}+\sum_{i=1}^{N} p_{i, t+\frac{k}{M}} \left(x_{i, t+\frac{k}{M}} - x_{i, (t+\frac{k}{M})-}\right)^- + h_{t+\frac{k}{M}} - & \sum_{i=1}^{N} p_{i, t+\frac{k}{M}} \left(x_{i, t+\frac{k}{M}} - x_{i, (t+\frac{k}{M})-}\right)^+ \notag \\
	 - & c_{t+\frac{k}{M}} = H_{\tk}, 
\end{align}
where, on the left-hand side, the second term and the forth term are the cash inflow and outflow resulting from selling and buying stocks, respectively. Hence, 
\begin{align}
	& h_{t+\frac{k}{M}} - c_{t+\frac{k}{M}}\notag\\
	= & \sum_{i=1}^{N} p_{i, t+\frac{k}{M}} \left(x_{i, t+\frac{k}{M}} - x_{i, (t+\frac{k}{M})-}\right)^+ - \sum_{i=1}^{N} p_{i, t+\frac{k}{M}} \left(x_{i, t+\frac{k}{M}} - x_{i, (t+\frac{k}{M})-}\right)^- + H_{\tk} - H_{(\tk)-}\notag \\
	= & \sum_{i=1}^{N} p_{i, t+\frac{k}{M}} x_{i, t+\frac{k}{M}} + H_{\tk} - \sum_{i=1}^{N} p_{i, t+\frac{k}{M}} x_{i, (t+\frac{k}{M})-}  - H_{(\tk)-}\notag \\
	= & V_{t+\frac{k}{M}} - V_{(t+\frac{k}{M})-}. \notag
\end{align}
\end{proof}

We next prove Proposition \ref{prop:achieve_target_weight}.

\begin{proof}[Proof of Proposition \ref{prop:achieve_target_weight}]

We first prove part $(ii)$. Since $\sum_{i=1}^{N} ( \frac{\xi_{1i}}{p_{i, t+\frac{k}{M}}} + \xi_{2i} ) | w_{i, t+\frac{k}{M}} | < 1$ and $V_{(t+\frac{k}{M})-} + h_{t+\frac{k}{M}}$ is a constant, $\Psi(V_{t+\frac{k}{M}}) = V_{(t+\frac{k}{M})-} + h_{t+\frac{k}{M}} - c_{t+\frac{k}{M}}(V_{t+\frac{k}{M}})$ is a contraction by Lemma \ref{lemma:contraction}. By the Banach Fixed Point Theorem, there exists a unique $V^*_{t+\frac{k}{M}} \in \mathbb{R}$ such that
\begin{equation}
	V^*_{t+\frac{k}{M}} = V_{(t+\frac{k}{M})-}  + h_{t+\frac{k}{M}} -  c_{t+\frac{k}{M}}(V^*_{t+\frac{k}{M}}), \notag
\end{equation}
which implies by Lemma \ref{lemma:portfolio_value_equaility} that $V^*_{t+\frac{k}{M}}$ is the unique solution to the rebalancing equation \eqref{equ:V_fixed_point_prob_cash_injection}. The next step is to show that $0 < V^*_{t+\frac{k}{M}} \le V_{(t+\frac{k}{M})-} + h_{t+\frac{k}{M}}$. By definition, $c_{t+\frac{k}{M}} \ge 0$, so $V^*_{t+\frac{k}{M}} \le V_{(t+\frac{k}{M})-} + h_{t+\frac{k}{M}}$. To prove the first inequality, without loss of generality, suppose there exists an integer $\tilde{N} \le N$ such that $i \le \tilde{N}$ implies
\begin{align}
	V_{i,(t+\frac{k}{M})-} \ge w_{i,t+\frac{k}{M}} V^*_{t+\frac{k}{M}}, \notag
\end{align}
and $\tilde{N} < i \le N$ implies
\begin{align}
	V_{i,(t+\frac{k}{M})-} < w_{i,t+\frac{k}{M}} V^*_{t+\frac{k}{M}}. \notag
\end{align}
Then
\begin{align}
	& c_{t+\frac{k}{M}}(V^*_{t+\frac{k}{M}}) \notag\\
	= & \sum_{i=1}^{N} \mathop{\min} \left \{ \mathop{\max} \left \{ \frac{\xi_{1i}}{p_{i,t+\frac{k}{M}}} \left|V_{i,(t+\frac{k}{M})-} - w_{i,t+\frac{k}{M}} V^*_{t+\frac{k}{M}} \right|, \; \xi_{3i} \right \}, \xi_{2i} \left|V_{i,(t+\frac{k}{M})-} - w_{i,t+\frac{k}{M}} V^*_{t+\frac{k}{M}} \right| \right \} \notag \\
	\le & \sum_{i=1}^{N} \xi_{2i} \left|V_{i,(t+\frac{k}{M})-} - w_{i,t+\frac{k}{M}} V^*_{t+\frac{k}{M}} \right| \notag \\
	= & \sum_{i=1}^{\tilde{N}} \xi_{2i} \left(V_{i,(t+\frac{k}{M})-} - w_{i,t+\frac{k}{M}} V^*_{t+\frac{k}{M}} \right) + \sum_{i=\tilde{N}+1}^{N} \xi_{2i} \left(w_{i,t+\frac{k}{M}} V^*_{t+\frac{k}{M}}  - V_{i,(t+\frac{k}{M})-}\right), \notag
\end{align}
and
\begin{align}
	V^*_{t+\frac{k}{M}} \ge V_{(t+\frac{k}{M})-} + h_{t+\frac{k}{M}} - \sum_{i=1}^{\tilde{N}} \xi_{2i} \left(V_{i,(t+\frac{k}{M})-} - w_{i,t+\frac{k}{M}} V^*_{t+\frac{k}{M}} \right) - \sum_{i=\tilde{N}+1}^{N} \xi_{2i} \left(w_{i,t+\frac{k}{M}} V^*_{t+\frac{k}{M}}  - V_{i,(t+\frac{k}{M})-}\right), \notag
\end{align}
which implies
\begin{align}\label{eq:proposition1_eq1}
	& \left(1 - \sum_{i=1}^{\tilde{N}} \xi_{2i} w_{i,t+\frac{k}{M}} + \sum_{i=\tilde{N}+1}^{N} \xi_{2i} w_{i,t+\frac{k}{M}} \right) V^*_{t+\frac{k}{M}}\notag\\
	 \ge & \left(1 - \sum_{i=1}^{\tilde{N}} \xi_{2i} w_{i,(t+\frac{k}{M})-} + \sum_{i=\tilde{N}+1}^{N} \xi_{2i} w_{i,(t+\frac{k}{M})-} \right) V_{(t+\frac{k}{M})-} + h_{t+\frac{k}{M}}.
\end{align}

Notice that by the conditions of the proposition,
\begin{align}\label{eq:proposition1_eq2}
	\sum_{i=1}^{\tilde{N}} \xi_{2i} w_{i,t+\frac{k}{M}} - \sum_{i=\tilde{N}+1}^{N} \xi_{2i} w_{i,t+\frac{k}{M}} \le \sum_{i=1}^{N} \xi_{2i} \left| w_{i,t+\frac{k}{M}} \right| < 1,
\end{align}
and
\begin{align}\label{eq:proposition1_eq3}
	\sum_{i=1}^{\tilde{N}} \xi_{2i} w_{i,(t+\frac{k}{M})-}  - \sum_{i=\tilde{N}+1}^{N} \xi_{2i} w_{i,(t+\frac{k}{M})-}  \le \sum_{i=1}^{N} \xi_{2i} \left| w_{i,(t+\frac{k}{M})-} \right| < 1.
\end{align}

By the conditions of the proposition, 
\begin{align}
	h_{t+\frac{k}{M}} &\ge \ -\xi V_{(t+\frac{k}{M})-} > -\left(1 - \sum_{i=1}^{N} \xi_{2i} \left| w_{i,(t+\frac{k}{M})-} \right| \right) V_{(t+\frac{k}{M})-}, \notag
\end{align}
which, together with \eqref{eq:proposition1_eq1} to \eqref{eq:proposition1_eq3} and $V_{(t+\frac{k}{M})-} > 0$, implies that $V^*_{t+\frac{k}{M}} > 0$.

We then prove part $(i)$. Suppose the conditions in part $(i)$ hold. Let $h_{t+\frac{k}{M}} = 0$ and define $\xi = \frac{1}{2}(1 - \sum_{i=1}^{N} \xi_{2i} |w_{i,(t+\frac{k}{M})-}|)$. Then the conditions in part $(ii)$ hold. Applying part $(ii)$ leads to the conclusion of part $(i)$. 
\end{proof}

Finally, we provide a lemma that is useful for calculating $w_{i, (t+\frac{k}{M})-}$ from $w_{i, t+\frac{k-1}{M}}$. The weight $w_{i, (t+\frac{k}{M})-}$ is needed for verifying the conditions of Proposition \ref{prop:achieve_target_weight}.

\begin{lemma}\label{lemma:relation_of_weights}
	For any stock $i = 1, \dots, N$ and any time $t + \frac{k}{M} > 0$, the weight $w_{i, (t+\frac{k}{M})-}$ and $w_{i, t+\frac{k-1}{M}}$ have the following relationship
	\allowdisplaybreaks
	\begin{align}
		& w_{i, (t+\frac{k}{M})-} = \frac{w_{i, t+\frac{k-1}{M}} p_{i,t+\frac{k}{M}}}{p_{i,t+\frac{k-1}{M}} \sum_{j=1}^{N} w_{j,t+\frac{k-1}{M}} \frac{p_{j,t+\frac{k}{M}}}{p_{j,t+\frac{k-1}{M}}}}. \notag
	\end{align}
\end{lemma}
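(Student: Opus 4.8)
The plan is to exploit two elementary facts about the dynamics between consecutive rebalancing times: no trading takes place strictly between the rebalancing at $t+\frac{k-1}{M}$ and the instant just before the rebalancing at $t+\frac{k}{M}$, so the share counts are unchanged; and after each rebalancing the portfolio holds only stocks (the after-rebalancing weights sum to one, forcing zero cash), so the before-rebalancing portfolio value is exactly the sum of the before-rebalancing stock values. Starting from the definition $w_{i, (t+\frac{k}{M})-} = V_{i, (t+\frac{k}{M})-}/V_{(t+\frac{k}{M})-}$, I would rewrite each before-rebalancing stock value in terms of the after-rebalancing value at the previous time $t+\frac{k-1}{M}$ and a price ratio, then divide.

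First I would record that, since no trade occurs in the open interval between the two rebalancings, $x_{i, (t+\frac{k}{M})-} = x_{i, t+\frac{k-1}{M}}$ for every $i$. Multiplying by the corresponding prices and using $V_{i,s} = p_{i,s} x_{i,s}$ then gives $V_{i, (t+\frac{k}{M})-} = V_{i, t+\frac{k-1}{M}} \cdot \frac{p_{i, t+\frac{k}{M}}}{p_{i, t+\frac{k-1}{M}}}$. Substituting $V_{i, t+\frac{k-1}{M}} = w_{i, t+\frac{k-1}{M}} V_{t+\frac{k-1}{M}}$ expresses the numerator of the weight as $w_{i, t+\frac{k-1}{M}} V_{t+\frac{k-1}{M}} \frac{p_{i, t+\frac{k}{M}}}{p_{i, t+\frac{k-1}{M}}}$.

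Next I would compute the denominator. Because the after-rebalancing weights at $t+\frac{k-1}{M}$ sum to one (hence no cash is held) and no cash flow occurs before the next rebalancing, the total value just before $t+\frac{k}{M}$ is $V_{(t+\frac{k}{M})-} = \sum_{j=1}^N V_{j, (t+\frac{k}{M})-} = V_{t+\frac{k-1}{M}} \sum_{j=1}^N w_{j, t+\frac{k-1}{M}} \frac{p_{j, t+\frac{k}{M}}}{p_{j, t+\frac{k-1}{M}}}$. Forming the ratio of numerator to denominator, the common factor $V_{t+\frac{k-1}{M}}$ cancels, and a routine rearrangement yields the claimed identity.

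The argument is essentially bookkeeping, so I do not anticipate a genuine obstacle; the one point requiring care is the treatment of the cash position, namely verifying that the portfolio is all-stock both just after $t+\frac{k-1}{M}$ and just before $t+\frac{k}{M}$ so that the denominator carries no cash term, which follows from the unit-sum weight constraint together with the absence of interim cash flows. A secondary subtlety is the boundary case $k=0$, where $t+\frac{k-1}{M}$ denotes the final rebalancing time $(t-1)+\frac{M-1}{M}$ of the previous period; the same no-trading-and-no-cash reasoning applies verbatim across the period boundary, so no separate treatment is needed.
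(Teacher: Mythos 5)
Your proposal is correct and follows essentially the same route as the paper's own proof: shares are unchanged between rebalancings, so $V_{i,(t+\frac{k}{M})-} = V_{t+\frac{k-1}{M}}\, w_{i,t+\frac{k-1}{M}}\, p_{i,t+\frac{k}{M}}/p_{i,t+\frac{k-1}{M}}$, the portfolio value is the sum of these stock values (no cash, by the unit-sum weight constraint), and the common factor $V_{t+\frac{k-1}{M}}$ cancels in the ratio. Your explicit remarks on the zero-cash verification and the $k=0$ boundary case are sound additions that the paper leaves implicit.
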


\begin{proof}
For any stock $i$ and any rebalancing time $t + \frac{k}{M} > 0$, the number of shares right before the rebalancing is
\begin{equation}
	x_{i,(t+\frac{k}{M})-} = x_{i, t+\frac{k-1}{M}} = \frac{w_{i,t+\frac{k-1}{M}} V_{t+\frac{k-1}{M}}}{p_{i,t+\frac{k-1}{M}}}, \notag
\end{equation}
then the value of stock $i$ right before the rebalancing can be represented as
\begin{equation}
	V_{i,(t+\frac{k}{M})-} = p_{i,t+\frac{k}{M}} x_{i,(t+\frac{k}{M})-} = V_{t+\frac{k-1}{M}} \frac{w_{i,t+\frac{k-1}{M}} p_{i,t+\frac{k}{M}}}{p_{i,t+\frac{k-1}{M}}}, \notag
\end{equation}
and then the portfolio value right before the rebalancing is
\begin{equation}
	V_{(t+\frac{k}{M})-} = \sum_{i=1}^{N} V_{i,(t+\frac{k}{M})-} = V_{t+\frac{k-1}{M}}\sum_{i=1}^{N} w_{i,t+\frac{k-1}{M}} \frac{p_{i,t+\frac{k}{M}}}{p_{i,t+\frac{k-1}{M}}}. \notag
\end{equation}
Hence, the weight of stock $i$ right before the rebalancing is
\begin{align}
	w_{i,(t+\frac{k}{M})-} &= \frac{V_{i,(t+\frac{k}{M})-}}{V_{(t+\frac{k}{M})-}} = \frac{V_{t+\frac{k-1}{M}} \frac{w_{i,t+\frac{k-1}{M}} p_{i,t+\frac{k}{M}}}{p_{i,t+\frac{k-1}{M}}}}{V_{t+\frac{k-1}{M}}\sum_{j=1}^{N} w_{j,t+\frac{k-1}{M}} \frac{p_{j,t+\frac{k}{M}}}{p_{j,t+\frac{k-1}{M}}}} =  \frac{w_{i,t+\frac{k-1}{M}} p_{i,t+\frac{k}{M}}}{p_{i,t+\frac{k-1}{M}} \sum_{j=1}^{N} w_{j,t+\frac{k-1}{M}} \frac{p_{j,t+\frac{k}{M}}}{p_{j,t+\frac{k-1}{M}}}}. \notag
\end{align}
\end{proof}

\subsection{Transaction Costs with Regulatory Fees}\label{appendix:regulatory_fees}

In this section, we take into account regulatory fees into the transaction cost. More specifically, we consider two regulatory fees: the SEC transaction fee and the FINRA trading activity fee, which are 
respectively proportional to the value of aggregate sales and 
to the quantity sold per order. More precisely, the two fees are respectively defined as
\begin{align}\label{eq:SEC_fee}
	& c^{\text{S}}_{t + \frac{k}{M}}(V_{t + \frac{k}{M}}) = \nu_1 \sum_{i = 1}^{N} \left(w_{i, t + \frac{k}{M}} V_{t + \frac{k}{M}} - V_{i, (t + \frac{k}{M})-}  \right)^-,\\
	& c^{\text{F}}_{t + \frac{k}{M}}(V_{t + \frac{k}{M}}) = \nu_2 \sum_{i = 1}^{N} \left( \frac{w_{i, t + \frac{k}{M}} V_{t + \frac{k}{M}} - V_{i, (t + \frac{k}{M})-}}{p_{i, t + \frac{k}{M}}} \right)^-,\label{eq:FINRA_fee}
\end{align}
where $\nu_1, \nu_2 > 0$ are two constants. 

The following proposition shows that the conclusion in Proposition 
\ref{prop:achieve_target_weight} still holds when the transaction cost include regulatory fees.

\begin{proposition}\label{prop:regulatory_fees}
	Let the total transaction cost be
	\begin{equation}\label{eq:tc_with_reg_fee}
		\tilde c_{t+\frac{k}{M}}(V_{t+\frac{k}{M}}) = c_{t+\frac{k}{M}}(V_{t+\frac{k}{M}}) + c^{\text{S}}_{t+\frac{k}{M}}(V_{t+\frac{k}{M}}) + c^{\text{F}}_{t+\frac{k}{M}}(V_{t+\frac{k}{M}}),
	\end{equation}
	where $c_{t+\frac{k}{M}}(V_{t+\frac{k}{M}})$, $c^{\text{S}}_{t+\frac{k}{M}}(V_{t+\frac{k}{M}})$, and $c^{\text{F}}_{t+\frac{k}{M}}(V_{t+\frac{k}{M}})$ are respectively defined in \eqref{eq:tc}, \eqref{eq:SEC_fee}, and \eqref{eq:FINRA_fee}. 
	
	($i$) If $V_{(t+\frac{k}{M})-}>0$, $\sum_{i=1}^{N}( \xi_{2i} + \nu_1 + \frac{\nu_2}{p_{i, t + \frac{k}{M}}} ) | w_{i,(t+\frac{k}{M})-} |<1$, and
	$\sum_{i=1}^{N} ( \frac{\xi_{1i}}{p_{i, t + \frac{k}{M}}} + \xi_{2i} + \nu_1 + \frac{\nu_2}{p_{i, t + \frac{k}{M}}} ) |w_{i, t + \frac{k}{M}}| < 1$, then 
	the function $\Psi_1(V_{t+\frac{k}{M}}) = V_{(t+\frac{k}{M})-} - \tilde c_{t+\frac{k}{M}}(V_{t+\frac{k}{M}})$ is a contraction on $\mathbb{R}$ with 
	the contraction coefficient $\sum_{i=1}^{N} ( \frac{\xi_{1i}}{p_{i, t + \frac{k}{M}}} + \xi_{2i} + \nu_1 + \frac{\nu_2}{p_{i, t + \frac{k}{M}}} ) |w_{i, t + \frac{k}{M}}|$ and a unique fixed point $V^*_{t+\frac{k}{M}}$, which is the unique solution to the rebalancing equation $\Psi_1(V_{t+\frac{k}{M}})=V_{t+\frac{k}{M}}$. In addition, the solution satisfies $0 < V^*_{t+\frac{k}{M}} \le V_{(t+\frac{k}{M})-}$.
	
	($ii$) If all the conditions in (i) hold, and $0 < \xi < 1 - \sum_{i=1}^{N} ( \xi_{2i} + \nu_1 + \frac{\nu_2}{p_{i, t + \frac{k}{M}}} ) | w_{i,(t+\frac{k}{M})-} |$, and
	$h_{t+\frac{k}{M}} \ge -\xi V_{(t+\frac{k}{M})-}$, then 
	the function $\Psi_2(V_{t+\frac{k}{M}}) = V_{(t+\frac{k}{M})-} + h_{t+\frac{k}{M}} - \tilde c_{t+\frac{k}{M}}(V_{t+\frac{k}{M}})$ is a contraction on $\mathbb{R}$ with 
	the contraction coefficient $\sum_{i=1}^{N} ( \frac{\xi_{1i}}{p_{i, t + \frac{k}{M}}} + \xi_{2i} + \nu_1 + \frac{\nu_2}{p_{i, t + \frac{k}{M}}} ) |w_{i, t + \frac{k}{M}}|$ and a unique fixed point $V^*_{t+\frac{k}{M}}$, which is the unique solution to the rebalancing equation $\Psi_2(V_{t+\frac{k}{M}})=V_{t+\frac{k}{M}}$. In addition, the solution satisfies $0 < V^*_{t+\frac{k}{M}} \le V_{(t+\frac{k}{M})-} + h_{t+\frac{k}{M}}$.
\end{proposition}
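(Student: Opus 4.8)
The plan is to follow the same two-step strategy as in the proof of Proposition~\ref{prop:achieve_target_weight}: first show that $\Psi_1$ and $\Psi_2$ are contractions on $\mathbb{R}$, so that the Banach Fixed Point Theorem supplies a unique fixed point, which by Lemma~\ref{lemma:portfolio_value_equaility} is the unique solution of the corresponding rebalancing equation; then bound that fixed point to get $0 < V^*_{\tk} \le V_{(\tk)-}$ in part~$(i)$ and $0 < V^*_{\tk} \le V_{(\tk)-} + h_{\tk}$ in part~$(ii)$. Since the only new feature is the two extra fee terms $c^{\text{S}}_{\tk}$ and $c^{\text{F}}_{\tk}$, the work reduces to controlling their effect on the Lipschitz constant and on the lower-bound inequality.

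For the contraction step I would write $\tilde c_{\tk} = c_{\tk} + c^{\text{S}}_{\tk} + c^{\text{F}}_{\tk}$ and estimate term by term. Lemma~\ref{lemma:contraction} already gives that $c_{\tk}$ is Lipschitz with constant $\sum_{i=1}^N (\frac{\xi_{1i}}{p_{i,\tk}} + \xi_{2i})|w_{i,\tk}|$. The key remark for the regulatory fees is that the negative-part function $x^- = \max(0,-x)$ is $1$-Lipschitz, so $V \mapsto (w_{i,\tk}V - V_{i,(\tk)-})^-$ is $|w_{i,\tk}|$-Lipschitz and $V \mapsto \big(\frac{w_{i,\tk}V - V_{i,(\tk)-}}{p_{i,\tk}}\big)^-$ is $\frac{|w_{i,\tk}|}{p_{i,\tk}}$-Lipschitz. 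Summing, $c^{\text{S}}_{\tk}$ and $c^{\text{F}}_{\tk}$ have Lipschitz constants $\nu_1\sum_i |w_{i,\tk}|$ and $\nu_2\sum_i \frac{|w_{i,\tk}|}{p_{i,\tk}}$, so the overall Lipschitz constant of $\tilde c_{\tk}$ is exactly the claimed coefficient $\sum_{i=1}^N (\frac{\xi_{1i}}{p_{i,\tk}} + \xi_{2i} + \nu_1 + \frac{\nu_2}{p_{i,\tk}})|w_{i,\tk}|$, which is $<1$ by hypothesis. The upper bounds on $V^*_{\tk}$ then follow at once because $c_{\tk}, c^{\text{S}}_{\tk}, c^{\text{F}}_{\tk} \ge 0$.

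For the lower bound $V^*_{\tk} > 0$ I would adapt the sign-splitting argument of part~$(ii)$ of Proposition~\ref{prop:achieve_target_weight}. After relabeling so that $V_{i,(\tk)-} \ge w_{i,\tk} V^*_{\tk}$ holds precisely for $i \le \tilde N$, I would use the simplification that $(w_{i,\tk}V^*_{\tk} - V_{i,(\tk)-})^-$ vanishes on the ``buying'' indices $i > \tilde N$, so the two fees contribute only over $i \le \tilde N$ and are evaluated \emph{exactly} (no $\min$/$\max$ relaxation is needed). Combining $c_{\tk}(V^*_{\tk}) \le \sum_i \xi_{2i}|V_{i,(\tk)-} - w_{i,\tk}V^*_{\tk}|$ with these exact fee expressions, substituting $V_{i,(\tk)-} = w_{i,(\tk)-}V_{(\tk)-}$, and solving for $V^*_{\tk}$ gives an inequality $(\text{left coeff})\,V^*_{\tk} \ge (\text{right coeff})\,V_{(\tk)-} + h_{\tk}$. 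The hypotheses $\sum_i (\xi_{2i}+\nu_1+\frac{\nu_2}{p_{i,\tk}})|w_{i,\tk}| < 1$ and $\sum_i (\xi_{2i}+\nu_1+\frac{\nu_2}{p_{i,\tk}})|w_{i,(\tk)-}| < 1$ keep the left coefficient positive and keep the right coefficient bounded below by $1 - \sum_i (\xi_{2i}+\nu_1+\frac{\nu_2}{p_{i,\tk}})|w_{i,(\tk)-}|$; together with $h_{\tk} \ge -\xi V_{(\tk)-}$, the bound $\xi < 1 - \sum_i (\xi_{2i}+\nu_1+\frac{\nu_2}{p_{i,\tk}})|w_{i,(\tk)-}|$, and $V_{(\tk)-} > 0$, this forces the right-hand side to be strictly positive, whence $V^*_{\tk} > 0$. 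Part~$(i)$ then follows from part~$(ii)$ by setting $h_{\tk} = 0$ and $\xi = \frac{1}{2}\big(1 - \sum_i (\xi_{2i}+\nu_1+\frac{\nu_2}{p_{i,\tk}})|w_{i,(\tk)-}|\big)$, exactly as in the base result.

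The routine Lipschitz estimate is not where the difficulty lies; the main obstacle is the sign bookkeeping in the lower-bound inequality once the two sign-dependent fee terms are present. What makes it go through cleanly is that the regulatory fees are charged only on sales, so their negative-part structure aligns with the ``selling'' indices $i \le \tilde N$ in the splitting, and the extra terms they produce carry precisely the coefficients $\nu_1$ and $\frac{\nu_2}{p_{i,\tk}}$ that have been absorbed into the strengthened hypotheses. I would verify carefully that these terms enter with the correct sign so that the coefficient of $V^*_{\tk}$ remains at least $1 - \sum_i (\xi_{2i}+\nu_1+\frac{\nu_2}{p_{i,\tk}})|w_{i,\tk}| > 0$.
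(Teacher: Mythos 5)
Your proposal is correct and follows essentially the same route as the paper's proof: term-by-term Lipschitz estimates (using the $1$-Lipschitz property of the negative-part function, which is the paper's Lemma \ref{lemma:neg_part_ineq}, together with Lemma \ref{lemma:contraction} for $c_{t+\frac{k}{M}}$), the Banach Fixed Point Theorem, nonnegativity of $\tilde c_{t+\frac{k}{M}}$ for the upper bound, the sign-splitting argument at $\tilde N$ for the lower bound, and deduction of part $(i)$ from part $(ii)$ with $h_{t+\frac{k}{M}}=0$ and the same choice of $\xi$. The only (immaterial) difference is in the sign-splitting step: you evaluate the regulatory fees exactly, noting they vanish on the buying indices $i>\tilde N$, whereas the paper relaxes the bound by attaching the coefficients $\nu_1+\frac{\nu_2}{p_{i,t+\frac{k}{M}}}$ to the buying indices as well for notational symmetry; both bounds yield the same positivity conclusion under the stated hypotheses.
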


The conditions in Proposition \ref{prop:regulatory_fees} almost always hold as the constants $\nu_1$ and $\nu_2$ are usually sufficiently small, although they may vary from time to time. For example, $\nu_1 = 0.0000229$ and $\nu_2 = 0.00013$ are currently specified.\footnote{\url{https://www.interactivebrokers.com/en/index.php?f=49637}.}

To prove Proposition \ref{prop:regulatory_fees}, we first prove the following simple lemma.

\begin{lemma}\label{lemma:neg_part_ineq}
	For any $a, b \in \mathbb{R}$, $|a^- - b^-| \le |a - b|$. 
\end{lemma}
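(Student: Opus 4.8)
The plan is to deduce the inequality from the reverse triangle inequality by way of the algebraic identity for the negative part, in the same spirit as the identity $\min\{a,b\}=\frac{1}{2}(a+b-|a-b|)$ that drives Lemma~\ref{lemma:min_inequality}. The first step is to record that for every $x\in\mathbb{R}$ one has $x^-=\frac{1}{2}(|x|-x)$; this follows at once by distinguishing the cases $x\ge 0$ (both sides vanish) and $x<0$ (both sides equal $-x$).

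With this identity in hand I would compute $a^- - b^- = \frac{1}{2}(|a|-|b|) - \frac{1}{2}(a-b)$ and then apply the ordinary triangle inequality to obtain $|a^- - b^-|\le \frac{1}{2}\bigl||a|-|b|\bigr| + \frac{1}{2}|a-b|$. The decisive step is to control the first summand by the reverse triangle inequality $\bigl||a|-|b|\bigr|\le|a-b|$, after which both terms are bounded by $\frac{1}{2}|a-b|$ and the claim $|a^- - b^-|\le|a-b|$ follows by addition.

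There is essentially no obstacle here: the lemma merely asserts that the negative-part map is $1$-Lipschitz, and the identity-plus-reverse-triangle argument delivers this in two lines. An alternative would be a direct case analysis over the four sign configurations of $(a,b)$, but that is more tedious and less transparent; the only point deserving a moment's care in the chosen route is the correct invocation of the reverse triangle inequality. This lemma will then feed into the contraction estimates for the regulatory-fee transaction cost, exactly as Lemmas~\ref{lemma:min_inequality} and \ref{lemma:max_inequality} do for the basic cost in Equation~\eqref{eq:tc}.
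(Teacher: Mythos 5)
Your proof is correct and takes essentially the same route as the paper's: both rest on the identity $x^-=\tfrac{1}{2}(|x|-x)$ followed by the reverse triangle inequality $\bigl||a|-|b|\bigr|\le|a-b|$. The only cosmetic difference is that you take absolute values at the outset and split off the two terms, whereas the paper bounds the one-sided difference $a^- - b^- \le (b-a)^+ \le |a-b|$ and then concludes by symmetry.
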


\begin{proof} We have
\begin{align}\label{regulatory_fees_lemma_le}
	a^- - b^- &=  \frac{1}{2} \left[ |a| - a - (|b|- b)\right] =  \frac{1}{2} \left[ (b - a) + |a| - |b| \right]\notag\\
	&\le  \frac{1}{2} \left[ (b - a) + |b - a| \right]= (b-a)^+ \le |a - b|.
\end{align}
By symmetry, we have $b^- - a^- \le |b - a|$, so the conclusion follows.
\end{proof}

Let us then prove Proposition \ref{prop:regulatory_fees}.

\begin{proof}[Proof of Proposition \ref{prop:regulatory_fees}]
We first prove part ($ii$). We will first show that each term on the right-hand side of \eqref{eq:tc_with_reg_fee} is a contraction on $\mathbb{R}$, and then show that $\Psi_2(x)$ is also a contraction on $\mathbb{R}$.

For any $x, y \in \mathbb{R}$,
\begin{align}\label{eq:c1_k}
	\left|  c^{\text{S}}_{t + \frac{k}{M}}(x) - c^{\text{S}}_{t + \frac{k}{M}}(y) \right| &= \nu_1 \left| \sum_{i=1}^{N} \left[ \left(w_{i, t + \frac{k}{M}} x - V_{i, (t + \frac{k}{M})-}  \right)^- -  \left(w_{i, t + \frac{k}{M}} y - V_{i, (t + \frac{k}{M})-}  \right)^- \right]\right| \notag \\
	&\le \sum_{i=1}^{N} \nu_1 \left| \left(w_{i, t + \frac{k}{M}} x - V_{i, (t + \frac{k}{M})-}  \right)^- -  \left(w_{i, t + \frac{k}{M}} y - V_{i, (t + \frac{k}{M})-}  \right)^- \right|  \notag \\
	&\le |x - y| \sum_{i=1}^{N} \nu_1 \left|w_{i, t + \frac{k}{M}}\right| \ (\text{by Lemma} \ \ref{lemma:neg_part_ineq}).
\end{align}

Thus, $c^{\text{S}}_{t + \frac{k}{M}}$ is a contraction as $\nu_1 \sum_{i=1}^{N}|w_{i, t + \frac{k}{M}}| < 1$.
Similarly, we have 
\begin{align}\label{eq:c2_k}
	\left|  c^{\text{F}}_{t + \frac{k}{M}}(x) - c^{\text{F}}_{t + \frac{k}{M}}(y) \right|
	&\le |x - y| \sum_{i=1}^{N} \frac{\nu_2}{p_{i, t + \frac{k}{M}}} \left|w_{i, t + \frac{k}{M}}\right|. 
\end{align}

By the conditions of the proposition, $\sum_{i=1}^{N} \frac{\nu_2}{p_{i, t + \frac{k}{M}}} |w_{i, t + \frac{k}{M}}| < 1$. Hence, $c^{\text{F}}_{t + \frac{k}{M}}$ is a contraction.

It follows from Equation \eqref{eq:tr_cost_k}, \eqref{eq:c1_k} and \eqref{eq:c2_k} that $|\tilde c_{t+\frac{k}{M}}(x) - \tilde c_{t+\frac{k}{M}}(y)| \le \alpha |x - y|$ for all $x, y \in \mathbb{R}$, where 
$\alpha =\sum_{i=1}^{N} ( \frac{\xi_{1i}}{p_{i, t + \frac{k}{M}}} + \xi_{2i} + \nu_1 + \frac{\nu_2}{p_{i, t + \frac{k}{M}}} )|w_{i, t + \frac{k}{M}}| < 1$ by the conditions of the proposition. Since $V_{(t+\frac{k}{M})-} + h_{t+\frac{k}{M}}$ is a constant, $\Psi_2(x)$ is a contraction on $\mathbb{R}$. By the Banach Fixed Point Theorem, $\Psi_2$ has a unique fixed point $V^*_{t+\frac{k}{M}}$, i.e., 
\begin{align}\label{eq:proposition2_rebal_eq}
	V^*_{t+\frac{k}{M}} = \Psi_2(V^*_{t+\frac{k}{M}})= V_{(t+\frac{k}{M})-} + h_{t+\frac{k}{M}} - \tilde c_{t+\frac{k}{M}}(V^*_{t+\frac{k}{M}}).
\end{align}
Therefore, $V^*_{t+\frac{k}{M}}$ is the unique solution to the rebalancing equation $\Psi_2(V_{t+\frac{k}{M}}) = V_{t+\frac{k}{M}}$. Since $\tilde c_{t+\frac{k}{M}}(V^*_{t+\frac{k}{M}})\geq 0$, it follows that 
$V^*_{t+\frac{k}{M}} = V_{(t+\frac{k}{M})-} + h_{t+\frac{k}{M}} - \tilde c_{t+\frac{k}{M}}(V^*_{t+\frac{k}{M}})\leq V_{(t+\frac{k}{M})-} + h_{t+\frac{k}{M}}$. The argument of showing $V^*_{t+\frac{k}{M}} > 0$ is similar to that in the proof of Proposition \ref{prop:achieve_target_weight}. Without loss of generality, suppose there exists an integer $\tilde{N} \le N$ such that $i \le \tilde{N}$ implies
\begin{align}
	V_{i,(t+\frac{k}{M})-} \ge w_{i,t+\frac{k}{M}} V^*_{t+\frac{k}{M}}, \notag
\end{align}
and $\tilde{N} < i \le N$ implies
\begin{align}
	V_{i,(t+\frac{k}{M})-} < w_{i,t+\frac{k}{M}} V^*_{t+\frac{k}{M}}. \notag
\end{align}
Then
\begin{align}
	\tilde{c}_{t+\frac{k}{M}}(V^*_{t+\frac{k}{M}}) \le& \sum_{i=1}^{\tilde{N}} \left( \xi_{2i} + \nu_1 + \frac{\nu_2}{p_{i, t + \frac{k}{M}}} \right) 
	\left(V_{i,(t+\frac{k}{M})-} - w_{i,t+\frac{k}{M}} V^*_{t+\frac{k}{M}} \right) \notag \\
	&+ \sum_{i=\tilde{N}+1}^{N} \left( \xi_{2i} + \nu_1 + \frac{\nu_2}{p_{i, t + \frac{k}{M}}} \right) \left(w_{i,t+\frac{k}{M}} V^*_{t+\frac{k}{M}}  - V_{i,(t+\frac{k}{M})-}\right), \notag
\end{align}
which implies by Equation \eqref{eq:proposition2_rebal_eq} that
\begin{align}\label{eq:proposition2_eq3}
	&\left[1 - \sum_{i=1}^{\tilde{N}} \left( \xi_{2i} + \nu_1 + \frac{\nu_2}{p_{i, t + \frac{k}{M}}} \right) w_{i,t+\frac{k}{M}} + \sum_{i=\tilde{N}+1}^{N} \left( \xi_{2i} + \nu_1 + \frac{\nu_2}{p_{i, t + \frac{k}{M}}} \right) w_{i,t+\frac{k}{M}} \right] V^*_{t+\frac{k}{M}} \notag \\
	\ge& \left[1 - \sum_{i=1}^{\tilde{N}} \left( \xi_{2i} + \nu_1 + \frac{\nu_2}{p_{i, t + \frac{k}{M}}} \right) w_{i,(t+\frac{k}{M})-} + \sum_{i=\tilde{N}+1}^{N} \left( \xi_{2i} + \nu_1 + \frac{\nu_2}{p_{i, t + \frac{k}{M}}} \right) w_{i,(t+\frac{k}{M})-} \right] V_{(t+\frac{k}{M})-} \notag \\
	& + h_{t+\frac{k}{M}}.
\end{align}

Notice that by the conditions of the proposition,
\begin{align}\label{eq:proposition2_eq4}
	&\sum_{i=1}^{\tilde{N}} \left( \xi_{2i} + \nu_1 + \frac{\nu_2}{p_{i, t + \frac{k}{M}}} \right) w_{i,t+\frac{k}{M}} - \sum_{i=\tilde{N}+1}^{N} \left( \xi_{2i} + \nu_1 + \frac{\nu_2}{p_{i, t + \frac{k}{M}}} \right) w_{i,t+\frac{k}{M}} \notag \\
	\le& \sum_{i=1}^{N} \left(\xi_{2i} + \nu_1 + \frac{\nu_2}{p_{i, t + \frac{k}{M}}}\right) \left|w_{i, t + \frac{k}{M}}\right| \notag \\
	<&1,
\end{align}
and
\begin{align}\label{eq:proposition2_eq5}
	&\sum_{i=1}^{\tilde{N}} \left( \xi_{2i} + \nu_1 + \frac{\nu_2}{p_{i, t + \frac{k}{M}}} \right) w_{i,(t+\frac{k}{M})-} - \sum_{i=\tilde{N}+1}^{N} \left( \xi_{2i} + \nu_1 + \frac{\nu_2}{p_{i, t + \frac{k}{M}}} \right) w_{i,(t+\frac{k}{M})-} \notag \\
	\le& \sum_{i=1}^{N} \left(\xi_{2i} + \nu_1 + \frac{\nu_2}{p_{i, t + \frac{k}{M}}}\right) \left| w_{i,(t+\frac{k}{M})-} \right| \notag \\
	<& 1.
\end{align}

By the conditions of the proposition, 
\begin{align}
	h_{t+\frac{k}{M}} &\ge \ -\xi V_{(t+\frac{k}{M})-} > -\left[1 - \sum_{i=1}^{N}\left( \xi_{2i} + \nu_1 + \frac{\nu_2}{p_{i, t + \frac{k}{M}}} \right) \left| w_{i,(t+\frac{k}{M})-} \right| \right] V_{(t+\frac{k}{M})-}, \notag
\end{align}
which, together with \eqref{eq:proposition2_eq3} to \eqref{eq:proposition2_eq5} and $V_{(t+\frac{k}{M})-} > 0$, implies that $V^*_{t+\frac{k}{M}} > 0$.

We then prove part $(i)$. Suppose the conditions in part $(i)$ hold. Let $h_{t+\frac{k}{M}} = 0$ and define $\xi = \frac{1}{2}(1 - \sum_{i=1}^{N}( \xi_{2i} + \nu_1 + \frac{\nu_2}{p_{i, t + \frac{k}{M}}} ) | w_{i,(t+\frac{k}{M})-}|)$. Then the conditions in part $(ii)$ hold. Applying part $(ii)$ leads to the conclusion of part $(i)$.
\end{proof}

\subsection{Fixed and Proportional Transaction Costs}\label{appendix:fixed_proportional_cost}

In this section, we show that our method in Proposition \ref{prop:achieve_target_weight} also works for the simple form of fixed and proportional transaction cost. We have the following proposition.

\begin{proposition}\label{prop:fixed_proportional_cost}
	Consider the following proportional and fixed transaction cost
	\begin{align}
		c^\text{P}_{t+\frac{k}{M}}(V_{t+\frac{k}{M}}) & = \sum_{i=1}^{N} \left(\xi_{3i} + \xi_{4i} \left|w_{i,t+\frac{k}{M}} V_{t+\frac{k}{M}} - V_{i,(t+\frac{k}{M})-} \right|\right),
	\end{align}	
	where $\xi_{3i}$ and $\xi_{4i}$ are nonnegative constants, $i = 1, 2, \ldots, N$.
	
	($i$) If $V_{(t+\frac{k}{M})-} > 0$, $\sum_{i=1}^{N} \xi_{4i} | w_{i, t+\frac{k}{M}} | < 1$, and $\sum_{i=1}^{N} \xi_{4i} | w_{i,(t+\frac{k}{M})-} | + \frac{\sum_{i=1}^{N}\xi_{3i}}{V_{(t+\frac{k}{M})-}} < 1$, then 
	the function $\Psi_1(V_{t+\frac{k}{M}}) = V_{(t+\frac{k}{M})-} - c^\text{P}_{t+\frac{k}{M}}(V_{t+\frac{k}{M}})$ is a contraction on $\mathbb{R}$ with 
	the contraction coefficient $\sum_{i=1}^{N} \xi_{4i} |w_{i, t + \frac{k}{M}}|$ and a unique fixed point $V^*_{t+\frac{k}{M}}$, which is the unique solution to the rebalancing equation $\Psi_1(V_{t+\frac{k}{M}})=V_{t+\frac{k}{M}}$. In addition, the solution satisfies $0 < V^*_{t+\frac{k}{M}} \le V_{(t+\frac{k}{M})-}$.
	
	($ii$) If all the conditions in (i) hold, and $0 < \xi < 1 - \sum_{i=1}^{N} \xi_{4i} | w_{i,(t+\frac{k}{M})-} | - \frac{\sum_{i=1}^{N}\xi_{3i}}{V_{(t+\frac{k}{M})-}}$, and
	$h_{t+\frac{k}{M}} \ge -\xi V_{(t+\frac{k}{M})-}$, then 
	the function $\Psi_2(V_{t+\frac{k}{M}}) = V_{(t+\frac{k}{M})-} + h_{t+\frac{k}{M}} - c^\text{P}_{t+\frac{k}{M}}(V_{t+\frac{k}{M}})$ is a contraction on $\mathbb{R}$ with 
	the contraction coefficient $\sum_{i=1}^{N} \xi_{4i} |w_{i, t + \frac{k}{M}}|$ and a unique fixed point $V^*_{t+\frac{k}{M}}$, which is the unique solution to the rebalancing equation $\Psi_2(V_{t+\frac{k}{M}})=V_{t+\frac{k}{M}}$. In addition, the solution satisfies $0 < V^*_{t+\frac{k}{M}} \le V_{(t+\frac{k}{M})-} + h_{t+\frac{k}{M}}$.
\end{proposition}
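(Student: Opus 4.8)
The plan is to follow the proof of Proposition \ref{prop:achieve_target_weight} almost verbatim, since the only structural change is that the min--max transaction cost is replaced by the simpler fixed-plus-proportional form $c^\text{P}_{t+\frac{k}{M}}$. I would again prove part $(ii)$ first and then deduce part $(i)$ as a special case, so the bulk of the work lies in the contraction estimate and in the strict positivity bound $V^*_{t+\frac{k}{M}}>0$ for $\Psi_2$.

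For the contraction, the fixed costs $\xi_{3i}$ are constants and therefore cancel in any difference $\Psi_2(x)-\Psi_2(y)$, so they contribute nothing to the Lipschitz constant. The proportional part is a sum of terms $\xi_{4i}\bigl|w_{i,t+\frac{k}{M}}x-V_{i,(t+\frac{k}{M})-}\bigr|$, and applying the elementary reverse-triangle inequality $\bigl||a|-|b|\bigr|\le|a-b|$ termwise gives $|\Psi_2(x)-\Psi_2(y)|\le\bigl(\sum_{i=1}^N\xi_{4i}|w_{i,t+\frac{k}{M}}|\bigr)|x-y|$. Thus no analogue of Lemmas \ref{lemma:min_inequality}--\ref{lemma:max_inequality} is needed here; the single reverse-triangle inequality suffices. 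Under the hypothesis $\sum_{i=1}^N\xi_{4i}|w_{i,t+\frac{k}{M}}|<1$ this coefficient is below one, so $\Psi_2$ is a contraction, the Banach Fixed Point Theorem yields a unique fixed point $V^*_{t+\frac{k}{M}}$, and Lemma \ref{lemma:portfolio_value_equaility} identifies it as the unique solution of the rebalancing equation. The upper bound $V^*_{t+\frac{k}{M}}\le V_{(t+\frac{k}{M})-}+h_{t+\frac{k}{M}}$ is immediate from $c^\text{P}_{t+\frac{k}{M}}\ge0$.

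The remaining step is the positivity $V^*_{t+\frac{k}{M}}>0$, which I would handle exactly as in Proposition \ref{prop:achieve_target_weight}: relabel the stocks so that there is an index $\tilde N$ with $V_{i,(t+\frac{k}{M})-}\ge w_{i,t+\frac{k}{M}}V^*_{t+\frac{k}{M}}$ for $i\le\tilde N$ and the reverse inequality for $i>\tilde N$, which removes the absolute values in $c^\text{P}_{t+\frac{k}{M}}(V^*_{t+\frac{k}{M}})$. Substituting into the fixed-point identity $V^*_{t+\frac{k}{M}}=V_{(t+\frac{k}{M})-}+h_{t+\frac{k}{M}}-c^\text{P}_{t+\frac{k}{M}}(V^*_{t+\frac{k}{M}})$, collecting the $V^*_{t+\frac{k}{M}}$ terms, and using $V_{i,(t+\frac{k}{M})-}=w_{i,(t+\frac{k}{M})-}V_{(t+\frac{k}{M})-}$, yields a relation $A\,V^*_{t+\frac{k}{M}}=B\,V_{(t+\frac{k}{M})-}+h_{t+\frac{k}{M}}-\sum_{i=1}^N\xi_{3i}$, where $A=1-\sum_{i\le\tilde N}\xi_{4i}w_{i,t+\frac{k}{M}}+\sum_{i>\tilde N}\xi_{4i}w_{i,t+\frac{k}{M}}$ and $B$ is the same expression with $w_{i,(t+\frac{k}{M})-}$. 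The conditions $\sum_i\xi_{4i}|w_{i,t+\frac{k}{M}}|<1$ and $\sum_i\xi_{4i}|w_{i,(t+\frac{k}{M})-}|<1$ force $A>0$ and $B\ge1-\sum_i\xi_{4i}|w_{i,(t+\frac{k}{M})-}|>0$.

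The main obstacle, and the genuinely new feature compared with Proposition \ref{prop:achieve_target_weight}, is controlling the additive fixed-cost constant $\sum_{i=1}^N\xi_{3i}$: positivity of $V^*_{t+\frac{k}{M}}$ now requires $B\,V_{(t+\frac{k}{M})-}+h_{t+\frac{k}{M}}-\sum_{i=1}^N\xi_{3i}>0$ rather than merely $B\,V_{(t+\frac{k}{M})-}+h_{t+\frac{k}{M}}>0$. Here I would combine $h_{t+\frac{k}{M}}\ge-\xi V_{(t+\frac{k}{M})-}$ with the hypothesis $\xi<1-\sum_i\xi_{4i}|w_{i,(t+\frac{k}{M})-}|-\frac{\sum_i\xi_{3i}}{V_{(t+\frac{k}{M})-}}$, which is precisely engineered so that $(B-\xi)V_{(t+\frac{k}{M})-}>\sum_{i=1}^N\xi_{3i}$ and hence $V^*_{t+\frac{k}{M}}>0$; this is exactly why the positivity condition here carries the extra term $\frac{\sum_i\xi_{3i}}{V_{(t+\frac{k}{M})-}}$ absent from Proposition \ref{prop:achieve_target_weight}. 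Finally, for part $(i)$ I would set $h_{t+\frac{k}{M}}=0$ and take $\xi=\frac12\bigl(1-\sum_{i=1}^N\xi_{4i}|w_{i,(t+\frac{k}{M})-}|-\frac{\sum_{i=1}^N\xi_{3i}}{V_{(t+\frac{k}{M})-}}\bigr)$, check that this $\xi$ meets the hypotheses of part $(ii)$, and invoke part $(ii)$.
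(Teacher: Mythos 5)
Your proposal is correct and follows essentially the same route as the paper's proof: part $(ii)$ first via the reverse-triangle-inequality contraction bound (the fixed costs $\xi_{3i}$ canceling), the Banach Fixed Point Theorem, the $\tilde N$-splitting argument yielding the equality $A\,V^*_{t+\frac{k}{M}}=B\,V_{(t+\frac{k}{M})-}+h_{t+\frac{k}{M}}-\sum_{i=1}^N\xi_{3i}$, and positivity from the strengthened bound on $\xi$; part $(i)$ then follows by taking $h_{t+\frac{k}{M}}=0$ and $\xi=\frac12\bigl(1-\sum_{i=1}^N\xi_{4i}|w_{i,(t+\frac{k}{M})-}|-\frac{\sum_{i=1}^N\xi_{3i}}{V_{(t+\frac{k}{M})-}}\bigr)$, exactly as in the paper. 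No gaps.
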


\begin{proof} 
	
	We first prove part ($ii$).
	
	For any $x, y \in \mathbb{R}$,
	\begin{align}\label{eq:c1_k_P}
		\left|  c^{\text{P}}_{t + \frac{k}{M}}(x) - c^{\text{P}}_{t + \frac{k}{M}}(y) \right| &= \left| \sum_{i=1}^{N} \xi_{4i} \left[ \left|w_{i, t + \frac{k}{M}} x - V_{i, (t + \frac{k}{M})-}\right|-  \left|w_{i, t + \frac{k}{M}} y - V_{i, (t + \frac{k}{M})-}\right| \right]\right| \notag \\
		&\le  \sum_{i=1}^{N} \xi_{4i} \left| \left|w_{i, t + \frac{k}{M}} x - V_{i, (t + \frac{k}{M})-}\right|-  \left|w_{i, t + \frac{k}{M}} y - V_{i, (t + \frac{k}{M})-}\right| \right| \notag \\
		&\le |x - y| \sum_{i=1}^{N} \xi_{4i} \left|w_{i, t + \frac{k}{M}}\right|.
	\end{align}
	
	Hence, $c^{\text{P}}_{t + \frac{k}{M}}(x)$ is a contraction as $\sum_{i=1}^{N} \xi_{4i} |w_{i, t + \frac{k}{M}}| < 1$. Since $V_{(t+\frac{k}{M})-} + h_{t+\frac{k}{M}}$ is a constant, $\Psi_2(x) = V_{(t+\frac{k}{M})-} + h_{t+\frac{k}{M}} - c^\text{P}_{t+\frac{k}{M}}(x)$ is a contraction on $\mathbb{R}$. By the Banach Fixed Point Theorem, $\Psi_2$ has a unique fixed point $V^*_{t+\frac{k}{M}}$, i.e., 
	\begin{equation}\label{equ:rebal_eq_proporp}
		V^*_{t+\frac{k}{M}} = \Psi_2(V^*_{t+\frac{k}{M}})= V_{(t+\frac{k}{M})-} + h_{t+\frac{k}{M}} - c^{\text{P}}_{t+\frac{k}{M}}(V^*_{t+\frac{k}{M}}).	
	\end{equation}
	Therefore, $V^*_{t+\frac{k}{M}}$ is the unique solution to the rebalancing equation $\Psi_2(V_{t+\frac{k}{M}}) = V_{t+\frac{k}{M}}$. Since $c^{\text{P}}_{t+\frac{k}{M}}(V^*_{t+\frac{k}{M}})\geq 0$, it follows that 
	$V^*_{t+\frac{k}{M}} = V_{(t+\frac{k}{M})-} + h_{t+\frac{k}{M}} - c^{\text{P}}_{t+\frac{k}{M}}(V^*_{t+\frac{k}{M}})\leq V_{(t+\frac{k}{M})-} + h_{t+\frac{k}{M}}$. We will then show $V^*_{t+\frac{k}{M}} > 0$. Without loss of generality, suppose there exists an integer $\tilde{N} \le N$ such that $i \le \tilde{N}$ implies
	\begin{align}
		V_{i,(t+\frac{k}{M})-} \ge w_{i,t+\frac{k}{M}} V^*_{t+\frac{k}{M}}, \notag
	\end{align}
	and $\tilde{N} < i \le N$ implies
	\begin{align}
		V_{i,(t+\frac{k}{M})-} < w_{i,t+\frac{k}{M}} V^*_{t+\frac{k}{M}}. \notag
	\end{align}
	Then
	\begin{align}
		&c^{\text{P}}_{t+\frac{k}{M}}(V^*_{t+\frac{k}{M}}) = \sum_{i=1}^{\tilde{N}} \xi_{4i} \left(V_{i,(t+\frac{k}{M})-} - w_{i,t+\frac{k}{M}} V^*_{t+\frac{k}{M}} \right) + \sum_{i=\tilde{N}+1}^{N} \xi_{4i} \left(w_{i,t+\frac{k}{M}} V^*_{t+\frac{k}{M}}  - V_{i,(t+\frac{k}{M})-}\right) + \sum_{i=1}^{N} \xi_{3i}, \notag
	\end{align}
	which implies by Equation \eqref{equ:rebal_eq_proporp} that
	\begin{align}\label{eq:proposition2_eq3_prop}
		&\left( 1 - \sum_{i=1}^{\tilde{N}} \xi_{4i} w_{i,t+\frac{k}{M}} + \sum_{i=\tilde{N}+1}^{N} \xi_{4i} w_{i,t+\frac{k}{M}} \right) V^*_{t+\frac{k}{M}} \notag \\
		=& \left( 1 - \sum_{i=1}^{\tilde{N}} \xi_{4i} w_{i,(t+\frac{k}{M})-} + \sum_{i=\tilde{N}+1}^{N} \xi_{4i} w_{i,(t+\frac{k}{M})-} \right) V_{(t+\frac{k}{M})-} - \sum_{i=1}^{N}\xi_{3i} + h_{t+\frac{k}{M}}.
	\end{align}

	Notice that by the conditions of the proposition,
	\begin{align}\label{eq:proposition3_eq4}
		\sum_{i=1}^{\tilde{N}} \xi_{4i} w_{i,t+\frac{k}{M}} - \sum_{i=\tilde{N}+1}^{N} \xi_{4i} w_{i,t+\frac{k}{M}} \le \sum_{i=1}^{N} \xi_{4i} \left| w_{i,t+\frac{k}{M}} \right|<1,
	\end{align}
	and
	\begin{align}\label{eq:proposition3_eq5}
		&\left( 1 - \sum_{i=1}^{\tilde{N}} \xi_{4i} w_{i,(t+\frac{k}{M})-} + \sum_{i=\tilde{N}+1}^{N} \xi_{4i} w_{i,(t+\frac{k}{M})-} \right) V_{(t+\frac{k}{M})-} - \sum_{i=1}^{N}\xi_{3i} \notag \\
		\ge& \left( 1 - \sum_{i=1}^{N} \xi_{4i} \left| w_{i,(t+\frac{k}{M})-} \right| \right) V_{(t+\frac{k}{M})-} - \sum_{i=1}^{N}\xi_{3i} \notag \\
		>& 0.
	\end{align}

	By the conditions of the proposition, 
	\begin{align}
		&h_{t+\frac{k}{M}} \ge \ -\xi V_{(t+\frac{k}{M})-} > -\left(1 - \sum_{i=1}^{N}\xi_{4i} \left| w_{i,(t+\frac{k}{M})-} \right| \right) V_{(t+\frac{k}{M})-} + \sum_{i=1}^{N}\xi_{3i}, \notag
	\end{align}
	which, together with \eqref{eq:proposition2_eq3_prop} to \eqref{eq:proposition3_eq5}, implies that $V^*_{t+\frac{k}{M}} > 0$.
	
	We then prove part $(i)$. Suppose the conditions in part $(i)$ hold. Let $h_{t+\frac{k}{M}} = 0$ and define $\xi = \frac{1}{2}(1 - \sum_{i=1}^{N} \xi_{4i} | w_{i,(t+\frac{k}{M})-} | - \frac{\sum_{i=1}^{N}\xi_{3i}}{V_{(t+\frac{k}{M})-}})$. Then the conditions in part $(ii)$ hold. Applying part $(ii)$ leads to the conclusion of part $(i)$.
\end{proof}

\section{The RL Training Algorithm}\label{appendix:ppo_training}

We present the details of the RL training algorithm in Algorithm \ref{code:ppo_ts}. The meaning and value of hyperparameters in the algorithm are listed in Table \ref{table:parameters}. In particular, the algorithm parallelizes 
the generation of episodes in each epoch using $\tilde M$ agents in order to speed up. 
\\
\begin{breakablealgorithm}
	\caption{The RL training algorithm} 
	\begin{algorithmic}[1] \label{code:ppo_ts}
		\STATE Initialize policy network $\pi_{\theta}$ (i.e., $\left\{\mu_{\theta_1}, \sigma_{\theta_2}\right\}$), value network $V_{\phi}$, rollout buffer $\mathcal{R} \leftarrow \emptyset$, and $h \leftarrow 0$
		\WHILE {$h < H$}
		\FOR {$m = 0, \ldots, \tilde{M}-1$}
		\STATE $k \leftarrow 0$
		\WHILE {$k < K$}
		\STATE $t \leftarrow 0$, initialize state $s_t^{k, m} \sim \rho_0(\cdot)$, $T^{k,m} \leftarrow n$ 
		\WHILE {$t<n$} 
		\STATE $a_t^{k, m} \sim \pi_{\theta}(\cdot | s_t^{k, m}), \ \sigma^{k, m}_{t} \leftarrow \sigma_{\theta_2}(s_t^{k, m}), \ s_{t+1}^{k, m} \sim P(\cdot | a_t^{k, m}, s_t^{k, m}), \ r_{t}^{k, m} \leftarrow R(s_t^{k, m}, a_t^{k, m}, s_{t+1}^{k, m})$, $\ V_t^{k, m} \leftarrow V_{\phi}(s_{t}^{k, m}), \ V_{t+1}^{k, m} \leftarrow V_{\phi}(s_{t+1}^{k, m}), \ \pi_t^{k, m} \leftarrow \pi_{\theta}(a_t^{k, m} | s_t^{k, m})$
		\STATE Append $\left\{s_t^{k, m}, a_t^{k, m}, r_{t}^{k, m}, \left( V_t^{k, m}, V_{t+1}^{k, m}\right), \pi_t^{k, m}, \sigma^{k, m}_{t} \right\}$ to $\mathcal{R}$
		\IF {$s_{t+1}^{k, m}$ is a terminal state}
		\STATE $T^{k,m} \leftarrow t + 1$
		\STATE Break
		\ENDIF
		\STATE $t \leftarrow t + 1$
		\ENDWHILE
		\STATE $k \leftarrow k + 1$
		\ENDWHILE
		\ENDFOR
		\STATE Compute advantages $\hat{A}_t^{k, m}$ and cumulative discounted rewards $\hat{\eta}_t^{k, m}$ using $\left\{r_{t}^{k, m}, \left( V_t^{k, m}, V_{t+1}^{k, m}\right) \right\}$ collected in $\mathcal{R}$, for $t = 0, \ldots, T^{k, m}-1, \ m = 0, \ldots, \tilde{M}-1, k = 0, \ldots, T-1$ by the Generalized Advantage Estimation algorithm \citep{schulman2015high} described in Section \ref{subsubsec:training_strategy}. $T^{k, m}$ is the terminal time step for $m$-th agent in $k$-th episode
		\STATE Add pair $\left\{\hat{A}_t^{k, m}, \hat{\eta}_t^{k, m} \right \}$ to 
		the element $\mathcal{R}[\sum_{i=0}^{m-1} \sum_{j=0}^{K-1} T^{j,i} + \sum_{j=0}^{k-1} T^{j,m} + t]$, 
		such that the element changes from $\left\{s_t^{k, m}, a_t^{k, m}, r_{t}^{k, m}, \left( V_t^{k, m}, V_{t+1}^{k, m}\right), \pi_t^{k, m}, \sigma^{k, m}_{t} \right\}$ to \\
		$\left\{s_t^{k, m}, a_t^{k, m}, r_{t}^{k, m}, \left( V_t^{k, m}, V_{t+1}^{k, m}\right), \pi_t^{k, m}, \sigma^{k, m}_{t}, \hat{A}_t^{k, m}, \hat{\eta}_t^{k, m} \right\}$. $\mathcal{R}[j]$ denotes $j$-th element in $\mathcal{R}$, for  $j = 0, \ldots, |\mathcal{R}| -1$, where $|\mathcal{R}|$ is the size of $\mathcal{R}$
		\STATE Randomly shuffle $\mathcal{R}$
		\STATE Initialize $i \leftarrow 0$
		\WHILE{$i + \tilde{n} \le |\mathcal{R}|$}
		\STATE Collect experience $\left\{ s_{\tau}, a_{\tau}, r_{\tau}, \left(V_{\tau}, V_{\tau+1}\right), \pi_{\tau}, \sigma_{\tau}, \hat{A}_{\tau}, \hat{\eta}_{\tau} \right\}_{\tau = i}^{i+\tilde{n}-1} \leftarrow \left\{\mathcal{R}[l]: i \le l < i+\tilde{n} \right\}$
		\STATE Compute policy loss:
		\STATE $\hat{\mathcal{L}} \leftarrow - \frac{1}{\tilde{n}} \sum_{\tau = i}^{i+\tilde{n}-1} \min \left( \hat{A}_{\tau} \frac{\pi_{\theta}(a_\tau | s_\tau)}{\pi_{\tau}}, \ \hat{A}_{\tau} \text{CLIP}\left( \frac{\pi_{\theta}(a_\tau | s_\tau)}{\pi_{\tau}}, 1 - \epsilon, 1 + \epsilon \right) \right)$
		\STATE Compute value loss:
		\STATE $\hat{\mathcal{V}} \leftarrow \frac{1}{\tilde{n}} \sum_{\tau = i}^{i+\tilde{n}-1} \left(V_{\phi}(s_\tau) - \hat{\eta}_{\tau} \right)^2$
		\STATE Compute entropy loss based on the current policy distribution:
		\STATE $\hat{\mathcal{U}} \leftarrow - \frac{1}{\tilde{n}} \sum_{\tau = i}^{i+\tilde{n}-1} \sum_{j = 1}^{N_a} \left\{ \frac{1}{2} + \frac{1}{2}\ln2 \pi + \ln\sigma_{\tau, j} \right\}$ where $\sigma_{\tau} = \left(\sigma_{\tau, 1}, \ldots, \sigma_{\tau, N_a} \right)$ and $N_a$ is dimension of $a_t$
		\STATE Update $\theta, \phi$ by Adam optimizer \citep{kingma2014adam} after calculating gradients of the loss function $loss(\theta,\phi)=\hat{\mathcal{L}} + e_1 \hat{\mathcal{V}} + e_2 \hat{\mathcal{U}}$
		\STATE $i \leftarrow i + \tilde{n}$
		\ENDWHILE
		\STATE Reset $\mathcal{R}$ to $\emptyset$, and $h \leftarrow h + 1$
		\ENDWHILE
	\end{algorithmic} 
\end{breakablealgorithm}\par

\section{Learning Curves of the RL Method}\label{appendix:learning_curves}

Figure \ref{fig:rb_ewi_loss_reward} shows the learning curves with respect to training losses and cumulative rewards on the training window 02/02/2004-01/02/2018 for the return-based tracking of S\&P 500 EWI.

\begin{figure}[htbp]
	{\includegraphics[width=0.5\linewidth]{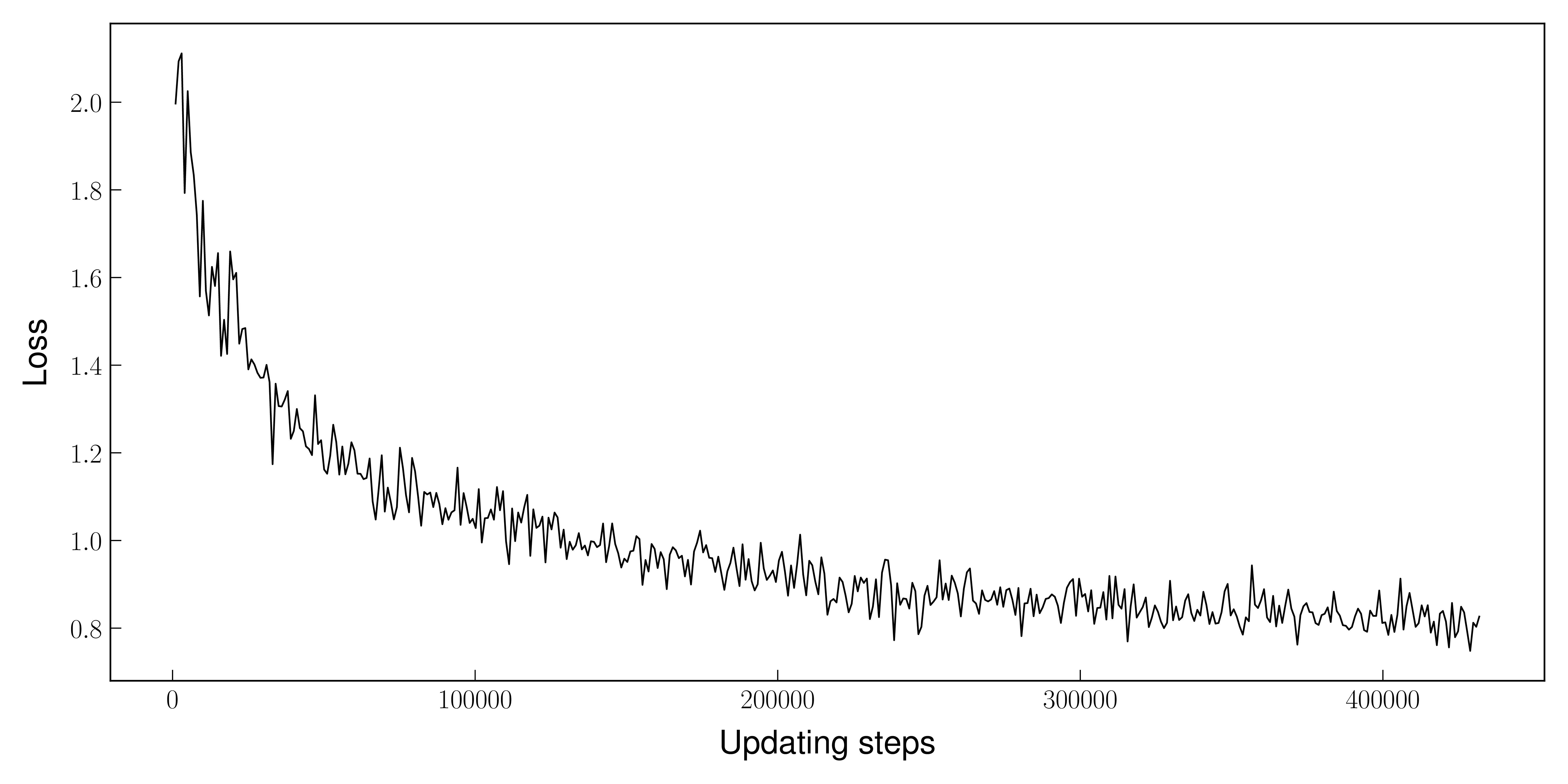}\includegraphics[width=0.5\linewidth]{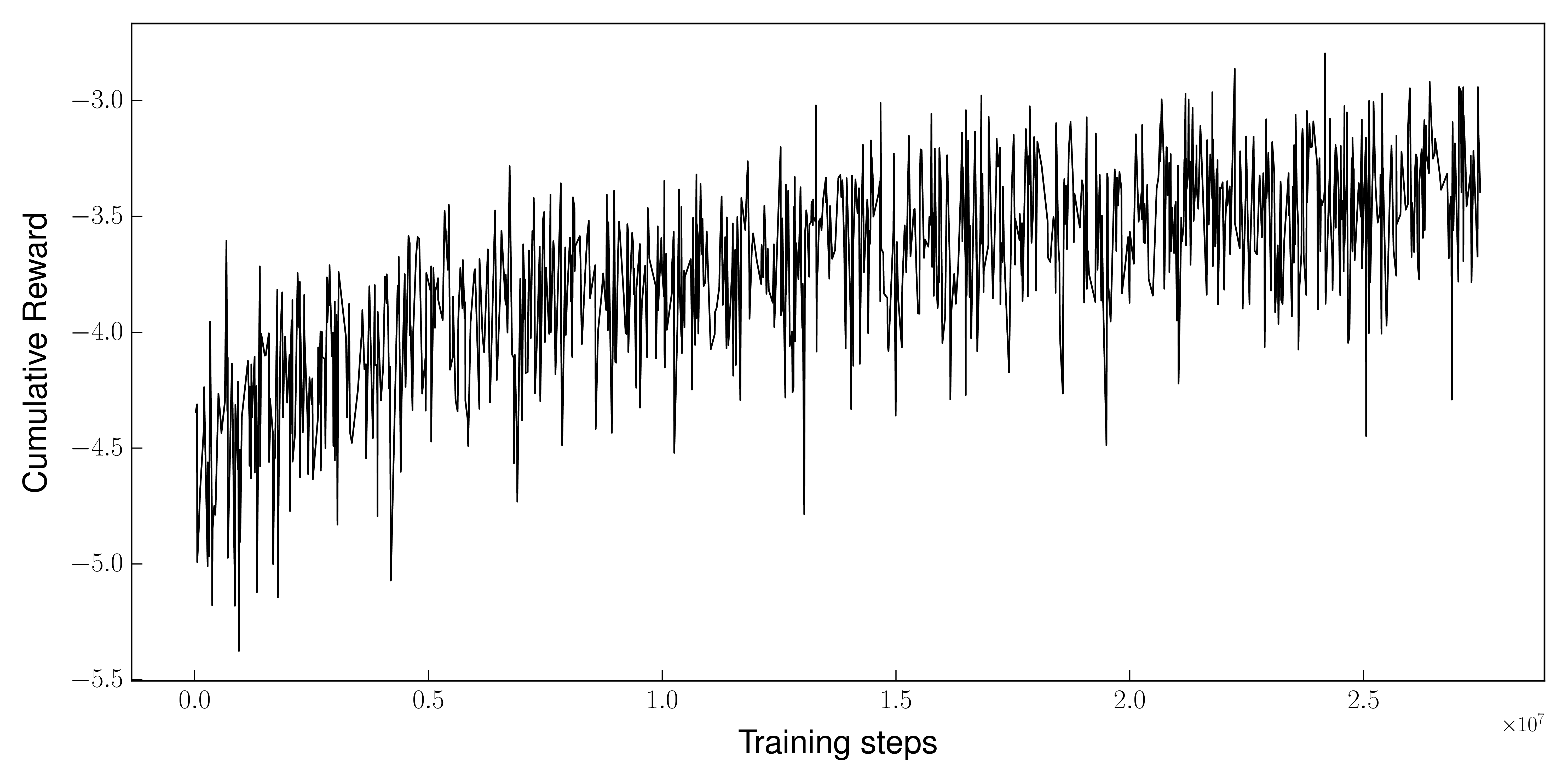}}
	\caption{\textbf{Learning curves of the proposed RL method for the return-based tracking of S\&P 500 EWI on the training window 02/02/2004-01/02/2018}.
	The $x$-axis and the $y$-axis respectively represent updating step and the loss defined in Equation \eqref{eq:total_loss} in the left subfigure, and training step and cumulative reward in the right subfigure.}
	\label{fig:rb_ewi_loss_reward}
\end{figure}

Figure \ref{fig:rb_djia_loss_reward} shows the learning curves with respect to training losses and cumulative rewards on the training window 01/02/1998-01/02/2018 for the return-based tracking of DJIA index.

\begin{figure}[htbp]
	{\includegraphics[width=0.5\linewidth]{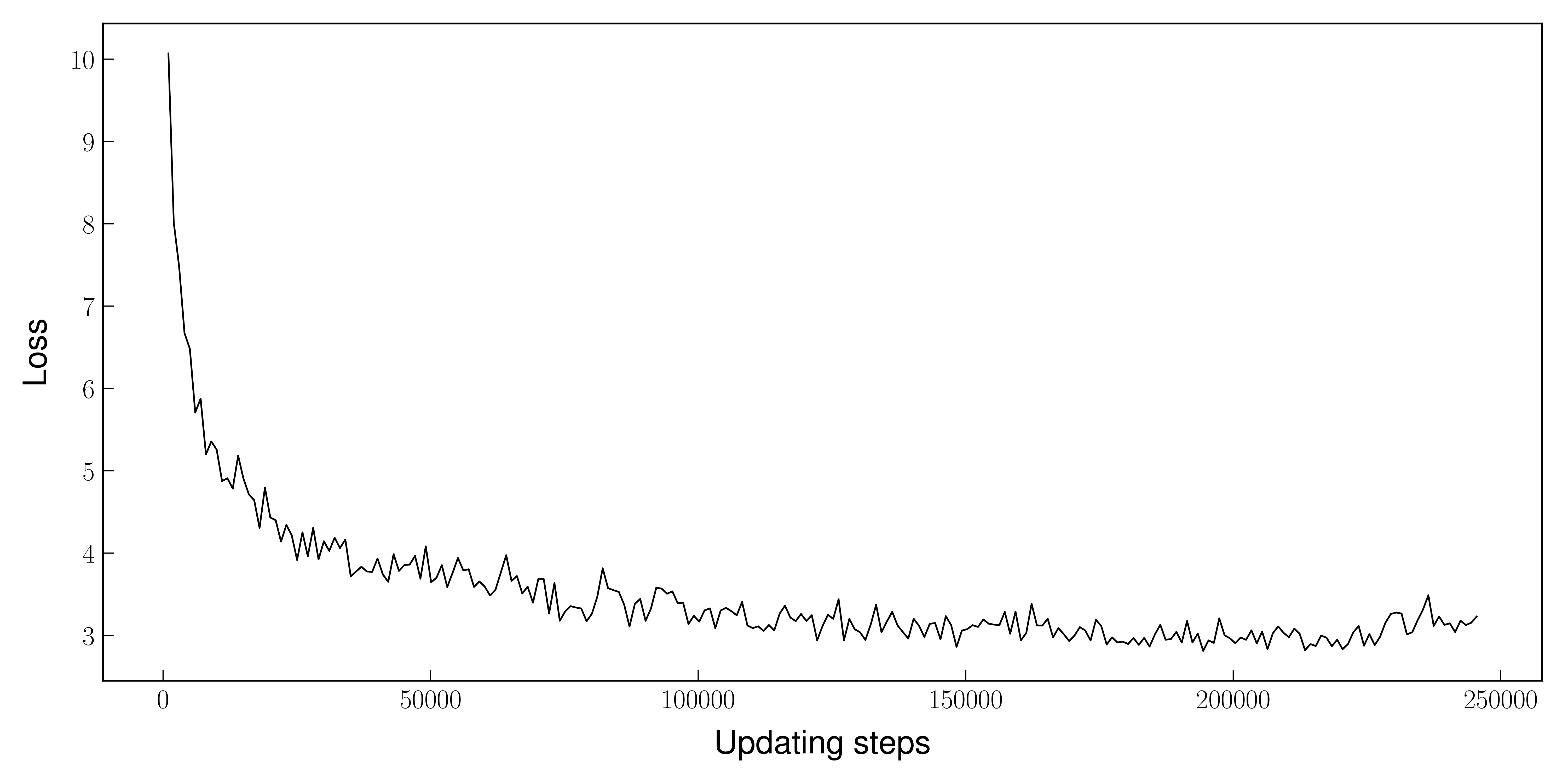}\includegraphics[width=0.5\linewidth]{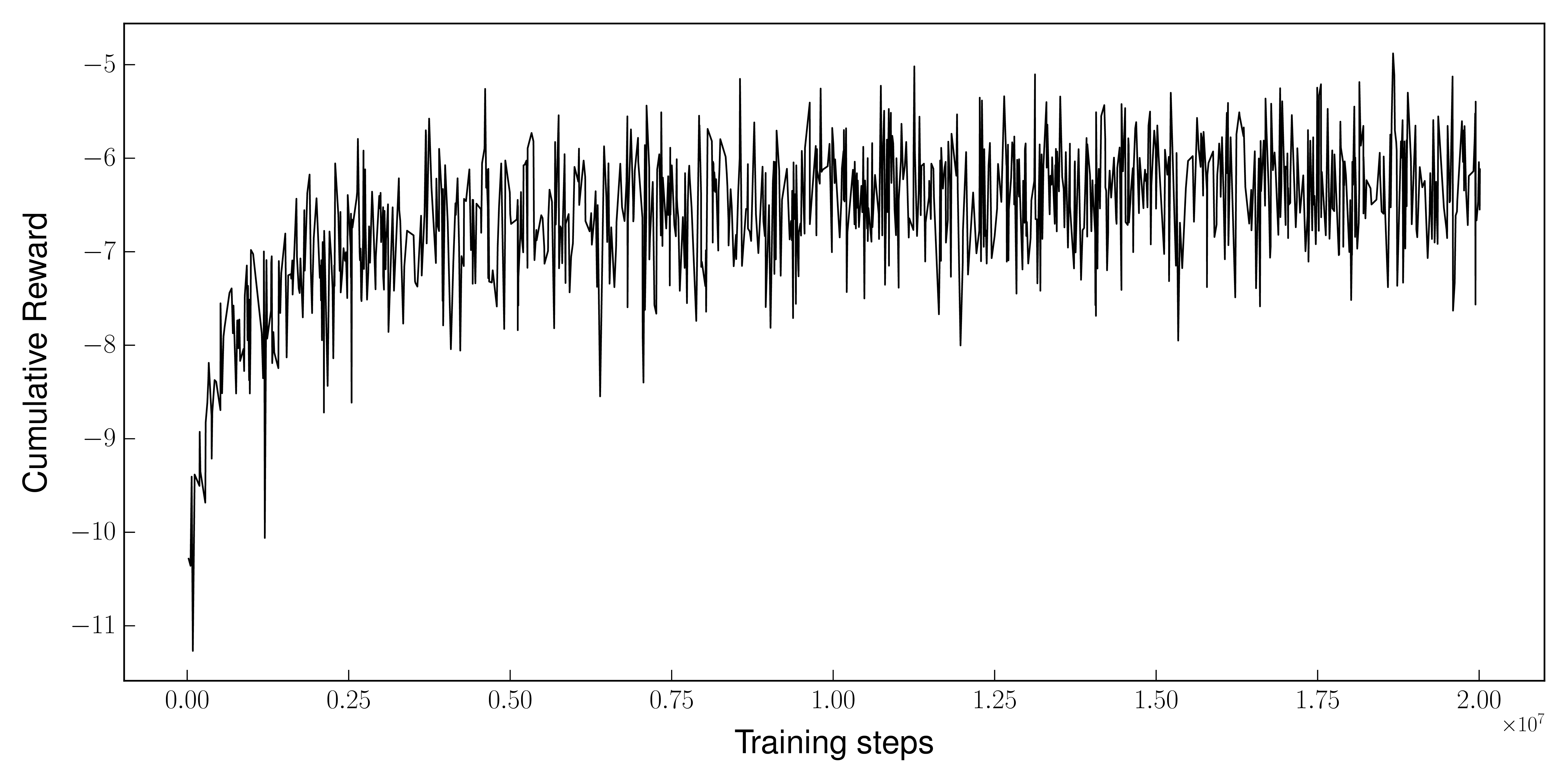}}
	\caption{\textbf{Learning curves of the proposed RL method for the return-based tracking of DJIA index on the training window 01/02/1998-01/02/2018}. The $x$-axis and the $y$-axis respectively represent the updating step and the loss defined in Equation \eqref{eq:total_loss} in the left subfigure, and the training step and the cumulative reward in the right subfigure.}
	\label{fig:rb_djia_loss_reward}
\end{figure}

Figure \ref{fig:vb_ewi_loss_reward} shows the learning curves with respect to training losses and cumulative rewards on the training window 02/02/2004-01/02/2018 for the value-based tracking of S\&P 500 EWI.

\begin{figure}[htbp]
	{\includegraphics[width=0.5\linewidth]{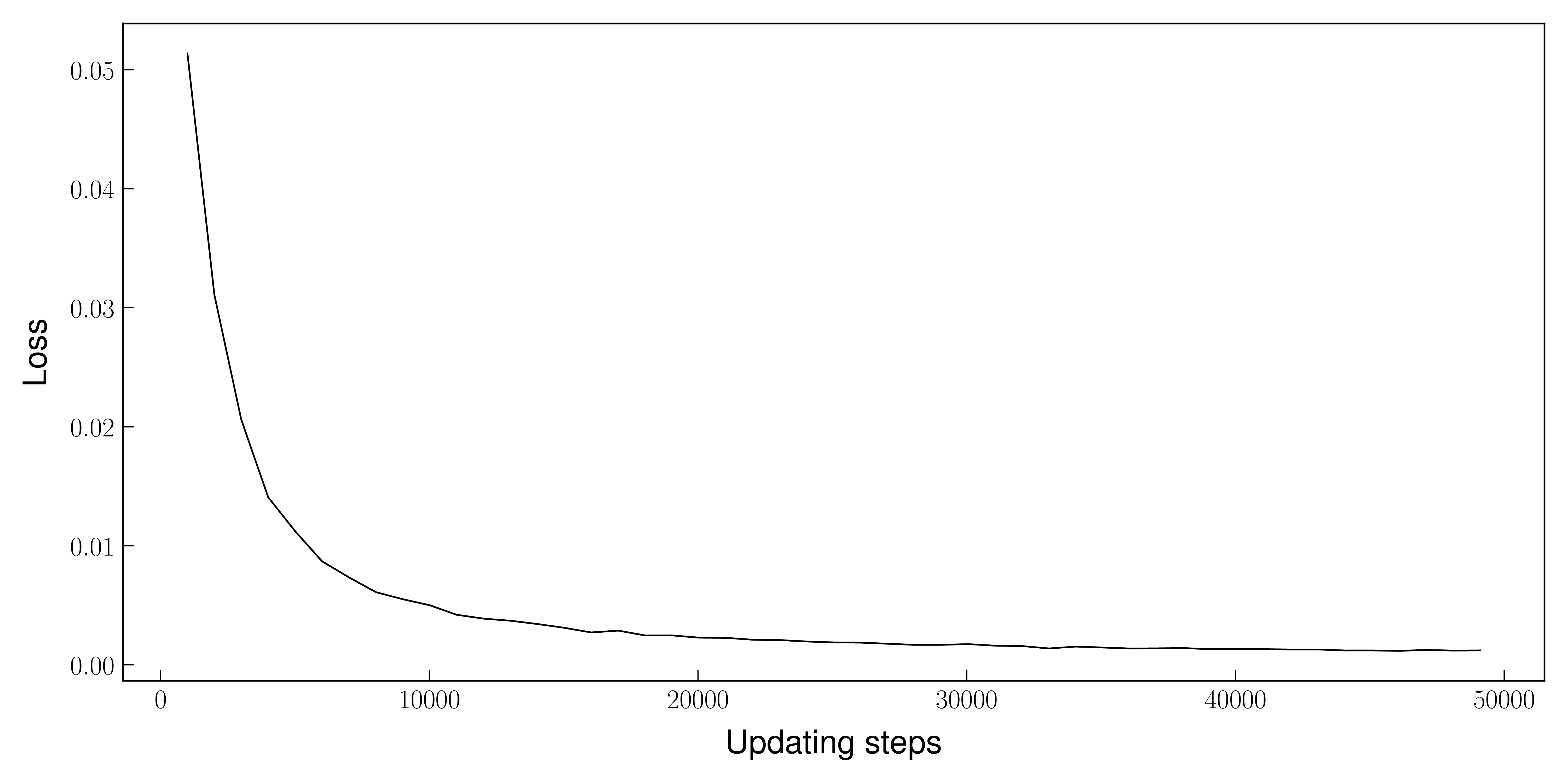}\includegraphics[width=0.5\linewidth]{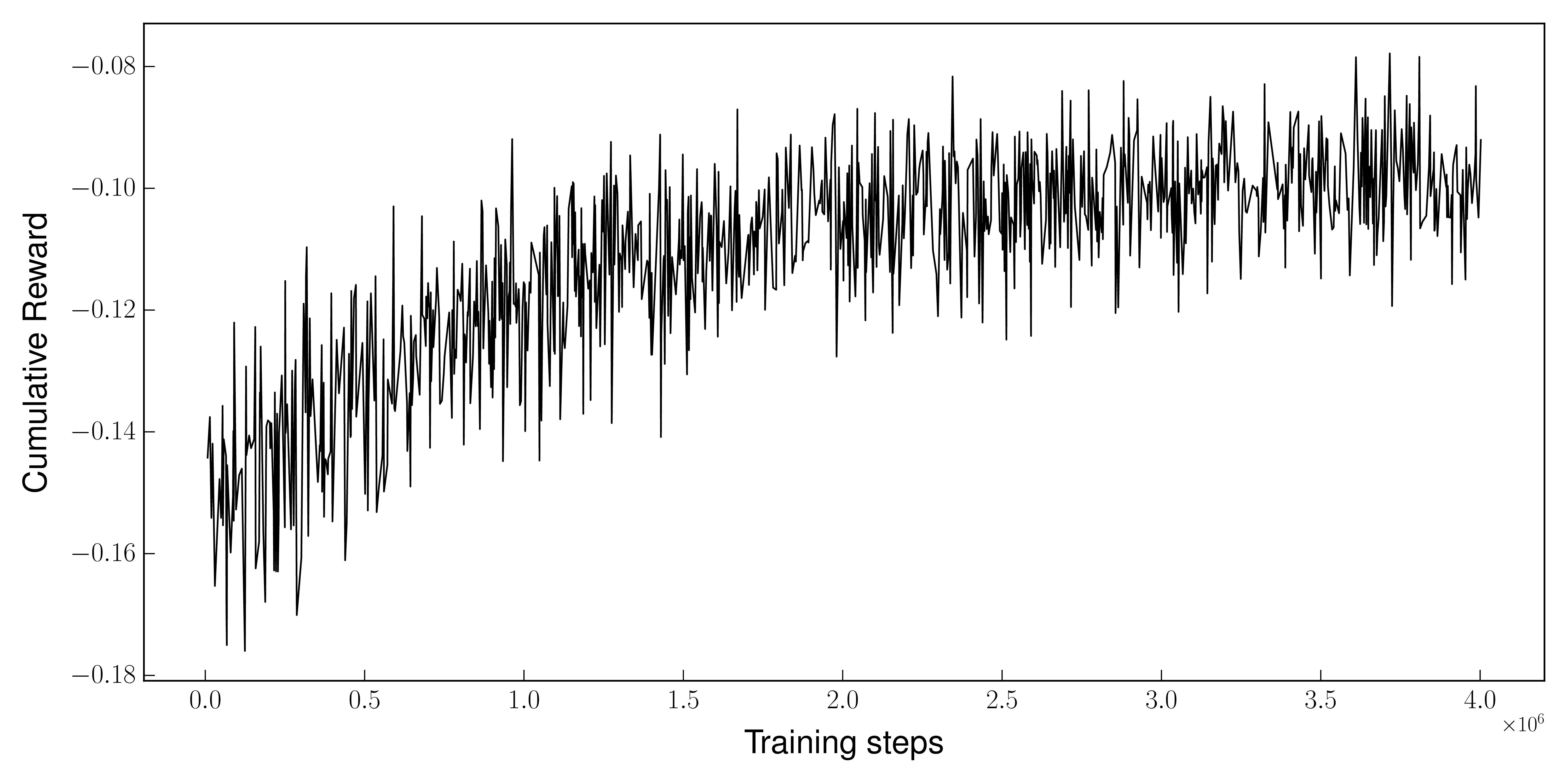}}
	\caption{\textbf{Learning curves of the RL method for the value-based tracking of S\&P 500 EWI}. The figure shows the learning curves of the proposed RL method on the training window 02/02/2004-01/02/2018 for the value-based tracking of S\&P 500 EWI. The $x$-axis and the $y$-axis respectively represent the updating step and the loss defined in Equation \eqref{eq:total_loss} in the left subfigure, and the training step and the cumulative reward in the right subfigure.}
	\label{fig:vb_ewi_loss_reward}
\end{figure}

Figure \ref{fig:vb_djia_loss_reward} shows the learning curves with respect to training losses and cumulative rewards on the training window 01/02/1998-01/02/2018 for the value-based tracking of DJIA index.

\begin{figure}[htbp]
	{\includegraphics[width=0.5\linewidth]{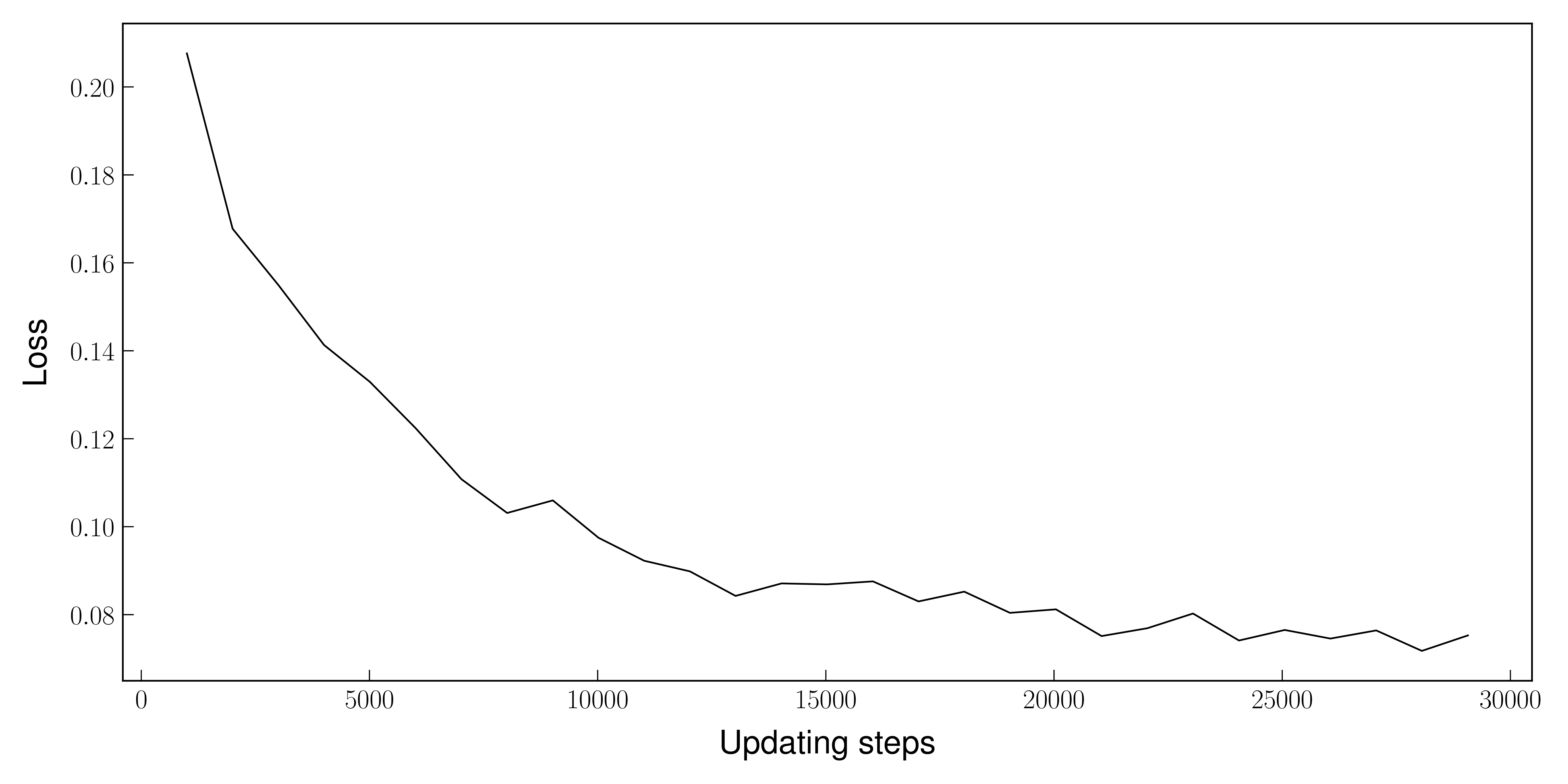}\includegraphics[width=0.5\linewidth]{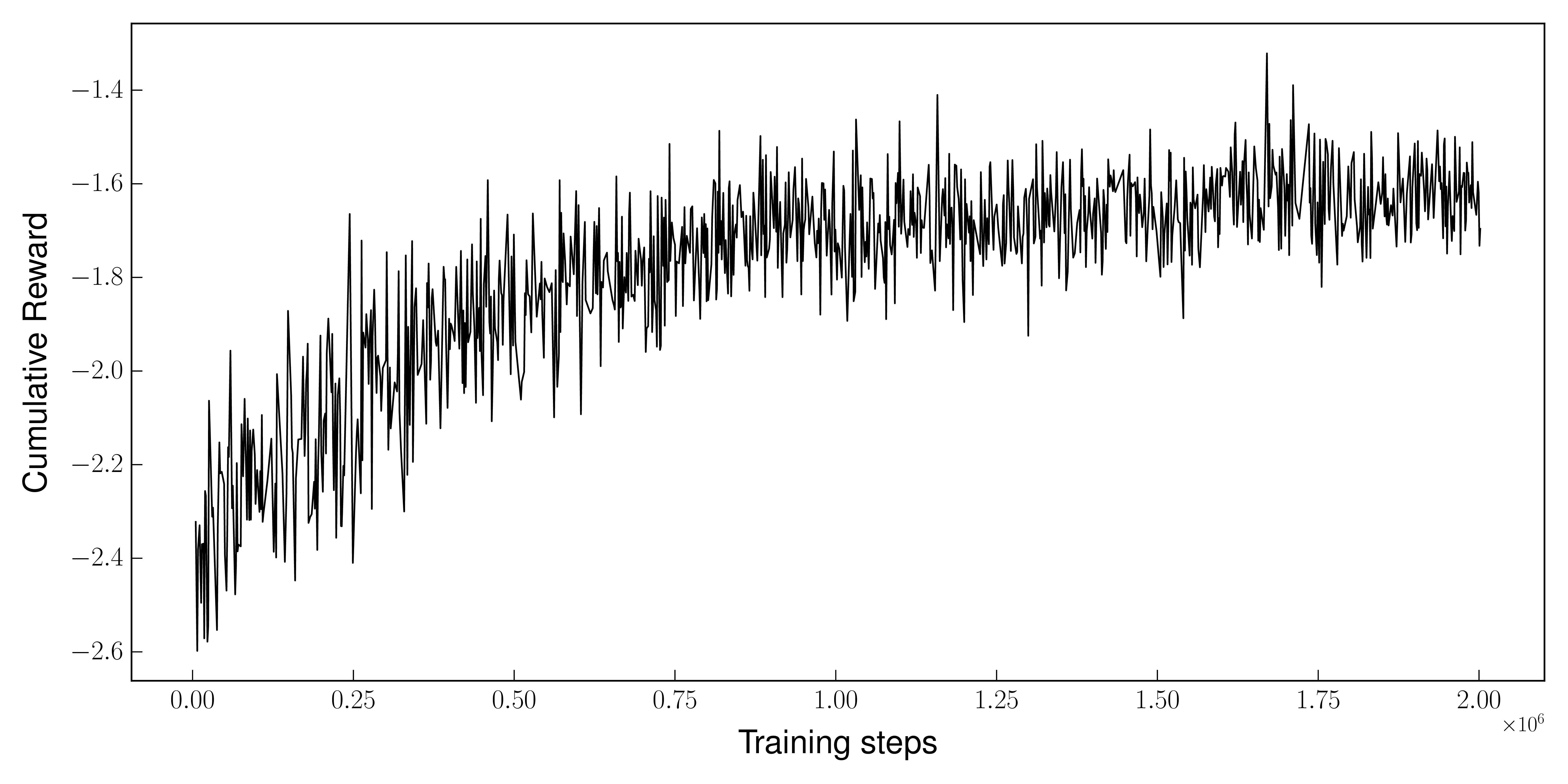}}
	\caption{\textbf{Learning curves of the RL method for the value-based tracking of DJIA index on the training window 01/02/1998-01/02/2018}. The $x$-axis and the $y$-axis respectively represent the updating step and the loss defined in Equation \eqref{eq:total_loss} in the left subfigure, and the training step and the cumulative reward in the right subfigure.}
	\label{fig:vb_djia_loss_reward}
\end{figure}

\section{Trading Volumes and Transaction Costs}\label{appendix:tc_ewi_djia}

Table \ref{tab:rb_ewi_tc} present the out-of-sample trading volumes and transaction costs of the proposed RL method and the MM method for the return-based tracking of S\&P 500 EWI.

\begin{table}[htbp]
	\caption
	{\textbf{Out-of-sample trading volume (Vol) and transaction costs (TC) of the proposed RL method and the MM method for the return-based tracking of S\&P 500 EWI}. The trading volume is in terms of number of shares. The initial wealth $V_{0-}$ of each testing year is USD 20 billion. In the last two rows, mean and stderr respectively refer to the mean and standard error of the values across the testing years from 2006 to 2021 in each column.
	}
	{\begin{tabular}{crrrrrr}
			\toprule
			\multirow{2}{*}{Testing year} & \multicolumn{2}{c}{Vol} & \multicolumn{2}{c}{TC} & \multicolumn{2}{c}{TC/$V_{0-}$} \\
			\cmidrule(lr){2-3}\cmidrule(lr){4-5}\cmidrule(lr){6-7}
			& \multicolumn{1}{c}{RL} & \multicolumn{1}{c}{MM} & \multicolumn{1}{c}{RL} & \multicolumn{1}{c}{MM} & \multicolumn{1}{c}{RL} & \multicolumn{1}{c}{MM} \\
			\midrule
			2006  & 2.686E+09 & 2.117E+09 & 1.343E+07 & 1.058E+07 & 6.716E-04 & 5.292E-04 \\
			2007  & 1.963E+09 & 2.408E+09 & 9.818E+06 & 1.204E+07 & 4.909E-04 & 6.021E-04 \\
			2008  & 5.437E+09 & 5.470E+09 & 2.656E+07 & 2.680E+07 & 1.328E-03 & 1.340E-03 \\
			2009  & 9.255E+09 & 8.806E+09 & 4.554E+07 & 4.308E+07 & 2.277E-03 & 2.154E-03 \\
			2010  & 2.938E+09 & 2.621E+09 & 1.469E+07 & 1.311E+07 & 7.345E-04 & 6.554E-04 \\
			2011  & 2.633E+09 & 2.574E+09 & 1.317E+07 & 1.287E+07 & 6.583E-04 & 6.435E-04 \\
			2012  & 2.635E+09 & 2.657E+09 & 1.318E+07 & 1.329E+07 & 6.588E-04 & 6.644E-04 \\
			2013  & 2.060E+09 & 2.012E+09 & 1.030E+07 & 1.006E+07 & 5.149E-04 & 5.031E-04 \\
			2014  & 1.486E+09 & 1.454E+09 & 7.430E+06 & 7.271E+06 & 3.715E-04 & 3.636E-04 \\
			2015  & 1.643E+09 & 1.623E+09 & 8.213E+06 & 8.117E+06 & 4.107E-04 & 4.058E-04 \\
			2016  & 1.875E+09 & 1.685E+09 & 9.376E+06 & 8.427E+06 & 4.688E-04 & 4.213E-04 \\
			2017  & 1.551E+09 & 1.385E+09 & 7.758E+06 & 6.926E+06 & 3.879E-04 & 3.463E-04 \\
			2018  & 1.485E+09 & 1.601E+09 & 7.279E+06 & 7.811E+06 & 3.640E-04 & 3.906E-04 \\
			2019  & 1.888E+09 & 1.607E+09 & 8.942E+06 & 8.039E+06 & 4.471E-04 & 4.019E-04 \\
			2020  & 2.469E+09 & 2.416E+09 & 1.228E+07 & 1.200E+07 & 6.138E-04 & 6.001E-04 \\
			2021  & 1.454E+09 & 1.400E+09 & 7.273E+06 & 6.999E+06 & 3.636E-04 & 3.500E-04 \\
			\midrule
			mean  & 2.716E+09 & 2.615E+09 & 1.345E+07 & 1.296E+07 & 6.726E-04 & 6.482E-04 \\
			stderr & 4.992E+08 & 4.805E+08 & 2.447E+06 & 2.340E+06 & 1.224E-04 & 1.170E-04 \\
			\bottomrule
	\end{tabular}}
	\label{tab:rb_ewi_tc}
\end{table}

Table \ref{tab:rb_djia_tc} present the out-of-sample trading volumes and transaction costs of the proposed RL method and the MM method for the return-based tracking of DJIA index.

\begin{table}[htbp]
	\caption
	{\textbf{Out-of-sample trading Volume (Vol) and transaction costs (TC) of the proposed RL method and the MM method for the return-based tracking of DJIA index}. The trading volume is in terms of number of shares. The initial wealth $V_{0-}$ of each testing year is USD 20 billion. In the last two rows, mean and stderr respectively refer to the mean and standard error of the values across the testing years from 2005 to 2021 in each column.	
	 }
	{\begin{tabular}{crrrrrr}
			\toprule
			\multirow{2}{*}{Testing year} & \multicolumn{2}{c}{Vol} & \multicolumn{2}{c}{TC} & \multicolumn{2}{c}{TC/$V_{0-}$} \\
			\cmidrule(lr){2-3}\cmidrule(lr){4-5}\cmidrule(lr){6-7}
			& \multicolumn{1}{c}{RL} & \multicolumn{1}{c}{MM} & \multicolumn{1}{c}{RL} & \multicolumn{1}{c}{MM} & \multicolumn{1}{c}{RL} & \multicolumn{1}{c}{MM} \\
			\midrule
			2005  & 1.501E+09 & 1.198E+09 & 7.505E+06 & 5.988E+06 & 3.752E-04 & 2.994E-04 \\
			2006  & 1.626E+09 & 1.313E+09 & 8.130E+06 & 6.566E+06 & 4.065E-04 & 3.283E-04 \\
			2007  & 1.307E+09 & 1.126E+09 & 6.533E+06 & 5.629E+06 & 3.267E-04 & 2.815E-04 \\
			2008  & 3.848E+09 & 3.831E+09 & 1.924E+07 & 1.916E+07 & 9.621E-04 & 9.579E-04 \\
			2009  & 3.416E+09 & 3.456E+09 & 1.708E+07 & 1.728E+07 & 8.539E-04 & 8.640E-04 \\
			2010  & 1.356E+09 & 1.145E+09 & 6.781E+06 & 5.725E+06 & 3.390E-04 & 2.862E-04 \\
			2011  & 1.373E+09 & 1.293E+09 & 6.865E+06 & 6.466E+06 & 3.433E-04 & 3.233E-04 \\
			2012  & 1.293E+09 & 1.161E+09 & 6.465E+06 & 5.803E+06 & 3.232E-04 & 2.902E-04 \\
			2013  & 1.093E+09 & 1.013E+09 & 5.466E+06 & 5.065E+06 & 2.733E-04 & 2.532E-04 \\
			2014  & 6.779E+08 & 6.477E+08 & 3.390E+06 & 3.238E+06 & 1.695E-04 & 1.619E-04 \\
			2015  & 7.107E+08 & 6.324E+08 & 3.554E+06 & 3.162E+06 & 1.777E-04 & 1.581E-04 \\
			2016  & 7.280E+08 & 6.998E+08 & 3.640E+06 & 3.499E+06 & 1.820E-04 & 1.750E-04 \\
			2017  & 6.436E+08 & 6.092E+08 & 3.218E+06 & 3.046E+06 & 1.609E-04 & 1.523E-04 \\
			2018  & 6.806E+08 & 6.488E+08 & 3.403E+06 & 3.244E+06 & 1.702E-04 & 1.622E-04 \\
			2019  & 5.189E+08 & 4.774E+08 & 2.594E+06 & 2.387E+06 & 1.297E-04 & 1.193E-04 \\
			2020  & 6.800E+08 & 6.708E+08 & 3.400E+06 & 3.354E+06 & 1.700E-04 & 1.677E-04 \\
			2021  & 4.504E+08 & 4.283E+08 & 2.252E+06 & 2.142E+06 & 1.126E-04 & 1.071E-04 \\
			\midrule
			mean  & 1.288E+09 & 1.197E+09 & 6.442E+06 & 5.985E+06 & 3.221E-04 & 2.993E-04 \\
			stderr & 2.329E+08 & 2.349E+08 & 1.165E+06 & 1.174E+06 & 5.823E-05 & 5.872E-05 \\
			\bottomrule
	\end{tabular}}
	\label{tab:rb_djia_tc}
\end{table}

Table \ref{tab:vb_ewi_tc} present the out-of-sample trading volume and transaction costs of the proposed RL method and the MM method for the value-based tracking of S\&P 500 EWI.

\begin{table}[htbp]
	\caption
	{\textbf{Out-of-sample trading Volume (Vol) and transaction costs (TC) of the proposed RL method and the MM method  for the value-based tracking of S\&P 500 EWI}. The trading volume is in terms of number of shares. The initial wealth $V_{0-}$ of each testing year is USD 20 billion. In the last two rows, mean and stderr respectively refer to the mean and standard error of the values across the testing years from 2006 to 2021 in each column.			
	 }
	{\begin{tabular}{crrrrrr}
			\toprule
			\multirow{2}{*}{Testing year} & \multicolumn{2}{c}{Vol} & \multicolumn{2}{c}{TC} & \multicolumn{2}{c}{TC/$V_{0-}$} \\
			\cmidrule(lr){2-3}\cmidrule(lr){4-5}\cmidrule(lr){6-7}
			& \multicolumn{1}{c}{RL} & \multicolumn{1}{c}{MM} & \multicolumn{1}{c}{RL} & \multicolumn{1}{c}{MM} & \multicolumn{1}{c}{RL} & \multicolumn{1}{c}{MM} \\
			\midrule
			2006  & 2.708E+09 & 2.117E+09 & 1.353E+07 & 1.058E+07 & 6.766E-04 & 5.292E-04 \\
			2007  & 2.315E+09 & 2.408E+09 & 1.158E+07 & 1.204E+07 & 5.788E-04 & 6.021E-04 \\
			2008  & 5.416E+09 & 5.471E+09 & 2.557E+07 & 2.680E+07 & 1.279E-03 & 1.340E-03 \\
			2009  & 8.306E+09 & 8.807E+09 & 4.043E+07 & 4.308E+07 & 2.022E-03 & 2.154E-03 \\
			2010  & 3.147E+09 & 2.622E+09 & 1.574E+07 & 1.311E+07 & 7.872E-04 & 6.554E-04 \\
			2011  & 2.832E+09 & 2.574E+09 & 1.416E+07 & 1.287E+07 & 7.081E-04 & 6.435E-04 \\
			2012  & 2.968E+09 & 2.658E+09 & 1.485E+07 & 1.329E+07 & 7.424E-04 & 6.644E-04 \\
			2013  & 2.372E+09 & 2.012E+09 & 1.186E+07 & 1.006E+07 & 5.929E-04 & 5.031E-04 \\
			2014  & 1.699E+09 & 1.454E+09 & 8.495E+06 & 7.271E+06 & 4.247E-04 & 3.636E-04 \\
			2015  & 1.769E+09 & 1.623E+09 & 8.845E+06 & 8.117E+06 & 4.423E-04 & 4.058E-04 \\
			2016  & 2.448E+09 & 1.685E+09 & 1.224E+07 & 8.427E+06 & 6.119E-04 & 4.213E-04 \\
			2017  & 2.090E+09 & 1.385E+09 & 1.045E+07 & 6.926E+06 & 5.226E-04 & 3.463E-04 \\
			2018  & 2.029E+09 & 1.601E+09 & 9.705E+06 & 7.811E+06 & 4.853E-04 & 3.906E-04 \\
			2019  & 2.824E+09 & 1.607E+09 & 1.271E+07 & 8.039E+06 & 6.354E-04 & 4.019E-04 \\
			2020  & 2.624E+09 & 2.416E+09 & 1.299E+07 & 1.200E+07 & 6.494E-04 & 6.001E-04 \\
			2021  & 1.525E+09 & 1.400E+09 & 7.607E+06 & 6.999E+06 & 3.804E-04 & 3.500E-04 \\
			\midrule
			mean  & 2.942E+09 & 2.615E+09 & 1.442E+07 & 1.296E+07 & 7.211E-04 & 6.482E-04 \\
			stderr & 4.212E+08 & 4.805E+08 & 2.019E+06 & 2.340E+06 & 1.010E-04 & 1.170E-04 \\
			\bottomrule
	\end{tabular}}
    \label{tab:vb_ewi_tc}	
\end{table}

Table \ref{tab:vb_djia_tc} present the out-of-sample trading volume and transaction costs of the proposed RL method and the MM method for the value-based tracking of DJIA index.

\begin{table}[htbp]
	\caption
	{\textbf{Out-of-sample trading Volume (Vol) and transaction costs (TC) of the proposed RL method and the MM method  for the value-based tracking of DJIA index}. The trading volume is in terms of number of shares. The initial wealth $V_{0-}$ of each testing year is USD 20 billion. In the last two rows, mean and stderr respectively refer to the mean and standard error of the values across the testing years from 2005 to 2021 in each column.		
	}
	{\begin{tabular}{crrrrrr}
			\toprule
			\multirow{2}{*}{Testing year} & \multicolumn{2}{c}{Vol} & \multicolumn{2}{c}{TC} & \multicolumn{2}{c}{TC/$V_{0-}$} \\
			\cmidrule(lr){2-3}\cmidrule(lr){4-5}\cmidrule(lr){6-7}
			& \multicolumn{1}{c}{RL} & \multicolumn{1}{c}{MM} & \multicolumn{1}{c}{RL} & \multicolumn{1}{c}{MM} & \multicolumn{1}{c}{RL} & \multicolumn{1}{c}{MM} \\
			\midrule
			2005  & 2.439E+09 & 1.198E+09 & 1.220E+07 & 5.988E+06 & 6.098E-04 & 2.994E-04 \\
			2006  & 2.652E+09 & 1.313E+09 & 1.326E+07 & 6.566E+06 & 6.628E-04 & 3.283E-04 \\
			2007  & 2.370E+09 & 1.126E+09 & 1.185E+07 & 5.629E+06 & 5.924E-04 & 2.815E-04 \\
			2008  & 5.825E+09 & 3.832E+09 & 2.933E+07 & 1.916E+07 & 1.467E-03 & 9.579E-04 \\
			2009  & 5.571E+09 & 3.456E+09 & 2.800E+07 & 1.728E+07 & 1.400E-03 & 8.640E-04 \\
			2010  & 3.195E+09 & 1.145E+09 & 1.600E+07 & 5.725E+06 & 8.000E-04 & 2.862E-04 \\
			2011  & 3.511E+09 & 1.293E+09 & 1.764E+07 & 6.466E+06 & 8.818E-04 & 3.233E-04 \\
			2012  & 3.012E+09 & 1.161E+09 & 1.511E+07 & 5.803E+06 & 7.557E-04 & 2.902E-04 \\
			2013  & 2.586E+09 & 1.013E+09 & 1.297E+07 & 5.065E+06 & 6.483E-04 & 2.532E-04 \\
			2014  & 1.647E+09 & 6.477E+08 & 8.236E+06 & 3.238E+06 & 4.118E-04 & 1.619E-04 \\
			2015  & 1.450E+09 & 6.324E+08 & 7.250E+06 & 3.162E+06 & 3.625E-04 & 1.581E-04 \\
			2016  & 1.671E+09 & 6.998E+08 & 8.357E+06 & 3.499E+06 & 4.178E-04 & 1.750E-04 \\
			2017  & 1.443E+09 & 6.092E+08 & 7.228E+06 & 3.046E+06 & 3.614E-04 & 1.523E-04 \\
			2018  & 1.974E+09 & 6.488E+08 & 9.898E+06 & 3.244E+06 & 4.949E-04 & 1.622E-04 \\
			2019  & 1.076E+09 & 4.774E+08 & 5.394E+06 & 2.387E+06 & 2.697E-04 & 1.193E-04 \\
			2020  & 1.295E+09 & 6.708E+08 & 6.495E+06 & 3.354E+06 & 3.247E-04 & 1.677E-04 \\
			2021  & 1.027E+09 & 4.283E+08 & 5.150E+06 & 2.142E+06 & 2.575E-04 & 1.071E-04 \\
			\midrule
			mean  & 2.514E+09 & 1.197E+09 & 1.261E+07 & 5.985E+06 & 6.305E-04 & 2.993E-04 \\
			stderr & 3.423E+08 & 2.349E+08 & 1.724E+06 & 1.174E+06 & 8.619E-05 & 5.872E-05 \\
			\bottomrule 
	\end{tabular}}	
    \label{tab:vb_djia_tc}	
\end{table}
	
%
%
%

\end{appendices}

\section*{Acknowledgement} Xianhua Peng is partially supported by the Natural Science Foundation of Shenzhen (Grant No. JCYJ20190813104607549).

\bibliographystyle{dcu}
\bibliography{index_tracking_ref} 


\end{document}